\newcommand{\braced}[1]{{ \left\{ #1 \right\} }}
\newcommand{\smbraced}[1]{{\{ #1 \} }}
\newcommand{\angled}[1]{{ \langle #1 \rangle }}
\newcommand{\barred}[1]{{ \left| #1 \right| }}
\newcommand{\suchthat}{{\;:\;}}
\newcommand{\barx}{{\bar x}}
\newcommand{\bary}{{\bar y}}
\newcommand{\barz}{{\bar z}}
\newcommand{\barOmega}{{\bar\Omega}}
\newcommand{\bfx}{{\bf x}}
\newcommand{\bfy}{{\bf y}}
\newcommand{\bfphi}{{\boldsymbol{\phi}}}
\newcommand{\bfpsi}{{\boldsymbol{\psi}}}
\newcommand{\bfxi}{{\boldsymbol{\xi}}}
\newcommand{\calC}{\mathcal{C}}
\newcommand{\ttA}{{\texttt{A}}}
\newcommand{\ttB}{{\texttt{B}}}
\newcommand{\ttX}{{\texttt{X}}}
\newcommand{\ttY}{{\texttt{Y}}}
\newcommand{\tildex}{{\tilde{x}}}
\newcommand{\tildey}{{\tilde{y}}}
\newcommand{\tildez}{{\tilde{z}}}
\newcommand{\bartildex}{{\bar{\tilde{x}}}}
\newcommand{\bartildey}{{\bar{\tilde{y}}}}
\newcommand{\bartildez}{{\bar{\tilde{z}}}}
\newcommand{\myparagraph}[1]{{\medskip\noindent\textbf{#1}.}}
\newcommand{\myclaim}[2]{{\smallskip\noindent\textit{Claim~$#1$}: \textit{#2}}}
\newcommand{\PP}{{\mathsf{P}}}
\newcommand{\NP}{{\mathsf{NP}}}
\newcommand{\SigmaTwoP}{{\Sigma_2^{{\PP}}}}
\newcommand{\swapAtomic}{\mathsf{SwapAtomic}\xspace}
\newcommand{\CNF}{\mathsf{CNF}\xspace}
\newcommand{\ExistsForallDNF}{\exists\forall\mathsf{DNF}\xspace}
\newcommand{\ExistsForallThreeDNF}{\exists\forall\mathsf{3DNF}\xspace}
\newcommand{\ExistsForallDNFOneX}{\exists\forall\mathsf{DNF}_{1x}\xspace}
\newcommand{\Deal}{\textsc{Deal}\xspace}
\newcommand{\Dealv}[1]{{\textsc{Deal}_{#1}}\xspace}
\newcommand{\DealDv}[2]{{\textsc{Deal}^{#1}_{#2}}\xspace}
\newcommand{\NoDeal}{\textsc{NoDeal}\xspace}
\newcommand{\NoDealv}[1]{{\textsc{NoDeal}_{#1}}\xspace}
\newcommand{\NoDealDv}[2]{{\textsc{NoDeal}^{#1}_{#2}}\xspace}
\newcommand{\Discount}{\textsc{Discount}\xspace}
\newcommand{\Discountv}[1]{{\textsc{Discount}_{#1}}\xspace}
\newcommand{\DiscountDv}[2]{{\textsc{Discount}^{#1}_{#2}}\xspace}
\newcommand{\FreeRide}{\textsc{FreeRide}\xspace}
\newcommand{\FreeRidev}[1]{{\textsc{FreeRide}_{#1}}\xspace}
\newcommand{\FreeRideDv}[2]{{\textsc{FreeRide}^{#1}_{#2}}\xspace}
\newcommand{\Underwater}{\textsc{Underwater}\xspace}
\newcommand{\Underwaterv}[1]{{\textsc{Underwater}_{#1}}\xspace}
\newcommand{\outcome}{{\omega}}
\newcommand{\outcomepair}[2]{\angled{#1\,|\,#2}}
\newcommand{\vecoutcome}{{\bar{\omega}}}
\newcommand{\outcomein}[1]{{\outcome_{#1}^{in}}}
\newcommand{\outcomeout}[1]{{\outcome_{#1}^{out}}}
\newcommand{\houtcome}{{\hat{\omega}}}
\newcommand{\houtcomein}[1]{{\houtcome_{#1}^{in}}}
\newcommand{\houtcomeout}[1]{{\houtcome_{#1}^{out}}}
\newcommand{\Ain}[1]{{A_{#1}^{in}}}
\newcommand{\Aout}[1]{{A_{#1}^{out}}}
\newcommand{\Ayes}[1]{\mathcal{A}_{#1}}
\newcommand{\swapSys}{{\mathcal{S}}}
\newcommand{\prefP}{{\mathcal{P}}}
\newcommand{\acceptsetA}{{\mathcal{A}}}
\newcommand{\DG}{{\mathcal{D}}}
\newcommand{\GG}{{\mathcal{G}}}
\newcommand{\HG}{{\mathcal{H}}}
\newcommand{\coalition}{{\mathcal{C}}}
\newcommand{\bbP}{{\mathbb{P}}}
\newcommand{\hProt}{{\mathbb{H}}}
\newtheorem{theorem}{Theorem}
\newtheorem{lemma}{Lemma}
\renewcommand{\paragraph}[1]{\textit{#1. \ }}
\title{Cross-chain Swaps with Preferences
\thanks{This project has been supported by NSF Career 1942711, DARPA YFA D22AP00146, and NSF grant CCF-2153723.}
}
\author{\IEEEauthorblockN{Eric Chan}
\IEEEauthorblockA{University of California at Riverside}
\and
\IEEEauthorblockN{Marek Chrobak}
\IEEEauthorblockA{University of California at Riverside}
\and
\IEEEauthorblockN{Mohsen Lesani}
\IEEEauthorblockA{University of California at Riverside}
}
\begin{document}


\maketitle

\begin{abstract}
Extreme valuation and volatility of cryptocurrencies require investors to diversify often which demands secure exchange protocols.
A cross-chain swap protocol allows distrusting parties to securely exchange their assets.
However, the current models and protocols assume predefined user preferences for acceptable outcomes.
This paper presents a generalized model of swaps that allows each party to specify its preferences on the subsets of its incoming and outgoing assets.
It shows that the existing swap protocols are not necessarily a strong Nash equilibrium in this model.
It characterizes the class of swap graphs that have protocols that are safe, live and a strong Nash equilibrium, and presents such a protocol for this class.
Further, it shows that deciding whether a swap is in this class is NP-hard through a reduction from 3SAT,
and further is 
$\SigmaTwoP$-complete
through a reduction from
$\ExistsForallDNF$.
\end{abstract}  


\section{Introduction}
\label{sec:introduction}


The multitude and volatility of cryptocurrencies force investors to diversify and frequenty trade their holdings.
However, these currencies are hosted by distinct distributed blockchains and  trading across blockchains is not atomic by default.
This has led to the development of \emph{cross-chain swap protocols}
\cite{Herlihy18,btcwiki,bitcoinhtlc,heilman2020arwen,decred,thyagarajan2022universal} 
that allow distrusting parties to securely exchange their assets.
Application of such swap protocols is not limited to trading digital currencies --- they can be 
used for trading any type of digital assets (NFTs, for example), or even for trading
physical objects by safely trasferring ownership documentation.

In a pioneering work, Herlihy \cite{Herlihy18}
formalizes a cross-chain swap as a directed graph where vertices represent parties, 
and arcs represent assets to be exchanged.
An execution of a \emph{swap graph} is represented as the subset of arcs that are triggered in that execution.
The \emph{outcome} for each party is captured in five \emph{predefined classes}:
$\Deal$, $\NoDeal$, $\Discount$, $\FreeRide$, and $\Underwater$.
The classes $\Deal$ and $\NoDeal$ represent outcomes for a party where respectively, all and none of
the arcs of that party are triggered.
The class $\Discount$ represents outcomes where some of the outgoing arcs are not triggered, and 
$\FreeRide$ represents outcomes where at least one incoming but no outgoing arc is triggered.
Outcomes in all these classes are considered acceptable by each party.
The class $\Underwater$ captures all unacceptable outcomes, namely
outcomes where at least one outgoing arc is triggered but not all incoming arcs are.
Given this model of outcomes, Herlihy presented a \emph{protocol based on hashed time-locks}
and proved it to be \emph{atomic}, meaning that it satisfies the conditions of
\emph{liveness, safety and strong Nash equilibrium}.


In practice, as noted in the original proposal \cite{Herlihy18}, some parties
may find it advantageous to exchange only \emph{some} of their outgoing 
assets for only \emph{some} of their incoming assets.  As an example, suppose that
Alina has a white shirt and white pants and she joins the swap hoping to trade for a black shirt and black pants.
Coincidentally, Bohdan has exactly these items and joins the swap looking for the reverse trade.
However, both of them would actually prefer to retain one white article of clothing and one black article of clothing, if possible.
Thus, it would be preferable for both parties to, say, only swap the shirts or only swap the pants,
although it is also acceptable to swap both. Such scenarios are
not captured by the model in \cite{Herlihy18}, because the outcomes with just one item swapped
are in the class $\Underwater$.

This leads to the natural question, left open in \cite{Herlihy18}:
is there a more general swap model that allows each party to specify its personal preferences over all possible
swap outcomes, and, at the same time, admits an atomic protocol.

Addressing this question, this paper introduces a general model of cross-chain swaps that we call \emph{swap systems}. 
In a swap system, as in \cite{Herlihy18}, the set of prearranged asset transfers is represented
by a directed graph. Unlike in  \cite{Herlihy18}, however, in our model each party 
can specify its own preferences between all its possible outcomes (that is, between sets consisting of its incoming and
outgoing arcs). These preferences can be arbitrary, as long as they form a poset and satisfy natural monotonicity conditions.
This generality allows us to capture not only subjectivity of preferences, but also dependencies between assets. 
The example above (about trading clothing items) illustrates such a dependency: 
for the purpose of trading, Alina values her pair of items higher than the sum of their individual values.
Such dependencies often arise in practice when a party intends to trade multiple assets --- in fact, 
common investment strategies are guided by objectives (diversification, for example) that
inherently involve asset dependencies.

As it turns out, Herlihy's protocol is not necessarily atomic in all swap systems, although
it still satisfies the conditions of liveness, safety, and weak Nash equilibrium. 
We then present a characterization of swap systems that admit atomic protocols.
The correctness proof of this characterization embodies such a protocol.
We then focus on the problem of verifying whether a given swap systems has an atomic protocol.
To this end, we provide a full characterization of the time complexity of this problem and show that it's computationally infeasible,
by a novel proof of completeness in the complexity class $\SigmaTwoP$. 
As a stepping stone to this full characterization, 
we also include an easier proof of NP-hardness.

The paper is organized as follows.
\begin{itemize}
\item 
In~\autoref{sec:swap systems} we introduce our model of swap systems,
including the definitions of atomic protocols in this model.

\item
In~\autoref{sec:h-swap systems} we show that our model is indeed a generalization 
of Herlihy's model.

\item 
The full characterization of swap systems that admit atomic protocols is
given in~\autoref{sec:uniandnash}.

\item
The decision problem of testing whether a swap systems admits an atomic
protocol is studied in~\autoref{sec:np_hardness_results}
and \autoref{sec:sigma2 completeness-main}, first proving
$\NP$-hardness and then refining the proof to show $\SigmaTwoP$-completeness.

\end{itemize}

For readers interested in the practical impact of our work, the overall take-out message from this paper is this:
(i) 
Even if some parties wish to specify outcome preferences not captured by the model in \cite{Herlihy18},
it still may be possible to realize the swap with a protocol that is atomic and efficient.
(ii) 
The challenge is that in order to determine whether it is possible, and
to actually specify this protocol, one needs to solve a computationally infeasible 
decision problem.
Naturally, for small number of parties this can still be done in practice 
-- say by exhaustive search.



\section{Swap Systems}
\label{sec:swap systems}

As discussed in the introduction, Herlihy's model~\cite{Herlihy18} for cross-chain swaps assumed that the rational
behavior of participating parties is determined by preferences between five types of outcomes:
$\Deal$, $\NoDeal$, $\Discount$, $\FreeRide$, and $\Underwater$. 
These preferences were assumed to be shared by all parties, and can be interpreted as a simple partial order on all possible outcomes.
Some of these preferences are natural; for example, in $\Discount$ a party receives all incoming assets without
trading all outgoing assets, making it preferable to $\Deal$. But, as explained in the introduction, in practice a party
may consider some outcomes designated as $\Underwater$ in~\cite{Herlihy18} to be acceptable, or even preferable to $\Deal$. 
As another example, suppose that Alina possesses items $\ttA$ and $\ttB$ that she values at $\$10$ and $\$12$,
and Bohdan possesses items $\ttX$ and $\ttY$ that Alina values at $\$11$ and $\$14$.
Providing that Alina's preferences are based only on the monetary value,
she would accept to join the swap that allows her to swap both $\ttA$ and $\ttB$ for Bohdan's $\ttX$ and $\ttY$, 
but she would be even happier if she ends up swapping only $\ttA$ for $\ttY$ instead.
Similarly, there is no justification for the outcomes in $\FreeRide$ to be incomparable to $\Deal$ or $\Discount$. 

To represent such individual preferences, 
we now refine Herlihy's model by allowing each party to specify a partial order on all her possible outcomes
of a protocol. Our model is very general in that (unlike in the example above) a party's preferences are
not determined by numerical values of individual assets, but rather involve comparing directly whole sets of 
traded and acquired assets. The advantage of this approach is that it captures dependencies between assets, 
when a party values a set of assets higher or lower than the sum of their individual values. As an example,
say that Alina owns a power drill and a shovel, while Bohdan is in possession of a pair of skis.
Alina would not swap any of her items for any single ski, but she may be happy to swap both of her items for the pair.
On the other hand, if, instead of skis, Bohdan needs to get rid of two skateboards, Alina may prefer to 
swap any of her items for one skateboard rather than swapping both for two skateboards.


\myparagraph{Swap Systems}
A \emph{swap system} is specified by a pair $\swapSys = (\DG, \prefP)$ consisting of a digraph $\DG$
that represents the pre-arranged asset transfers and a collection $\prefP$ of posets that specifies 
the preferences of each involved party among all of its potential outcomes. Next, we give a formal definition
of these two components of $\swapSys$.

Digraph $\DG = (V, A)$ is called a \emph{swap digraph}. Each vertex $v\in V$ represents a party that participates in the swap, and
each arc $(u,v) \in A$ represents an asset that is to be transferred from party $u$ to party $v$. 
By $\Ain{v}$ and $\Aout{v}$ we will denote the sets of vertex $v$'s incoming and outgoing arcs, respectively.
If $(x,v) \in \Ain{v}$ then $x$ is called an \emph{in-neighbor} of $v$,
and if $(v,x) \in \Aout{v}$ then $x$ is called an \emph{out-neighbor} of $v$.
Throughout the paper we assume that $\DG$ does not have multiple arcs\footnote{%
This assumption is only for convenience -- our model and results trivially extend to multi-digraphs,
although this requires more cumbersome notation and terminology.
}.
We also assume that $\DG$ is weakly
connected (otherwise a swap can be arranged for each connected component separately). 
To exclude some degenerate scenarios, we also assume that $|V|\ge 2$ and that 
$\Ain{v}\neq\emptyset$ and $\Aout{v}\neq\emptyset$ for each $v\in V$.

An \emph{outcome} of a party $v\in V$ is a pair $\outcome = \outcomepair{\outcomein{}}{\outcomeout{}}$, where
$\outcomein{} \subseteq \Ain{v}$ and $\outcomeout{} \subseteq \Aout{v}$. An outcome represents the sets of 
acquired and traded assets, $\outcomein{}$ and $\outcomeout{}$ respectively.
The set of all possible outcomes of $v$ will be denoted $\Omega_v$.
To reduce clutter, instead of arcs, in $\outcomepair{\outcomein{}}{\outcomeout{}}$ we
will often list only the corresponding in-neighbors and out-neighbors of $v$; for example, instead of 
$\outcomepair{\braced{(x,v),(y,v)}}{\braced{v,z}}$ we will write $\outcomepair{x,y}{z}$.

The collection $\prefP = \braced{\prefP_v}_{v\in V}$ consists of \emph{preference posets}.
The preference poset of a party $v\in V$ is $\prefP_v = (\Omega_v, \preceq_v)$, where $\preceq_v$ is a partial order on $\Omega_v$.
We will write $\outcome \prec_v \outcome'$ if $\outcome \preceq_v \outcome'$ and $\outcome \neq \outcome'$.
This poset naturally represents $v$'s evaluation of its potential outcomes; that is, relation $\outcome \preceq_v \outcome'$ holds
if $v$ views outcome $\outcome'$ to be better than outcome $\outcome$.
The outcome where $v$ does not participate in any transfer is  $\NoDealDv{}{v} = \outcomepair{\varnothing}{\varnothing}$
and the outcome where all of $v$'s transfers are realized is  $\DealDv{}{v} = \outcomepair{\Ain{v}}{\Aout{v}}$. 
Each preference poset $\prefP_v$ is assumed to have the following properties:
\begin{description}[leftmargin=0.1in]
    \item{(p.1)} $\Deal$ is better than $\NoDeal$:  $\NoDealDv{}{v} \prec_v \DealDv{}{v}$.
    Naturally, each party prefers swapping all assets over being completely excluded, as otherwise it would not even
	join the swap system.
    \item{(p.2)} Inclusive Monotonicity:
    $(\outcomein{1} \subseteq \outcomein{2}\wedge \outcomeout{2} \subseteq \outcomeout{1})
	\Rightarrow \outcome_1 \preceq_v \outcome_2$, for every two outcomes $\outcome_1, \outcome_2 \in \Omega_v$.
	That is, it's better to receive more assets and to trade fewer assets\footnote{Duuh.}.
\end{description}
The preference pairs $\outcome_1 \prec_v \outcome_2$ that are determined by rules (p.1) and (p.2) above will be called
\emph{generic}. The size of the preference poset may be exponentially large with respect to the
size of the swap digraph $\DG$, but it is not necessary for a party to specify generic
preferences as they are implied from the above rules. Therefore,
throughout the paper, we assume that $\prefP_v$ is specified by its \emph{generator set}, which is a subset of its non-generic preference pairs that, 
together with the generic pairs and transitivity, generate the whole poset.
A generator set of a poset may not be unique. We use this convention in our examples and running time
bounds. (This does not affect our hardness results --- they hold even if the
preference poset of each party is specified by listing \emph{all} preference pairs.)

An outcome $\omega\in \Omega_v$ is called \emph{acceptable}  if $\omega\succeq \NoDealv{v}$.
The set of acceptable outcomes of a node $v$ will be denoted $\acceptsetA_{v}$\footnote{%
This definition can be relaxed to allow some outcomes incomparable to $\NoDeal$ be acceptable. In this extended model, the set $\acceptsetA_{v}$
of acceptable outcomes would be part of a swap system specification, and would have to satisfy three conditions:
(i) $\braced{\outcome\suchthat \outcome \succeq \NoDealv{v}} \subseteq \acceptsetA_{v}$,
(ii) $\braced{\outcome\suchthat \outcome \prec \NoDealv{v}} \cap \acceptsetA_{v} = \emptyset$, and
(iii) $\outcome\in \acceptsetA_{v} \wedge \outcome\preceq\outcome' \Rightarrow \outcome'\in \acceptsetA_{v}$.
Our results can be extended naturally to this model. We adopted the simpler definition to streamline the presentation.
}.

Throughout the paper, we will often omit subscript $v$ in these notations
(and others as well) if $v$ is implicit in the context or irrelevant.
On the other hand, if any ambiguity may arise, we will sometimes add a superscript to some notations
specifying the digraph under consideration; for example we
will write $\DealDv{\DG}{v}$ to specify that outcome $\DealDv{\DG}{v}$ is with respect to digraph $\DG$.


\myparagraph{Protocols}
Given a swap system $\swapSys = (\DG, \prefP)$, a \emph{swap protocol} $\bbP$ for $\swapSys$
specifies actions of each party over time, in particular it determines how assets change
hands. Initially, an asset represented by an arc $(u,v)\in A$ is in the possession of $u$, and,
when $\bbP$ completes, this asset must be in possession of either $u$ or $v$.
If $(u,v)$ ends up in the possession of $v$, we will say that the arc $(u,v)$ has been \emph{triggered}.
The outcome of $v$ after executing $\bbP$ is  $\outcomepair{\outcomein{}}{\outcomeout{}}$, where $\outcomein{}$ and $\outcomeout{}$
are the sets of incoming and outgoing arcs of $v$ that are triggered in this execution. In particular,
we write $\bbP(v)$ for the outcome of $v$ in an execution of protocol $\bbP$ in which all parties follow $\bbP$.
If some party (possibly $v$ itself) deviates from $\bbP$, we assume that $v$'s outcome is also finalized when $\bbP$ completes, 
but it may be different from $\bbP(v)$.

A protocol may use appropriate cryptographic primitives. In particular, following~\cite{Herlihy18},
we assume the availability of \emph{smart contracts}.
A smart contract for an arc $a = (u,v)$ allows $u$ to put asset $a$ in an escrow secured with a suitable collection of
hashed time-locks:
each such time-lock is specified by a pair $(h,\tau)$, where $h = H(s)$ is a hashed value of a secret ${s}$ and $\tau$ is a time-out value.
In order to unlock this time-lock, $v$ (and only $v$) must provide the value of ${s}$ before time $\tau$. 
If all time-locks of $(u,v)$ are unlocked, $v$ can claim $a$. This automatically triggers arc $(u,v)$.
If any time-lock times out, $a$ is automatically returned to $u$.
We describe a more elaborate hashed time-lock in the next section.

\myparagraph{Properties}
For a swap protocol to be useful, it must guarantee that if all parties follow it then every party ends in an outcome 
at least as favorable as trading all their outgoing for all their incoming assets.
Further, every conforming party should end up with an acceptable outcome, no matter whether other parties follow the protocol or not.
Lastly, rational parties should have no incentive to deviate from the protocol. 
Herlihy~\cite{Herlihy18} captured these properties using the concepts of uniformity and strong Nash equilibrium.
Our definitions, below, are their natural extensions to the more general model of swap systems.


\paragraph{Uniformity}
A swap protocol $\bbP$ is called \emph{uniform} if it satisfies the following two conditions:
\begin{description}
    \item{\emph{Liveness:}} If all parties follow $\bbP$, they all end in outcome \Deal \emph{or better},
			that is $\bbP(v) \succeq \Dealv{v}$ for all $v\in V$.
    \item{\emph{Safety:}} If a party conforms to $\bbP$, then its outcome will be acceptable,
			independently of the behavior of other parties.
\end{description}
A less restrictive concept of uniformity may also be of interest: We say that a protocol $\bbP$
is \emph{weakly uniform} if it satisfies the safety condition above, but the liveness
condition is replaced by the following \emph{weak liveness} requirement: 
if all parties follow $\bbP$, then at least one party ends in an outcome strictly better than \NoDeal.
The assumptions on preference posets imply directly that a protocol that is uniform is also weakly uniform.


\paragraph{Nash equilibria and atomicity}
We extend the concept of outcomes to sets of parties, where an outcome of a set is just a vector of individual outcomes.
On this set we can then define a preference relation in a standard way, via a coordinate-wise ordering of outcomes.
Formally, for any set of parties $C\subseteq V$, an \emph{outcome vector of $C$} is $\vecoutcome = (\outcome_v)_{v\in C}$,
where $\outcome_v\in \Omega_v$ for all $v\in C$. Denote by $\barOmega_C$ the set of all outcome vectors of $C$.
Given two outcome vectors  $\vecoutcome, \vecoutcome'\in \barOmega_C$, we write $\vecoutcome \preceq_C \vecoutcome'$ if
$\outcome_v\preceq_v \outcome'_v$ for all $v\in C$. If also $\vecoutcome\neq \vecoutcome'$ then we write $\vecoutcome \prec_C \vecoutcome'$. 
(In other words, $\vecoutcome \prec_C \vecoutcome'$ means that at least one party in $C$ does strictly better in $\vecoutcome'$ than in $\vecoutcome$,
and every party in $C$ does at least as good.)
In this notation, if all parties follow a protocol $\bbP$,
then the outcome vector $\bbP(C)$ of a protocol $\bbP$ for a set of parties $C$ is $(\bbP(v))_{v\in C}$.

We will say that a protocol $\bbP$ is a \textit{strong Nash equilibrium} if no coalition of participating
parties can improve its vector outcome by deviating from $\bbP$;
more precisely, for every set $C$ of parties, if $\vecoutcome$ denotes the outcome vector of $C$ in
some execution of $\bbP$ where all parties in $V\setminus C$ follow $\bbP$, then we cannot have $\vecoutcome \succ_C \bbP(C)$.
We will call $\bbP$ \emph{atomic} if it is both uniform and a strong Nash equilibrium.


\paragraph{Example~1}
Consider a swap system $\swapSys = (\DG, \prefP)$ whose digraph $\DG$ is shown in Figure~\ref{fig: swap system example}.
The preference poset $\prefP_u$ is generated by two preference pairs $\Dealv{u} \prec \outcomepair{v}{v} \prec \outcomepair{v}{w}$,
the preference poset $\prefP_v$ is generated by two preference pairs $\Dealv{v} \prec \outcomepair{u}{u} \prec \outcomepair{w}{u}$,
and the preference poset $\prefP_w$ is generated by one preference pair $\Dealv{w} \prec \outcomepair{u}{v}$.


\begin{figure}[ht]
\centering
\includegraphics[width=1.75in]{./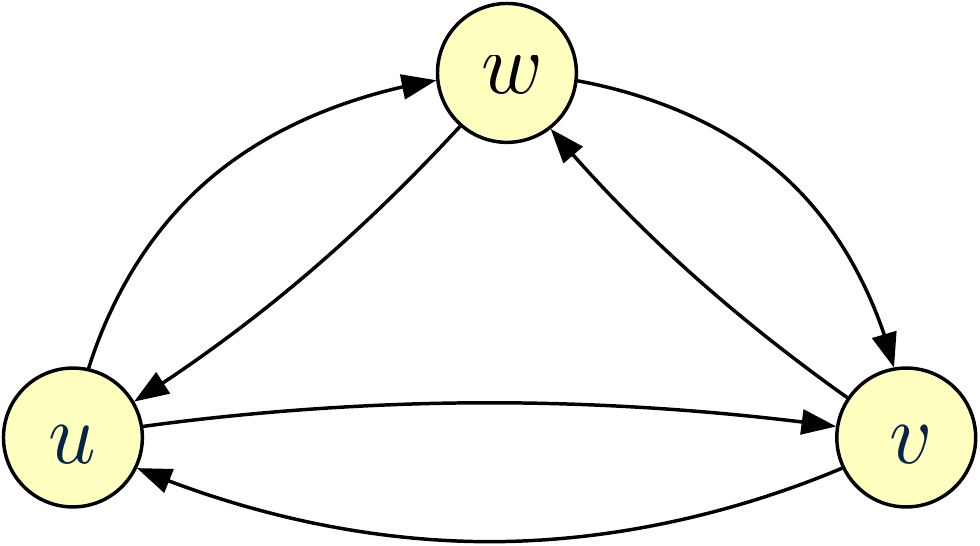} 
\caption{The digraph $\DG$ in the example.}
\label{fig: swap system example}
\end{figure}


Consider also a swap protocol $\bbP$ for $\swapSys$ such that if all parties follow $\bbP$ then all end up
with outcome $\Deal$. Then $\bbP$ is not a strong Nash equilibrium, because for $C = \braced{u,v}$, the parties in $C$
can ignore $\bbP$ altogether and simply swap their assets between themselves, improving their outcomes.
Nevertheless,
as we show later in Section~\ref{sec:uniandnash}, $\swapSys$ does have an atomic protocol.
Roughly, instead of using the whole digraph $\DG$, in this protocol only assets represented by arcs
$(u,w)$, $(w,v)$ and $(v,u)$ will be swapped. Then the outcome of each party will be better than $\Deal$,
and $u$ and $v$ will have no incentive to deviate from this protocol.


\section{Herlihy's Swap Model}
\label{sec:h-swap systems}

In this section, we show that the concept of swap systems is a generalization of Herlihy's model~\cite{Herlihy18}. 
To this end, we define a simple type of swap system called h-swap systems, 
and we show that it captures the model in~\cite{Herlihy18}. 
In particular we prove that in h-swap systems, our definition of atomicity is equivalent to the definition in~\cite{Herlihy18}. 


\myparagraph{h-Swap Systems}
Given a swap system  $\swapSys = (\DG, \prefP)$ and a party $v\in V$, define three sets of outcomes of $v$:
\begin{align*}
    \Discountv{v} \;&=\; \braced{ \outcome \ | \ \outcomein{} = \Ain{v} \wedge \outcomeout{} \neq \Aout{v} }
    \\
    \FreeRidev{v} \;&=\; \braced{ \outcome \ | \ \outcomein{} \neq \varnothing \wedge \outcomeout{} = \varnothing }
	\\
	\Underwaterv{v} \;&=\; \braced{ \outcome \ | \ \outcomein{} \neq \Ain{v} \wedge \outcomeout{} \neq \varnothing }
\end{align*}
Since $\Ain{v}\neq\emptyset$ and $\Aout{v}\neq\emptyset$,
all sets $\Discountv{v}$, $\FreeRidev{v}$ and $\Underwaterv{v}$ are well-defined, 
none of them contains $\NoDealv{v}$ nor $\Dealv{v}$, $\Underwaterv{v} \cap (\Discountv{v}\cup  \FreeRidev{v} ) = \emptyset$,
$\Discountv{v}\cap \FreeRidev{v} = \smbraced{ \outcomepair{\Ain{v}}{\varnothing}}$, and
%
\\
$
    \begin{array}{rcl}
	\Omega_v  &\;=\;& \braced{\NoDealv{v}} \cup \braced{\Dealv{v}} \cup \Discountv{v}
    \\ && \quad\quad \cup  \ \FreeRidev{v} \cup \Underwaterv{v}.
    \end{array}
$ 
\\
%
The inclusive monotonicity property~(p.2) implies that all outcomes in $\FreeRidev{v}$ are better than $\NoDealDv{}{v}$,
and all outcomes in $\Discountv{v}$ are better than $\DealDv{}{v}$. 

We will call $\swapSys$ an \emph{h-swap system} if it satisfies the following conditions for all $v\in V$:
\begin{description}
	\item{(h.1)} If $\omega\in \Underwaterv{v}$ then $\omega\prec_v \NoDealDv{}{v}$,
	\item{(h.2)} Party $v$ has no other non-generic preferences besides these in~(h.1).
\end{description}
In other words, in an h-swap system all preference posets are generated by
relations $\outcome\prec \NoDeal$ for outcomes $\outcome$ in $\Underwater$.
Figure~\ref{fig: h-swap system preferences} illustrates the structure of
a preference poset of an h-swap system\footnote{
This figure differs slightly from Figure~3 in~\cite{Herlihy18}, which mistakenly showed
the sets $\Discountv{v}$ and $\FreeRidev{v}$ as disjoint.}.
Note that in an h-swap system, the set of acceptable outcomes of a node $v$ is
$\acceptsetA_v =  \Omega_v \setminus \Underwaterv{v} 
			= \braced{\NoDealv{v}} \cup \braced{\Dealv{v}}\cup \Discountv{v} \cup  \FreeRidev{v}$.
The preferences of an h-swap system $\swapSys = (\DG, \prefP)$ are uniquely determined by its digraph $\DG$, 
so it is not even necessary to specify $\prefP$.

\begin{figure}
\centering
\includegraphics[width=3.3in]{./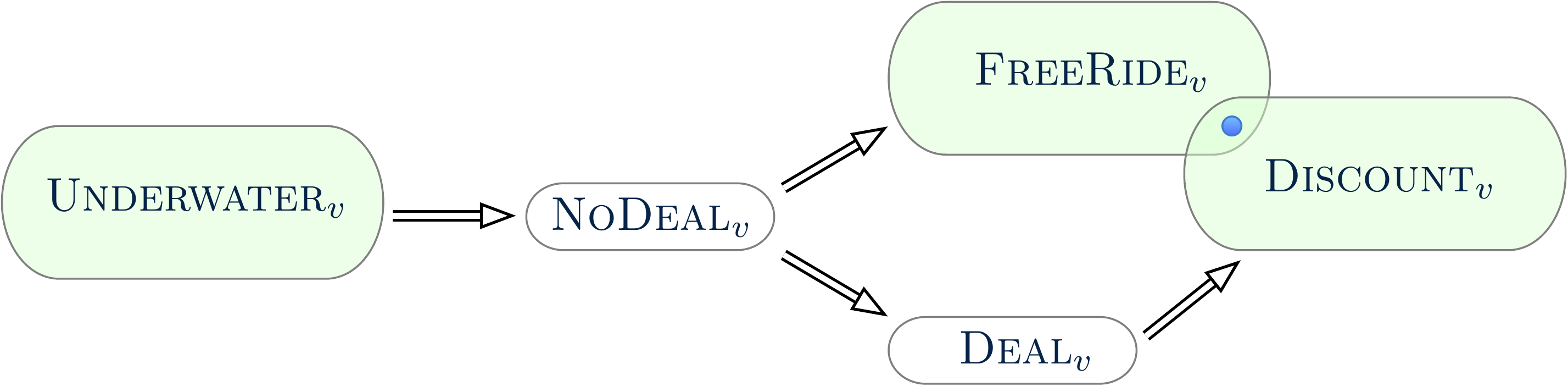} 
\caption{The structure of a preference poset of a party $v$ in an h-swap system. The arrows symbolize the preference relation.
The one outcome in $\Discountv{v}\cap \FreeRidev{v}$ is $\outcomepair{\Ain{v}}{\varnothing}$.
}
\label{fig: h-swap system preferences}
\end{figure}

The preference poset structure of h-swap systems, as defined above, captures the concept of
a party's preferences assumed in the model from~\cite{Herlihy18}, except for the addition of
preferences determined by inclusive monotonicity.


\emph{Comment.}
The model in~\cite{Herlihy18} was not formulated in terms of posets, raising a question of how to
formally capture a relation for pairs of outcomes between which preferences were not specified in~\cite{Herlihy18}.
In our model, such outcomes are considered incomparable in the poset (unless they are related by the inclusive monotonicity).
One may try to consider another option: to allow arbitrary relations between such pairs, providing that the poset
axioms are satisfied and the condition~(h.1) holds. However, with this approach there is no meaningful way to
extend such individual preferences to collective preferences of sets of parties (see the discussion later in this section).


\myparagraph{h-Uniformity} 
To distinguish between our and Herlihy's definition of uniformity, we will refer to his
concept as h-uniformity. A swap protocol $\bbP$ is called \emph{h-uniform} if it satisfies 
the safety property and the following \emph{h-liveness} condition:
If all parties follow $\bbP$, they all end in outcome \Deal. 
This condition seems stricter than our definition of uniformity, but we show that in 
h-swap systems these two definitions are in fact equivalent. In fact, they are also
equivalent to weak uniformity, as defined earlier in Section~\ref{sec:swap systems}.


\begin{lemma}\label{lem: h-uniformity lemma}
Let $\swapSys = (\DG, \prefP)$ be an h-swap system in which some subset of arcs in $\DG$ are triggered, 
and let $Q$ be a path in $\DG$ whose all internal nodes are in acceptable outcomes.
Then, along $Q$, all triggered arcs of $Q$ are before all non-triggered arcs of $Q$.
\end{lemma}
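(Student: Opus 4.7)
The plan is to proceed by contradiction: assume along $Q$ there is a non-triggered arc appearing before a triggered arc, and locate a vertex whose outcome must then lie in $\Underwater$, contradicting acceptability.

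First, I would enumerate the arcs of $Q$ in order as $a_1, a_2, \ldots, a_k$, with corresponding vertex sequence $v_0, v_1, \ldots, v_k$ so that $a_i = (v_{i-1}, v_i)$. If the conclusion fails, there exist indices $i < j$ with $a_i$ non-triggered and $a_j$ triggered. I would then pick the smallest such $j$, i.e., the least index $j$ such that $a_j$ is triggered while some earlier arc of $Q$ is non-triggered. A short minimality argument then shows that $a_{j-1}$ itself must be non-triggered: if $a_{j-1}$ were triggered, then (since there is some non-triggered $a_{j'}$ with $j' < j$, and $j' \le j-2$) the index $j-1$ would also satisfy the defining property, contradicting minimality of $j$. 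Thus I obtain two consecutive arcs $a_{j-1}$ (non-triggered) and $a_j$ (triggered) along $Q$.

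Next I would focus on the vertex $v := v_{j-1}$, which is internal to $Q$. The arc $a_{j-1} \in \Ain{v}$ is not triggered, so $\outcomein{v} \neq \Ain{v}$; the arc $a_j \in \Aout{v}$ is triggered, so $\outcomeout{v} \neq \varnothing$. By the definition of $\Underwaterv{v}$ in the previous paragraph of the section, the outcome of $v$ therefore lies in $\Underwaterv{v}$. By property (h.1) of h-swap systems, this outcome is strictly worse than $\NoDealv{v}$, hence is not acceptable. This contradicts the hypothesis that every internal node of $Q$ has an acceptable outcome, completing the argument.

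The proof is essentially a direct unpacking of the definitions once the right vertex has been isolated, so the only mildly delicate step is the minimality argument ensuring that the non-triggered and triggered arcs can be chosen to be adjacent along $Q$; everything else reduces to matching the in/out triggering pattern at $v$ against the definition of $\Underwater$ and invoking (h.1).
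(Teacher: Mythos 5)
Your proof is correct and rests on the same core observation as the paper's: an internal vertex with a non-triggered incoming arc of $Q$ and a triggered outgoing arc of $Q$ would be in $\Underwater$, hence (by (h.1)) in an unacceptable outcome. The paper phrases this contrapositively, propagating forward from the first non-triggered arc to show all later arcs are non-triggered, whereas you argue by contradiction with a minimality step to isolate two adjacent arcs; this is only a cosmetic reorganization of the same argument.
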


\begin{proof}
If all arcs on $Q$ are triggered, except possibly for the last one, we are done. Otherwise,
let $(x,y)$ be the first non-triggered arc on $Q$ and
let $z$ be the successor of $y$. Since $y$'s outcome is acceptable and $(x,y)$ is not triggered, this outcome must be either
$\NoDealv{y}$ or in $\FreeRidev{y}$. Therefore $(y,z)$ is also not triggered. 
Repeating this argument, we obtain that all arcs on $Q$ after $(x,y)$ are not triggered. 
\end{proof}


\begin{theorem}\label{thm: h-uniformity = uniformity}
Let $\bbP$ be a swap protocol for an h-swap system $\swapSys = (\DG, \prefP)$, where
$\DG$ is strongly connected. Then the following three conditions are equivalent:
(i) $\bbP$ is uniform,
(ii) $\bbP$ is weakly uniform, 
(iii) $\bbP$ is h-uniform.
\end{theorem}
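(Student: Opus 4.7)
My plan is to establish the three equivalences via the cycle $(iii) \Rightarrow (i) \Rightarrow (ii) \Rightarrow (iii)$. The first two implications are essentially definitional. For $(iii) \Rightarrow (i)$, h-liveness states that every party finishes in exactly $\DealDv{}{v}$, which trivially satisfies $\bbP(v) \succeq \Dealv{v}$, while the safety clause is identical in both definitions. For $(i) \Rightarrow (ii)$, the paper has already observed that uniformity implies weak uniformity: property~(p.1) gives $\Dealv{v} \succ \NoDealv{v}$, so under uniformity each party ends strictly above $\NoDeal$, and in particular at least one does.

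The substantive direction is $(ii) \Rightarrow (iii)$, and here I would combine \autoref{lem: h-uniformity lemma} with the strong connectivity of $\DG$. Safety is shared by both definitions, so only h-liveness must be derived. Assume that all parties follow $\bbP$. Safety forces every party into an acceptable outcome, while weak liveness gives at least one party $v_0$ whose outcome is strictly above $\NoDealv{v_0}$. In an h-swap system the acceptable outcomes strictly above $\NoDeal$ lie in $\braced{\Dealv{v_0}} \cup \Discountv{v_0} \cup \FreeRidev{v_0}$, each of which has at least one triggered arc incident to $v_0$. Hence the set of triggered arcs is nonempty.

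The heart of the proof is then to show that in fact every arc of $\DG$ is triggered. I would suppose for contradiction that some arc $(x,y)$ is not triggered, and fix a triggered arc $(u,v)$ whose existence we just established. Using strong connectivity, I take a walk from $y$ to $u$ and form a walk $Q$ consisting of $(x,y)$, this walk, and finally $(u,v)$. Since all parties are in acceptable outcomes by safety, \autoref{lem: h-uniformity lemma} applies to $Q$ and says that all triggered arcs of $Q$ precede all non-triggered arcs. But $Q$ begins with the non-triggered $(x,y)$ and ends with the triggered $(u,v)$, a contradiction. So every arc is triggered, which means $\bbP(v) = \Dealv{v}$ for every $v$, giving h-liveness. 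The main obstacle, mild as it is, is to confirm that \autoref{lem: h-uniformity lemma} applies to walks and not just simple paths; a quick inspection of its proof shows that it does, since the argument only uses that the head of a non-triggered arc whose outcome is acceptable must lie in $\braced{\NoDealv{}} \cup \FreeRidev{}$ and therefore triggers no outgoing arc, propagating inductively along $Q$.
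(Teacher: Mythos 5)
Your proof is correct and follows essentially the same route as the paper: the two easy implications are dispatched definitionally, and the substantive direction (weak uniformity implies h-liveness) is obtained by taking a non-triggered arc, a triggered arc, connecting them via strong connectivity, and contradicting \autoref{lem: h-uniformity lemma}. Your explicit remark that the lemma's inductive argument applies to walks, not only simple paths, is a small but welcome tightening of a point the paper glosses over.
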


\begin{proof}
Trivially, h-uniformity implies uniformity, which in turn implies weak uniformity. Thus it is sufficient to
show that weak uniformity implies h-uniformity. 

So assume that $\bbP$ is weakly uniform. As the safety condition is the same,
it is sufficient to show that $\bbP$ satisfies the h-liveness property. 
Assume that all parties follow $\bbP$. Then, from the assumptions about safety and weak liveness,
all parties will end up in acceptable outcomes, with at least one party ending in an outcome
strictly better than $\NoDeal$. 

Suppose, towards contradiction, that there is a party with outcome other than $\Deal$. 
This gives us that some arc $(x,y)$ is not triggered. Further, since some party has an outcome other than $\NoDeal$,
there must be a triggered arc $(x',y')$.
By strong connectivity, there is a path $P$ from $x$ to $y'$ whose first arc is $(x,y)$ and the
last arc is $(x',y')$. Then the existence of this path contradicts Lemma~\ref{lem: h-uniformity lemma}.
\end{proof}


\paragraph{h-Atomicity} 
The approach in~\cite{Herlihy18} differs from ours in the way it formalizes the gain of a coalition (subset) of parties when they deviate
from the protocol. Roughly, the definition in~\cite{Herlihy18} captures a collective gain, while our definition views it as a vector
of individual outcomes. In spite of this apparent difference, we
show that in h-swap systems our concept of atomicity is in fact equivalent to the one in~\cite{Herlihy18}. 

In the discussion below, let $\swapSys = (\DG, \prefP)$ be a fixed h-swap system.
Following~\cite{Herlihy18}, we will define the h-outcome of a coalition
$\coalition$ of parties by, in essence, contracting $\coalition$ into a single vertex. 
(The term ``h-outcome'' is ours, to better distinguish this concept from our concept of outcome vectors.)
More formally, define $\coalition$'s incoming and outgoing arcs in a natural way: 
$\Ain{\coalition} = \bigcup_{v \in \coalition} \Ain{v} \setminus \bigcup_{v \in \coalition} \Aout{v}$ 
and similarly, 
$\Aout{\coalition} = \bigcup_{v \in \coalition} \Aout{v} \setminus \bigcup_{v \in \coalition} \Ain{v}$. 
The \emph{h-outcomes} for $\coalition$ are
pairs $\houtcome =  \outcomepair{ \houtcomein{} }{ \houtcomeout{}}$ where
$\houtcomein{} \subseteq \Ain{\coalition}$ and $\houtcomeout{} \subseteq \Aout{\coalition}$. 
$\Omega_\coalition$ is the set of all h-outcomes of $C$.
The preference poset and acceptable set of $\coalition$ are defined analogously to that of a single party in an h-swap system.
That is, we define $\NoDealv{C}$, $\Dealv{C}$, $\Discountv{C}$, $\FreeRidev{C}$, and $\Underwaterv{C}$ in
the natural way,  and we assume the analogues of conditions~(p.1) and~(p.2) for swap systems (in Section~\ref{sec:swap systems})
and conditions~(h.1) and~(h.2) for h-swap systems. The set of acceptable h-outcomes 
$\Ayes{\coalition}$ consists of all h-outcomes of $\coalition$ that are not in $\Underwaterv{C}$.
(Note that if $\coalition$ consists of a single party then its h-outcome is identical to its outcome.)

Define a protocol $\bbP$ to be a \emph{strong Nash h-equilibrium} if it satisfies the following
condition for every set $\coalition$ of parties: providing that the parties outside $\coalition$ follow $\bbP$,
the parties in $\coalition$ cannot end up in an h-outcome better than their outcome resulting from following $\bbP$.
$\bbP$ is called \emph{h-atomic} if it is h-uniform and a strong Nash h-equilibrium.

Culminating the earlier discussion, the following theorem establishes that our model indeed
captures the model introduced in~\cite{Herlihy18}. 


\begin{theorem}\label{thm: atomicity vs h-atomicity}
Let $\bbP$ be a protocol for an h-swap system $\swapSys = (\DG, \prefP)$.
$\bbP$ is atomic if and only if it is h-atomic. 
\end{theorem}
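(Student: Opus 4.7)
The theorem decomposes into two equivalences: that uniformity coincides with h-uniformity, and that strong Nash equilibrium coincides with strong Nash h-equilibrium. The plan is to take the first essentially for free from Theorem~\ref{thm: h-uniformity = uniformity}, and use it to reduce to the case where $\bbP$ is uniform, so that $\bbP(v) = \Dealv{v}$ for every $v\in V$ and $\bbP_h(\coalition) = \Dealv{\coalition}$ for every coalition $\coalition$. Before tackling the second equivalence, I would establish a key structural lemma about the h-swap poset (and its coalition analogue): any outcome strictly above $\Deal$ must lie in $\Discount$, i.e., satisfy $\outcomein{} = \Ain{v}$ and $\outcomeout{} \subsetneq \Aout{v}$. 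This follows because the only generating relations are (p.1), inclusive monotonicity (p.2), and (h.1); transitive chains starting from $\Deal$ cannot descend into $\Underwater$ nor into $\NoDeal$, and hence must stay within $\{\Dealv{v}\}\cup\Discount_v$.

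For the direction h-atomic $\Rightarrow$ atomic, I would argue by contrapositive: suppose some coalition $\coalition$ has a deviation yielding an outcome vector $\vecoutcome \succ_\coalition \bbP(\coalition)$. The structural lemma forces $\outcomein{v} = \Ain{v}$ for every $v\in\coalition$, which means every internal arc of $\coalition$ (being an in-arc at its endpoint in $\coalition$) and every boundary in-arc of $\coalition$ is triggered. Since the strict improvement occurs for at least one $v$ with $\outcomeout{v} \subsetneq \Aout{v}$, and $v$'s internal out-arcs are all triggered, the missing arc must lie on the boundary of $\coalition$. Consequently the induced h-outcome $\houtcome$ of $\coalition$ satisfies $\houtcomein{\coalition} = \Ain{\coalition}$ and $\houtcomeout{\coalition}\subsetneq \Aout{\coalition}$, placing it in $\Discount_\coalition$ and giving $\houtcome \succ_\coalition \Dealv{\coalition} = \bbP_h(\coalition)$, a contradiction of strong Nash h-equilibrium.

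For the direction atomic $\Rightarrow$ h-atomic, I would again argue by contrapositive: suppose $\coalition$ has a deviation yielding $\houtcome \succ_\coalition \Dealv{\coalition}$. Applying the structural lemma to the contracted coalition gives $\houtcomein{\coalition} = \Ain{\coalition}$ and $\houtcomeout{\coalition}\subsetneq \Aout{\coalition}$. I would then augment this deviation by having the coalition additionally trigger all of its internal arcs, which members can do unilaterally via their smart contracts since internal arcs involve only parties in $\coalition$ and so lie outside the influence of the honest external parties. Each $v\in\coalition$ then has $\outcomein{v} = \Ain{v}$ (its triggered internal in-arcs together with its received boundary in-arcs), so $\outcome_v \in \{\Dealv{v}\}\cup\Discount_v$; moreover, the endpoint $v$ of any non-triggered boundary out-arc satisfies $\outcomeout{v}\subsetneq\Aout{v}$, placing $\outcome_v$ strictly in $\Discount_v$. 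Hence $\vecoutcome \succ_\coalition \bbP(\coalition)$, contradicting strong Nash equilibrium. The main obstacle is the structural lemma, whose correctness requires a careful traversal of the transitive closure of (p.1), (p.2), and (h.1) to rule out unexpected comparabilities above $\Deal$; a secondary subtlety is justifying that triggering internal arcs can be performed without disrupting the already-secured boundary h-outcome, which relies on the locality of internal smart contracts.
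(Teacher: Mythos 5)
Your proposal is correct and follows essentially the same route as the paper's proof: both directions reduce the uniformity part to Theorem~\ref{thm: h-uniformity = uniformity}, use the fact that any outcome strictly above $\Deal$ lies in $\Discount$ (which the paper asserts directly from the poset definition and you spell out as a structural lemma), and handle the equilibrium part by, in one direction, augmenting the coalition's deviation with all internal arcs and, in the other, noting that internal arcs are irrelevant to the h-outcome. The only differences are presentational (explicit contrapositive phrasing and the isolated lemma), not substantive.
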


\begin{proof}
$(\Rightarrow)$
Suppose that $\bbP$ is atomic.  Theorem~\ref{thm: h-uniformity = uniformity} implies
that $\bbP$ is h-uniform. Thus, from the definition of h-uniformity,
if all parties follow $\bbP$ then each party's outcome will be $\Deal$.

It remains to show that $\bbP$ is a strong Nash h-equilibrium.
Let $C\subseteq V$, and consider an execution of $\bbP$ in which all parties outside $C$ follow $\bbP$.
Since $\bbP$ is a strong Nash equilibrium, the outcome vector of $C$ is not
better than $(\Dealv{v})_{v\in C}$. Denote by $\houtcome$ the h-outcome
of $C$. We need to show that $\houtcome$ is not better than $\Dealv{C}$.

Towards contradiction, suppose that $\houtcome \succ \Dealv{C}$.
The definition of preference posets for h-outcomes gives us that $\houtcome \in \Discountv{C}$. 
Now consider another execution of $\bbP$ where the parties in $C$ behave just like before, but
they also trigger all arcs connecting two members of $C$.
This will not affect the execution of $\bbP$ for parties outside $C$.
Then the outcome vector $\vecoutcome$ of $C$ consists of all arcs
between $C$ and $V\setminus C$ (in both directions) that are triggered in $\houtcome$,
as well as all arcs with both endpoints inside $C$.
Since $\houtcome \in \Discountv{C}$, each $v\in C$ has all its
incoming arcs in $\vecoutcome$, and there is
at least one $u\in C$ that has one arc to $V\setminus C$ that is not in $\vecoutcome$.
So the outcome of each $v\in C$ is either $\Dealv{v}$ or $\Discountv{v}$, and this $u$'s outcome is $\Discountv{u}$.
But then $\vecoutcome$ is better than $(\Dealv{v})_{v\in C}$, contradicting the assumption
that $\bbP$ is a strong Nash equilibrium.

\smallskip

$(\Leftarrow)$
Now suppose that $\bbP$ is h-atomic; that is, $\bbP$ is h-uniform and is a strong Nash h-equilibrium.
From Theorem~\ref{thm: h-uniformity = uniformity} we obtain that $\bbP$ is uniform.

It remains to prove that $\bbP$ is a strong Nash equilibrium.
Let $C\subseteq V$, and consider some execution of $\bbP$ in which all parties outside $C$ follow $\bbP$.
Since $\bbP$ is a strong Nash h-equilibrium, the h-outcome of $C$ is not better than $\Dealv{C}$. 
We need to show that $C$'s outcome vector is not better than $(\Dealv{v})_{v\in C}$.

We again argue by contradiction. Suppose that $C$'s outcome vector is $\vecoutcome \succ (\Dealv{v})_{v\in C}$.
Then each $v\in C$ has outcome in $\braced{\Dealv{v}}\cup \Discountv{v}$ and there is some
$u\in C$ with outcome in $\Discountv{u}$. This implies that all parties in $C$ have their
incoming arcs in $\vecoutcome$. Further, some outgoing arc of $u$ is not in $\vecoutcome$,
and this arc must go to $V\setminus C$.
We consider the h-outcome of $C$ in the same run of $\bbP$, without changing the behavior of any members of $C$.
(In the h-outcome of $C$ the status of arcs internal to $C$ is not relevant.)
Denote this h-outcome by $\houtcome$.
Then $\houtcome$ will include the same arcs between $C$ and $V\setminus C$ (in both directions) as in $\vecoutcome$.
The properties of $\vecoutcome$ established earlier imply that
$\houtcome \in \Discountv{C}$, and thus $\houtcome \succ \Dealv{C}$, 
which contradicts our earlier assumption that $\bbP$ is a strong Nash h-equilibrium.
\end{proof}


\label{sec: herlihy's protocol}
\myparagraph{Herlihy's Protocol}
Herlihy presented a protocol for h-swap systems \cite{Herlihy18} that is h-atomic.
We summarize this protocol that we will refer to as $\hProt$.

Since the generation and distribution of the swap system is not the focus of this paper,
we assume a third-party service that reliably distributes information to the participating parties.
The service begins by assembling a swap graph $\DG$ and distributing it to every party.
Each party $p_i$ then generates and hashes a secret $h_i = \mathsf{hash}(s_i)$ and sends it back to the service.\footnote{
Herlihy describes an optimization where the service computes a feedback vertex set for $\DG$ 
(\emph{i.e.} the removal of this set would leave $\DG$ acyclic).
He refers to these parties as \emph{leaders} and only uses the hashed secrets of these parties in subsequent steps of the protocol. 
As this is not a necessary step, we will ignore it for simplicity.
}
The service distributes the hashed secrets as a vector $h_0...h_n$ to every party.

The protocol can be broken into two phases, which we call contract creation and secret propagation respectively.
The contract creation phase, in essence, realizes $\DG$.
For every arc $(u,v) \in \DG$, party $u$ generates a smart contract with an escrowed asset to counterparty $v$.
Each contract is \emph{hash-locked} by a vector of hashlocks $h_0...h_n$ generated by the given vector of hashed secrets.
A particular hashlock $h_w$ on arc $(u,v)$ unlocks when provided a hashkey $(s_w, p ,\sigma)$,
where $s_w$ is the preimage of $h_w$,
$p$ is any simple path from $v$ to $w$ (where $w$ is the party that generated secret $s_w$),
and $\sigma$ is a sequence of signatures $sig(..,  sig(s_w, w), .., v)$ backwards along path $p$.
It should be noted that a single hashlock may have multiple hashkeys, as any simple path is acceptable.
Hashlocks and hashkeys are also \emph{time-locked}.
Each hashkey only remains valid for a certain amount of time, scaling with the length of the path specified within it.
Ignoring constant factors, a hashkey remains valid for $\barred{p} \cdot \Delta$ time,
where $\Delta$ is an upper bound of the time needed for a single step of a party.
The longer the path in the hashkey, the longer the hashkey remains valid.
A hashlock expires when all of its hashkeys expire, in which the escrowed asset is returned to the sender.
That is, the hashkey containing the longest path from the recipient to the generator of the corresponding secret has expired.
If all hashlocks in the vector are unlocked, the contract triggers and the escrowed asset is sent to the recipient.

When a party $u$ observes that each of its incoming contracts has been created correctly, it enters the secret propagation phase.
Party $u$ first wants to propagate its own secret.
This is done by unlocking hashlock $h_u$ on each of their incoming arcs.
Specifically, $u$ generates hashkey $(s_u, u, sig(s_u, u))$ and uses this to unlock the corresponding hashlock on each arc in $\Ain{u}$.
Party $u$ also wants to propagate the secrets of others, which they learn by observing their own outgoing arcs.
Let $u$ observe on outgoing arc $(u,v)$ that hashlock $h_w$ was unlocked by hashkey $(s_w, p, \sigma)$.
Then $u$ can generate hashkey $(s_w, u+p, sig(\sigma, u))$ and unlock the corresponding hashlock $h_w$ on each arc in $\Ain{u}$.

\begin{algorithm}
    \caption{Herlihy's Protocol For Vertex $v$}\label{alg:h proto}
    \textbf{Input:} Digraph $\DG$, vector $\angled{h_0,...,h_n}$, secret $s_v$
    \begin{algorithmic}[1]
    \For{every $(v,w) \in \Aout{v}$}         \Comment{Phase 1}
        \State create contract to $w$ hashlocked by $\angled{h_0,...,h_n}$
    \EndFor
    \State \textbf{upon} contract for every $(u,v) \in \Ain{v}$ \Comment{Phase 2}
    \State generate hashkey $k_1 = (s_v, v, sig(s_v, v))$
    \For{every arc $(u,v) \in \Ain{v}$}         
        \State unlock hashlock $h_v$ with $k_1$
    \EndFor
    \While{no timed out hashlock \textbf{and} not all assets received}
        \If{new hashkey $k_2 = (s,p,\sigma)$ on $(v,w) \in \Aout{v}$}
            \State generate hashkey $k_3 = (s, v+p, sig(\sigma, v))$
            \For{every arc $(v,w) \in \Ain{v}$}         
                \State unlock hashlock with $k_3$
            \EndFor
\EndIf
\EndWhile
\end{algorithmic}
\end{algorithm}

\paragraph{Example~2}
Consider the swap graph in Figure~\ref{fig:herlihy not strong nash}. 
Assume each party is given the same vector of hashed secrets.
Parties start the protocol by creating their outgoing contracts using this vector.
The party $x$ creates the contract $(x, v)$,
the party $u$ creates the contracts $(u, x)$ and $(u, v)$,
the party $v$ creates the contracts $(v, u)$ and $(v, y)$,
and
the party $y$ creates the contract $(y, u)$.
Then when the party $x$ observes that the contract $(u,v)$ is created,
it releases its secret on $(u,x)$.
Once the party $u$ observes this secret on its outgoing contract $(u,x)$,
it applies it to its incoming contracts $(v,u)$ and $(y,u)$.
Similarly, when the parties $v$ and $y$ observe the secret on their outgoing contracts,
they apply it to their incoming contracts $(x,v)$ and $(v,y)$ respectively.
Thus, the secret $s_x$ is propagated through the whole graph.
With similar steps,
each other party releases its secret on its incoming contracts,
and each secret is propagated by other parties to the rest of the graph.
Thus, all secrets are eventually applied to all contracts, and all assets are transferred.


\begin{lemma}
\label{obs: herlihy observation}
Consider the execution of $\hProt$ on a directed graph $\DG$. Let $u$ be a party that follows
$\hProt$. After the execution is complete, for any $e\in\Aout{u}$ of $u$, $e$ is triggered only if all arcs in $\Ain{u}$ are triggered.
\end{lemma}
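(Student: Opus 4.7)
The plan is to proceed by contraposition on each incoming arc: I will fix an outgoing arc $e=(u,w)\in \Aout{u}$ and an arbitrary incoming arc $f=(x,u)\in \Ain{u}$, assume that $e$ has been triggered, and show that every one of the hashlocks $h_0,\dots,h_n$ on $f$ has been unlocked before its time-out, so that $f$ must also be triggered. This reduces the claim to an arc-by-arc (and secret-by-secret) accounting argument about the secret-propagation phase of $\hProt$.

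For the setup, I will first recall that because $e$ is triggered, party $w$ must have submitted, on $e$, a valid hashkey $(s_i,p_i,\sigma_i)$ for every index $i\in\{0,\dots,n\}$ before the corresponding hashlock $h_i$ on $e$ expired. I then split the work for unlocking $h_i$ on $f$ into two cases. The easy case is $i=u$: since $u$ follows $\hProt$, the very first action of $u$ in Phase~2 is to create the hashkey $(s_u,u,sig(s_u,u))$ and apply it to every arc in $\Ain{u}$, so in particular $h_u$ on $f$ is unlocked. The second case is $i\neq u$: here I will invoke the observation on $e$---$u$ sees $w$'s hashkey $(s_i,p_i,\sigma_i)$ on its outgoing arc $e$, and by the inner loop of Algorithm~\ref{alg:h proto}, $u$ must generate the extended hashkey $(s_i,u+p_i,sig(\sigma_i,u))$ and apply it to every arc in $\Ain{u}$, unlocking $h_i$ on $f$.

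The main obstacle I anticipate is the timing argument: one must verify that, when $u$ propagates the observed hashkey backward onto $f$, it does so within the validity window of the corresponding hashlock on $f$. The key ingredient will be the scaling of the time-locks by path length described earlier in the excerpt: a hashkey along a path of length $\barred{p}$ is valid for roughly $\barred{p}\cdot\Delta$ time, so the hashkey $(s_i,u+p_i,sig(\sigma_i,u))$ created by $u$ has a strictly larger validity window than the one $w$ used on $e$, with an extra $\Delta$ of slack that covers the single step $u$ takes to observe $w$'s hashkey on $e$ and react on $\Ain{u}$. I will state this as the timing invariant needed for the propagation step and then conclude that every hashlock on $f$ is unlocked in time, so $f$ is triggered. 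Since $f$ was arbitrary in $\Ain{u}$, the lemma follows.
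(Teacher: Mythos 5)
Your proof is correct and follows essentially the same argument as the paper: for each hashlock unlocked on the triggered outgoing arc, $u$ extends the observed hashkey by one hop and posts it on all arcs in $\Ain{u}$, with the extra $\Delta$ of validity from the longer path covering $u$'s reaction time. Your explicit case split for $i=u$ (where $u$ releases its own secret in Phase~2) is a harmless refinement of the paper's uniform treatment, not a different route.
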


If $e$ is triggered, every hashlock on $e$ was unlocked.
Since $u$ is following $\hProt$, it will observe whenever a hashlock on $e$ is unlocked.
Whenever a hashlock $h_i$ is unlocked, $u$ sees a hashkey $k_1 = (s_i, p, \sigma)$.
Then, $u$ can generate a hashkey $k_2 = (s_i, u+p, sig(\sigma, u))$ to post on the corresponding hashlocks $h_i$ for their incoming arcs.
Since $k_1$ was an acceptable hashkey, and increasing the length of a hashkey by 1 means it remains acceptable for $\Delta$ more time, $u$ had sufficient time to post $k_2$ to all arcs in $\Ain{u}$.
We repeat this argument for every hashlock on $e$, 
wherein when every hashlock on $e$ is unlocked, every hashlock on every arc in $\Ain{u}$ is also unlocked.


\section{A Characterization of Swap Systems with Atomic Protocols}
\label{sec:uniandnash}


As shown in~\cite{Herlihy18}, all swap systems considered in Herlihy's approach (that is all h-swap systems, in our terminology)
have an atomic protocol, providing that the underlying digraph is strongly connected.
In our more general model this is not always the case. Consider, for example, 
a swap system whose digraph is shown in Figure~\ref{fig:herlihy not strong nash}.
The only non-generic preferences are: $\Dealv{v} \prec \outcomepair{u}{u}$ for $v$, and $\Dealv{u} \prec \outcomepair{v}{v}$ for $u$.
Using the liveness condition for $x$ and $y$, any atomic protocol needs to trigger arcs $(u,x)$, $(x,v)$,
$(v,y)$ and $(y,u)$. But
$u$ and $v$ can cooperatively deviate from the protocol by triggering only arcs $(u,v)$ and $(v,u)$,
each obtaining a better outcome than if they followed the protocol. So this protocol cannot be a strong
Nash equilibrium, and thus is not atomic.


\begin{figure}[ht]
\centering
\includegraphics[width=1.6in]{./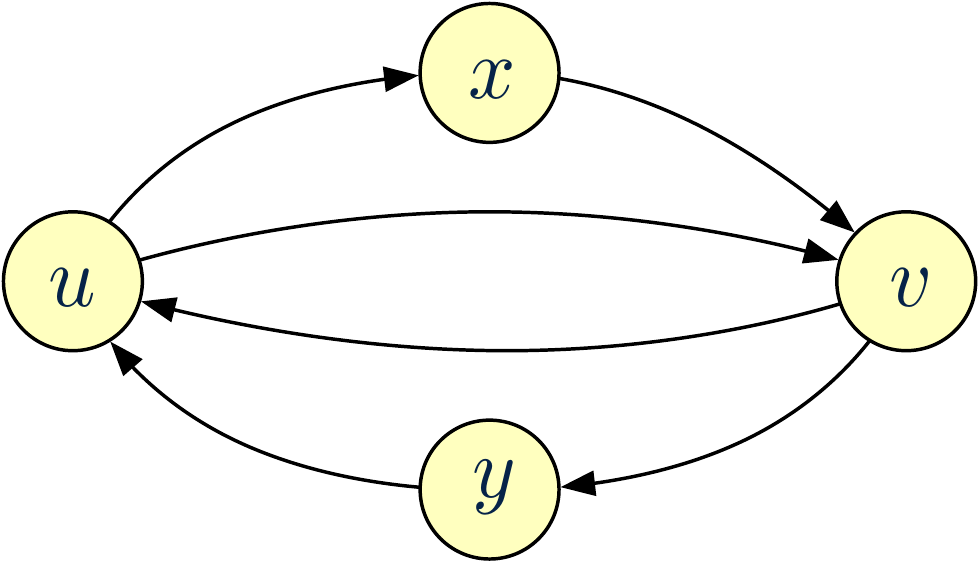} 
\caption{The example of a swap system in which $\hProt$ is not a strong Nash equilibrium.}
\label{fig:herlihy not strong nash}
\end{figure}


This raises the question as to whether there exists a simple characterization of swap systems that
admit atomic protocols. We provide such a characterization in this section.
Interestingly enough, we show that if a swap system admits an atomic protocol, 
then it also admits an atomic protocol that is essentially equivalent to running
Herlihy's protocol  on a suitable subgraph.

We saw Herlihy's protocol~\cite{Herlihy18}, denote by $\hProt$, in the previous section.
Herlihy proved that $\hProt$ is h-atomic for h-swap systems (in our terminology). 


\myparagraph{Uniformity and Nash equilibrium of Herlihy's protocol} 
Let $\swapSys= (\DG, \prefP)$ be any swap system with strongly connected digraph $\DG$.
If all parties follow $\hProt$ then they all will end up in outcome $\Deal$.
If $v$ follows $\hProt$ then either it does not trigger any outgoing arcs, and thus its outcome is in
$\smbraced{\NoDealv{v}} \cup \FreeRidev{v}$, or it triggers some, but then also all its incoming arcs are triggered,
so its outcome is in $\smbraced{\Dealv{v}} \cup \Discountv{v}$. In each case, regardless of the behavior of
other parties, this outcome is at least as good
as $\NoDealv{v}$, and thus acceptable. This means that $\hProt$ is uniform.

We now claim that $\hProt$ is a \emph{Nash equilibrium} in $\swapSys$, in the sense that no single party can improve
its outcome by deviating from $\hProt$, if all other parties follow $\hProt$. 
If all parties follow the protocol, all outcomes are $\Deal$. If any party $v$ has outcome
$\outcome \succ \Dealv{v}$, then in  $\outcome$ there needs to be an incoming arc $(u,v)$ but some outgoing arc
$(v,w)$ must be missing. (This holds for any preference poset, by properties~(p.1) and~(p.2).)
In Herlihy's protocol a vertex triggers an outgoing arc only if all its incoming arcs are triggered. 
So if all parties other than $v$ follow the protocol, then we get a contradiction by considering a path from $w$ to $u$
(following an argument similar to the proof of Theorem~\ref{thm: h-uniformity = uniformity}). 

Note that this argument does not work for larger coalitions. In fact, in the swap system example
discussed earlier in this section, $\hProt$, nor any other protocol, is a strong Nash equilibrium.

\smallskip

\myparagraph{Characterization}
Let $\swapSys = (\DG, \prefP)$ be a swap system for a set of parties $V$,  and let $\GG$ and $\HG$ be two subgraphs of $\DG$. 
$\GG$ will be called \emph{piece-wise strongly connected} if every connected component of $\GG$ is strongly connected.
$\GG$ is called \emph{spanning} if its vertex set is $V$. If $\GG$ is spanning, we say that $\HG$ \emph{dominates} $\GG$ if
$\DealDv{\HG}{v} \succeq \DealDv{\GG}{v}$ for all vertices $v$ in $\HG$.
In other words, if only the arcs in $\HG$ are triggered, then all parties in $\HG$ end in outcomes at least as good as if all their arcs in $\GG$ were triggered.
Also, $\HG$ \emph{strictly dominates} $\GG$ if, in addition, there exists a party $u$ of $\HG$ such that 
$\DealDv{\HG}{u} \succ \DealDv{\GG}{u}$.  
That is, every party in $\HG$ ends in an outcome at least as good and 
at least one party strictly improves their outcome when triggering the arcs of $\HG$ instead of $\GG$.

For example, consider the swap system $\swapSys = (\DG, \prefP)$ in Example~1. 
The subgraph $\GG_1 = \DG$ is spanning, and is strictly dominated by the subgraph $\HG$ consisting of vertices $u,v$ and arcs $(u,v)$ and $(v,u)$.
On the other hand, the subgraph $\GG_2$ that has arcs $(u,w)$, $(w,v)$ and $(v,u)$ is
spanning, and there is no subgraph of $\DG$ that strictly dominates it.


\begin{theorem}
\label{thm:uni-and-snash}
A swap system $\swapSys = (\DG, \prefP)$ has an atomic swap protocol if and only if 
there exists a spanning subgraph $\GG$ of $\DG$ with the following properties:
{(c.1)} $\GG$ is piece-wise strongly connected and has no isolated vertices,
{(c.2)} $\GG$ dominates $\DG$, and
{(c.3)} no subgraph $\HG$ of $\DG$ strictly dominates $\GG$.
\end{theorem}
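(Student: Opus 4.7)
My plan is to treat the theorem as a round-trip between atomic protocols and ``good'' spanning subgraphs: in the $(\Rightarrow)$ direction I extract a suitable $\GG$ from any atomic protocol by recording which arcs its all-follow execution triggers; in the $(\Leftarrow)$ direction I construct an atomic protocol by running $\hProt$ in parallel on each strongly connected component of $\GG$. The main obstacle, discussed below, is the strong Nash portion of $(\Leftarrow)$: after a coalition deviates, the triggered subgraph may fail to dominate $\GG$ at certain ``$\FreeRide$-like'' non-coalition vertices and must be trimmed to witness a violation of (c.3).

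For $(\Leftarrow)$, given $\GG$ satisfying (c.1)--(c.3), I define $\bbP$ to execute $\hProt$ independently on each strongly connected component of $\GG$; by (c.1) these coincide with the weakly connected components, so the instances do not interfere. Liveness is immediate: when all parties follow, each $v$ ends with outcome $\DealDv{\GG}{v}$, and (c.2) gives $\DealDv{\GG}{v}\succeq_v\DealDv{\DG}{v}$. For safety I would use the standard $\hProt$ dichotomy: a conforming $v$ either triggers no outgoing $\GG$-arc, landing in $\outcomepair{S}{\varnothing}$ for some $S\subseteq\Ain{v}^{\GG}$, or, by Lemma~\ref{obs: herlihy observation}, triggers an outgoing arc only after all of its $\GG$-incoming arcs are triggered, landing in $\outcomepair{\Ain{v}^{\GG}}{T}$ for some $T\subseteq\Aout{v}^{\GG}$; inclusive monotonicity (p.2) together with (c.2) makes both acceptable.

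The strong Nash property is the heart of the proof. Suppose a coalition $C$ deviates from $\bbP$ and obtains outcome vector $\vecoutcome\succ_C\bbP(C)$, and let $\HG$ be the subgraph of triggered arcs. For $v\in C$ the outcome is $\DealDv{\HG}{v}\succeq_v\DealDv{\GG}{v}$, strict at some $u^*\in C$. For $v\in V\setminus C$, since $v$ only engages with $\GG$-arcs, Lemma~\ref{obs: herlihy observation} yields $\Ain{v}^{\HG}\subseteq\Ain{v}^{\GG}$, $\Aout{v}^{\HG}\subseteq\Aout{v}^{\GG}$, and the implication $\Aout{v}^{\HG}\ne\varnothing\Rightarrow\Ain{v}^{\HG}=\Ain{v}^{\GG}$. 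Non-$C$ vertices with $\Aout{v}^{\HG}\ne\varnothing$ (``Case A'') already dominate $\DealDv{\GG}{v}$ by monotonicity, but those with $\Aout{v}^{\HG}=\varnothing$ (``Case B'') need not be comparable, so $\HG$ itself may fail to dominate $\GG$. I would patch this by forming $\HG'$ from $\HG$ by deleting every Case B non-$C$ vertex together with its incident arcs; since these vertices have no outgoing in $\HG$, only the outgoing sets of surviving vertices shrink, so monotonicity preserves domination of $\DealDv{\GG}{v}$ for every remaining $v$. The strict vertex $u^*$ stays in $V(\HG')$ because $\DealDv{\HG}{u^*}\succ\DealDv{\GG}{u^*}\succeq\NoDealv{u^*}$ together with monotonicity forces $\Ain{u^*}^{\HG}\ne\varnothing$, and the pruning never touches $u^*$'s incoming side. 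Hence $\HG'$ strictly dominates $\GG$, contradicting (c.3).

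For $(\Rightarrow)$ I take $\GG$ to be the subgraph of arcs triggered in the all-follow execution of the atomic protocol $\bbP$. Condition (c.2) is immediate from liveness, and if $v$ were isolated in $\GG$ then $\DealDv{\GG}{v}=\NoDealv{v}$ would contradict $\DealDv{\GG}{v}\succeq\DealDv{\DG}{v}\succ\NoDealv{v}$, so $\GG$ is spanning with no isolated vertices. For (c.3), any $\HG$ strictly dominating $\GG$ would let the coalition $C=V(\HG)$ deviate by triggering exactly $\HG$'s arcs among themselves and refusing every other interaction, producing an outcome vector strictly better than $\bbP(C)$ and contradicting strong Nash. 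For the piece-wise strong connectivity in (c.1), I assume some weakly connected component $\GG_i$ of $\GG$ has a non-trivial condensation and pick any sink SCC $S$ of $\GG_i$; setting $\HG=\GG\setminus S$, sink-ness gives $\Ain{v}^{\HG}=\Ain{v}^{\GG}$ and $\Aout{v}^{\HG}\subseteq\Aout{v}^{\GG}$ for every $v\in V(\HG)$, so monotonicity yields dominance. Weak connectivity forces some $u\notin S$ with a $\GG$-arc into $S$, and the same monotonicity argument rules out $\Ain{u}^{\GG}=\varnothing$, placing $u$ in $V(\HG)$ and giving the required strict dominance. Applying the deviation from (c.3) to this $\HG$ contradicts strong Nash of $\bbP$, completing the proof of (c.1).
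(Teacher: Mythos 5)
Your proposal is correct, and its skeleton matches the paper's proof exactly: in $(\Rightarrow)$ you take $\GG$ to be the arcs triggered in the all-follow execution, and in $(\Leftarrow)$ you run $\hProt$ on each strongly connected component of $\GG$, with the same liveness/safety arguments. The two places where you genuinely diverge are the construction of the (c.3)-violating witness. For the strong Nash half of $(\Leftarrow)$, the paper first enlarges the deviating coalition $C$ to a \emph{maximal} one (absorbing every conforming party whose outcome lies in $\smbraced{\DealDv{\GG}{}}\cup\DiscountDv{\GG}{}$), then shows no triggered arc crosses the boundary of $C$, and takes $\HG$ to be the internal triggered arcs on $C$; you instead keep $C$ fixed, take the full triggered subgraph, use Lemma~\ref{obs: herlihy observation} to show every conforming vertex with a triggered outgoing arc already dominates its $\GG$-outcome, and prune the remaining ``$\FreeRide$-like'' conforming vertices, which is sound because those vertices have no triggered outgoing arcs, so deleting them only shrinks outgoing sets of survivors and (p.2) preserves domination. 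Your route avoids the maximality step and the boundary-arc analysis at the cost of a one-pass pruning argument; both yield a strict dominator contradicting (c.3). For piece-wise strong connectivity in $(\Rightarrow)$, the paper lets a \emph{source} SCC deviate directly (same incoming, strictly fewer outgoing arcs, contradicting strong Nash), whereas you delete a \emph{sink} SCC to exhibit a strict dominator and then reuse your (c.3)-style deviation of its complement; this unifies (c.1) and (c.3) under one deviation template, though the paper's direct source-SCC deviation is slightly more economical (your side remarks about $\Ain{u^*}\neq\varnothing$ and about $u$'s incoming arcs are superfluous but harmless, since pruning never removes coalition vertices and strict dominance at $u$ needs only that its outcome changes while (p.2) gives weak improvement).
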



\begin{proof}
$(\Rightarrow)$
Let $\bbP$ be an atomic swap protocol for $\swapSys$. Define
$\GG$ to be the subgraph whose vertex set is $V$ and whose arcs are the arcs triggered in an execution of $\bbP$
where all parties follow the protocol. 
By definition of $\bbP$'s atomicity, $\GG$ is spanning.


We first show property (c.1). First, $\GG$ cannot have any isolated
vertices, since any isolated vertex $v$ of $\GG$ would have outcome $\NoDealDv{\DG}{v}$ when all parties follow $\bbP$.
This would contradict the uniformity (the liveness condition) of $\bbP$.
Second, if $\GG$ had a connected component $B$ that is not strongly connected, then 
$B$ would contain a strongly connected component $C$ of $\GG$ that has no arcs of $\GG$
coming from $V\setminus C$ but has at least one arc of $\GG$ going to $V\setminus C$.
We could then consider another run of $\bbP$ in which 
the parties in $C$ ignore $\bbP$ entirely and simply trigger the arcs of $\GG$ that are within $C$.
By the inclusive monotonicity property~(p.2) of swap systems, this would strictly improve
the outcome vector of $C$, contradicting $\bbP$ being a strong Nash equilibrium. 
We can thus conclude that such $B$ cannot exist, completing the proof that $\GG$ is piece-wise strongly connected.


Next, we consider property~(c.2). By the uniformity  (liveness)
of $\bbP$, every party $v$ must end in outcome $\DealDv{\DG}{v}$ or better when all parties follow $\bbP$.
The arcs that are triggered at the conclusion of $\bbP$ are exactly the arcs in $\GG$.
Therefore $\DealDv{\GG}{v} \succeq \DealDv{\DG}{v}$, for all parties $v$.


Finally, we show property~(c.3). Suppose there is a subgraph $\HG$ that strictly dominates $\GG$, towards contradiction.
Let $C$ be the set of vertices of $\HG$. Modify the behavior of the parties in $C$ to ignore $\bbP$ and
instead trigger exactly the arcs of $\HG$, giving $C$ the outcome vector $(\DealDv{\HG}{v})_{v\in C}$.
Then, $\bbP(C) = (\DealDv{\GG}{v})_{v\in C} \prec_{C} (\DealDv{\HG}{v})_{v\in C}$,
as $\HG$ strictly dominates $\GG$.
This contradicts the assumption that $\bbP$ is a strong Nash equilibrium, proving that $\HG$ does not exists.


$(\Leftarrow)$
Suppose that $\GG$ is a spanning subgraph that satisfies properties (c.1), (c.2) and~(c.3). We show that then there is an
atomic protocol for $\swapSys$. 

Let $\swapSys^\GG$ be the h-swap system with digraph $\GG$.
Our protocol, denoted $\hProt_\GG$, simply executes Herlihy's protocol $\hProt$ on $\swapSys^\GG$.
For simplicity, assume that $\GG$ is strongly connected; otherwise we can apply our reasoning below
to each strongly connected component of $\GG$ separately.
By the h-liveness condition of $\hProt_\GG$, if all parties follow $\hProt_\GG$ then each will end up in outcome $\DealDv{\GG}{}$.
Also, any party $v$ that follows $\hProt_\GG$ will not have any of its arcs outside $\GG$
triggered and, by the safety property of $\hProt_\GG$, will end up in an outcome that is acceptable in $\swapSys^\GG$, 
that is in
$\smbraced{\NoDealDv{\GG}{v}} \cup \FreeRideDv{\GG}{v} \cup \smbraced{\DealDv{\GG}{v}} \cup \DiscountDv{\GG}{v}$.

When comparing outcomes in the argument that follows, we will use notation ``$\prec$'' for the preference 
relation in the original swap system $\swapSys$ (that is, \emph{not} in the auxiliary system $\swapSys^\GG$). 
Similarly, unless stated otherwise, the term ``acceptable'' also refers to the acceptability of an outcome in $\swapSys$.


We first show that $\hProt_\GG$ is uniform. Suppose that every party follows $\hProt_\GG$. 
Then, by the h-uniformity of $\hProt_\GG$, the outcome of each party $v$ will be $\DealDv{\GG}{v}$. 
Using the assumptions that $\GG$ is spanning and that it dominates $\DG$, we obtain that
$\DealDv{\GG}{v} \succeq \DealDv{\DG}{v}$ for all parties $v$, so $\hProt_\GG$ indeed satisfies the liveness condition.


Next, we deal with the safety condition.
Using the properties of $\hProt_\GG$ established above, if a party $v$ conforms to $\hProt_\GG$ then
we have two cases. Either the outcome $\outcome$ of $v$ satisfies
$\outcome \in \smbraced{\NoDealDv{\GG}{v}} \cup \FreeRideDv{\GG}{v}$, in which case
$\outcome \in \smbraced{\NoDealDv{\DG}{v}} \cup \FreeRideDv{\DG}{v}$ as well (because no edges of $v$ outside $\GG$ are triggered),
so $\outcome\succeq \NoDealDv{\DG}{v}$, that is $\outcome$ is acceptable. 
Or $\outcome \in \smbraced{\DealDv{\GG}{v}} \cup \DiscountDv{\GG}{v}$, in which case,
using the monotonicity property~(p.2) for $\swapSys$ and assumption~(c.2), we obtain
$\outcome \succeq \DealDv{\GG}{v}\succeq \DealDv{\DG}{v} \succ \NoDealDv{\DG}{v}$; that is $\omega$ is acceptable in this case as well.
We conclude that $\hProt_\GG$ satisfies the safety property, completing the proof that $\hProt_\GG$ is uniform.


It remains to show that $\hProt_\GG$ is a strong Nash equilibrium for $\swapSys$. Assume that it is not, towards contradiction.
Then there exists a coalition $C \subseteq V$ that, by deviating from $\hProt_\GG$,
can end in an outcome vector $\vecoutcome \succ (\DealDv{\GG}{v})_{v \in C}$, even though all parties outside $C$ follow $\hProt_\GG$.
We can assume $C$ is maximal, in the sense that each party outside of $C$ ends in an outcome that is not 
$\DealDv{\GG}{}$  nor in $\DiscountDv{\GG}{}$. 
Otherwise, we can add those parties to $C$ and the relation $\vecoutcome \succ (\DealDv{\GG}{v})_{v \in C}$ will be preserved.

We first show that no arc $(u,v)\in A$ entering $C$ from outside (that is $u \in V \setminus C$ and $v \in C$)  is triggered.
Assume such an arc is triggered, towards contradiction.
Firstly, $(u,v)$ must be in $\GG$, otherwise $u$ would not be following $\hProt_\GG$ by creating/triggering this arc.
By the h-safety property of $\hProt$ in $\swapSys^\GG$, 
$\hProt_\GG$ guarantees that $u$ must end up in an outcome acceptable in $\swapSys^\GG$.
This means that
$u$'s outcome is in $\smbraced{\DealDv{\GG}{u}}\cup \DiscountDv{\GG}{u}$, contradicting the assumption that $C$ is maximal.
So, indeed, $(u,v)$ cannot be triggered.

Further, without loss of generality we can assume that no arc from $C$ to $V\setminus C$
is triggered. This is because, as we just showed, for each $v \in C$, $v$ only receives arcs from other parties in $C$.
Then, no member of $C$ can have its outcome worsened if $v$ changes its behavior and does not trigger any arc to $V\setminus C$.

Thus all arcs that appear in $\vecoutcome$ are between members of $C$. 
Let $\HG$ be the subgraph with vertex set $C$ and the arcs that are in $\vecoutcome$, that is $\vecoutcome =  (\DealDv{\HG}{v})_{v \in C}$.
Since $\vecoutcome \succ (\DealDv{\GG}{v})_{v \in C}$, then $\DealDv{\HG}{v} \succeq \DealDv{\GG}{v}$ for all $v\in C$
and $\DealDv{\HG}{w} \succ \DealDv{\GG}{w}$ for some $w\in C$.
This means that $\HG$ strictly dominates $\GG$, contradicting~(c.3).
We conclude that no such $C$ exists, and thus $\hProt_\GG$ is a strong Nash equilibrium protocol.
\end{proof}

\begin{figure*}
\centering
\includegraphics[width=6in]{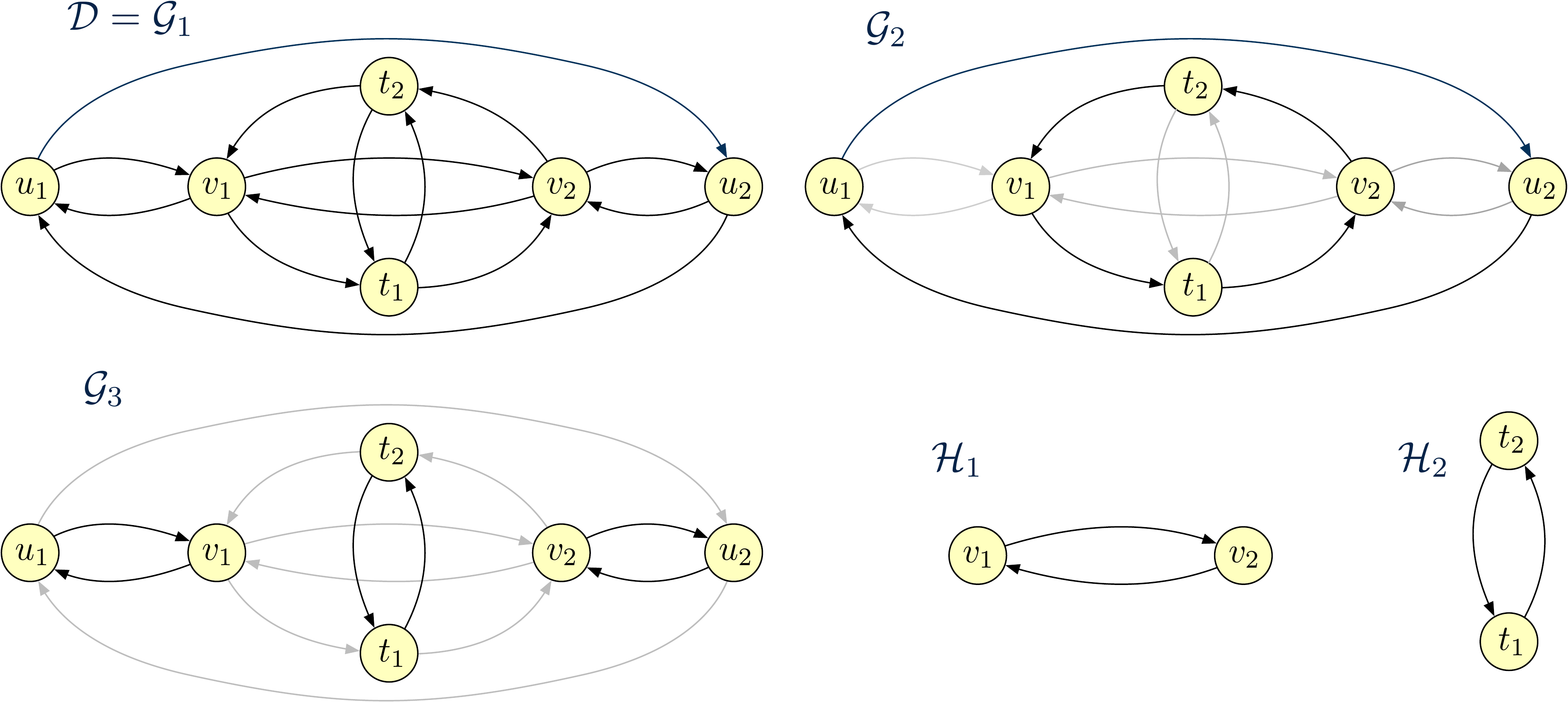} 
\caption{Digraph $\DG$ in the example illustrating Theorem~\ref{thm:uni-and-snash}.}
\label{fig:multi candidates}
\end{figure*}

\smallskip
\emph{Comment:} As some readers may have noticed, the proof of the $(\Rightarrow)$ implication in Theorem~\ref{thm:uni-and-snash}
does not use the safety property of protocol $\bbP$. What this shows, in essence, is that in our setting of swap systems,
a swap protocol that
has the liveness and strong Nash equilibrium properties can be modified to also satisfy the safety property. 
\smallskip

With Theorem~\ref{thm:uni-and-snash} established,
we can determine if a given swap system $\swapSys$ permits an atomic protocol.
Additionally, if it does, we can define such a protocol.
Algorithm~\ref{alg:gen-proto} describes how to check if a given swap protocol permits an atomic protocol.
If it does not, it returns -1.
Otherwise, it returns a set of strongly connected components.
Then, the atomic swap protocol is running $\hProt$ with each component as the underlying graph.

\begin{algorithm}
    \caption{Generalized Atomic Swap Protocol}\label{alg:gen-proto}
    \textbf{Input:} Swap System $\swapSys = (\DG, \prefP)$ \\
    \textbf{Output:} set of strongly connected components \textbf{or} -1
    \begin{algorithmic}[1]
    \For{every spanning subgraph $\GG$ of $\DG$}
        \If{$\GG$ is piece-wise strongly connected}
        \Comment{c.1}
            \If{$\GG$ dominates $\DG$}
            \Comment{c.2}
                \For{every subgraph $\HG$ of $\DG$}
                \Comment{c.3}
                    \If{$\HG$ strictly dominates $\GG$}
                        \State \Return -1
                    \EndIf
                \EndFor
                \State \Return $\braced{C \;|\; C \text{ is an SCC of } \GG}$
            \EndIf
        \EndIf
    \EndFor
    \State \Return -1
\end{algorithmic}
\end{algorithm}


\medskip

\paragraph{Example~3}
\emph{(Example~1 continued)}
To illustrate Theorem~\ref{thm:uni-and-snash}, consider again the swap system $\swapSys = (\DG, \prefP)$ in Example~1. 
Let $\GG_1 = \DG$. Then $\GG_1$ is spanning, satisfies conditions~(c.1) and~(c.2), but it does not satisfy
condition~(c.3) because it is strictly dominated by subgraph $\HG$ consisting of vertices $u,v$ and
arcs $(u,v)$ and $(v,u)$. .

But we can take instead subgraph $\GG_2$ consisting of arcs $(u,w)$, $(w,v)$ and $(v,u)$.
Then $\GG_2$ is spanning, satisfies~(c.1) and~(c.2), and there is no subgraph of $\DG$ that strictly
dominates $\GG_2$, so~(c.3) holds as well. Therefore $\swapSys$ has an atomic swap protocol.
This protocol will completely ignore arcs outside $\GG_2$. It will execute Herlihy's protocol,
described in Section~\ref{sec:h-swap systems}, using $\GG_2$ as the underlying graph. In fact, since $\GG_2$ is a simple cycle, 
the full protocol of Herlihy is not needed --- the simpler protocol that uses only one leader
and does not need signatures can be used instead, see~\cite{Herlihy18}.  Roughly, this
protocol chooses one leader, say $u$, who creates its secret, then the smart contracts
based on this (hashed) secret are created by parties $u$, $v$ and $w$ on their outgoing arcs,
with decreasing time-out values, and then the counter-parties claim the assets in these
contracts, one by one, in the reverse order.  


\medskip

\paragraph{Example~4}
We now give a larger example. Consider the swap system $\swapSys = (\DG, \prefP)$ with
digraph $\DG$ in~\autoref{fig:multi candidates} and with preference posets defined as follows.
For $i=1,2$:
\begin{itemize}[leftmargin=0.1in]
    \item The preference poset of ${u_i}$ is generated by
	 $\Dealv{u_i} \prec \outcomepair{v_{i}}{v_{i}}$ and  $\Dealv{u_i} \prec \outcomepair{u_{3-i}}{u_{3-i}}$.
	 \item The preference poset of ${v_i}$ is generated by $\Dealv{v_i} \prec \outcomepair{u_{i}}{u_{i}}$ and 
	 	 $\Dealv{v_i} \prec \outcomepair{t_{3-i}}{t_{i}} \prec \outcomepair{v_{3-i}}{v_{3-i}}$.
	\item The preference poset of $t_i$ is generated by
	 $\Dealv{t_i} \prec \outcomepair{v_{i}}{v_{3-i}}$ and $\Dealv{t_i} \prec \outcomepair{t_{3-i}}{t_{3-i}}$.
\end{itemize}

We consider three candidates for the spanning sugraph $\GG$.
One candidate is $\GG_1 = \DG$. It's obviously
spanning and it satisfies conditions~(c.1) and~(c.2) from Theorem~\ref{thm:uni-and-snash}.  
However, it is strictly dominated by several subgraphs including the following two:
subgraph $\HG_1$ consisting of $v_1$ and $v_2$ with arcs $(v_1,v_2)$ and $(v_2,v_1)$,
and subgraph $\HG_2$ consisting of $t_1$ and $t_2$ with arcs $(t_1,t_2)$ and $(t_2,t_1)$.

Another candidate $\GG_2$ consists of arcs $(u_i,u_{3-i})$, $(v_i,t_i)$ and $(t_{i},v_{3-i})$. for $i=1,2$.
$\GG_2$ is spanning and it satisfies condition~(c.1). By inspecting the preferences of each vertex,
it also satisfies~(c.2). But it is strictly dominated by $\HG_1$.

The third candidate $\GG_3$ consists of arcs $(u_i,v_i)$, $(v_i,u_i)$, and $(t_i,t_{3-i})$, for $i = 1,2$.
It is also spanning and satisfies conditions~(c.1) and~(c.2). 
Also, the outcome of each vertex in $\GG_3$ is maximal in its preference poset, so
there is no subgraph of $\DG$ that strictly dominates $\GG_3$. Thus, by Theorem~\ref{thm:uni-and-snash}, 
$\swapSys$ has an atomic swap protocol. This protocol is obtained by running $\hProt$ in $\GG_3$.



\section{NP-Hardness}
\label{sec:np_hardness_results}


Now that we have characterized the swap systems that permit an atomic protocol, 
a natural next question is the complexity of the corresponding decision problem:
given a swap system, does it permit an atomic protocol?
In the next two sections, we consider this.
In this section, we define the corresponding decision problem and show it is in $\NP$-hard.
In the following section, we tighten this classification to $\SigmaTwoP$-complete.
Although showing the problem is $\SigmaTwoP$-complete would imply it is $\NP$-hard,
we first present the $\NP$-hardness proof since it is more digestible,
and then present the more involved $\SigmaTwoP$-completeness proof.

Let $\swapAtomic$ be the following decision problem: The input is a swap system $\swapSys = (\DG, \prefP)$,
where $\DG$ is a (weakly) connected digraph with no vertices of in-degree or out-degree $0$. The objective
is to decide whether $\swapSys$ has an atomic swap protocol.

\begin{theorem}
\label{thm:nphard}
$\swapAtomic$ is $\NP$-hard, even for swap systems $\swapSys = (\DG, \prefP)$ in which digraph $\DG$ is strongly connected.
\end{theorem}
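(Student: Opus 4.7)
The plan is to reduce from $\ThreeSat$. Given a 3-CNF formula $\phi$ over variables $x_1,\dots,x_n$ with clauses $C_1,\dots,C_m$, I will construct in polynomial time a swap system $\swapSys_\phi = (\DG_\phi,\prefP_\phi)$ whose digraph is strongly connected and such that, by Theorem~\ref{thm:uni-and-snash}, $\swapSys_\phi$ admits an atomic protocol if and only if $\phi$ is satisfiable. The idea is to design variable and clause gadgets so that the spanning subgraphs $\GG$ of $\DG_\phi$ satisfying (c.1) and (c.2) correspond exactly to truth assignments of $\phi$, and condition (c.3) is violated precisely when some clause is left unsatisfied.

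For each variable $x_i$, I build a variable gadget containing two strongly connected ``choice'' subgraphs $\GG_i^T$ and $\GG_i^F$, which share some vertices that also serve as the literal vertices used by the clause gadgets. The preferences inside the gadget are crafted so that any spanning subgraph satisfying (c.1) and (c.2) must restrict to exactly one of $\GG_i^T$ or $\GG_i^F$ on that gadget, thereby encoding a truth assignment. For each clause $C_j=\ell_{j,1}\vee\ell_{j,2}\vee\ell_{j,3}$, I attach a clause gadget together with non-generic preferences on its three literal-owning parties declaring that they jointly prefer a small alternative subgraph $\HG_j$ over their $\Deal$ outcome, where $\HG_j$ uses only arcs of $\DG_\phi$ that remain ``free'' when none of $C_j$'s literals is set to true. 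Thus $\HG_j$ strictly dominates the candidate $\GG$ precisely when $C_j$ is unsatisfied, embedding the clause constraint into condition (c.3). To guarantee that $\DG_\phi$ is strongly connected as required by the theorem, I add a backbone cycle through all gadgets whose arcs carry only generic preferences, so it is always included in $\GG$ and cannot itself participate in any strictly dominating subgraph.

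The main obstacle is proving the two directions of the equivalence without introducing unintended dominators. For the ``if'' direction I need to argue that the $\GG$ corresponding to a satisfying assignment has no strict dominator $\HG$ built by splicing arcs across multiple gadgets; I plan to localize every non-generic preference pair to a single gadget so that any candidate $\HG$ decomposes into gadget-local pieces and can be refuted one gadget at a time. For the ``only if'' direction I need to rule out spurious spanning subgraphs that do not arise from a truth assignment (e.g., mixing $\GG_i^T$ and $\GG_i^F$ within a single variable gadget, or omitting both entirely); these will either fail (c.2) because they miss a required $\Deal$-level outcome of some vertex, or fail (c.3) through a dominator constructed inside the variable gadget itself. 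Balancing these two directions while keeping the gadgets independent is the delicate part of the construction, and the localization of non-generic preferences is the main technical lever I expect to rely on.
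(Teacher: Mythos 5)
Your top-level strategy --- invoke Theorem~\ref{thm:uni-and-snash} and reduce from satisfiability via variable and clause gadgets --- matches the paper's, but the way you distribute the work among conditions (c.1)--(c.3) has a genuine gap. First, the claim that (c.1) and (c.2) alone force each variable gadget to restrict to exactly one of $\GG_i^T$ or $\GG_i^F$ cannot be realized: since your $\DG_\phi$ is strongly connected, the spanning subgraph $\GG=\DG_\phi$ is strongly connected and satisfies $\DealDv{\GG}{v}=\DealDv{\DG}{v}$ at every vertex, so it always passes (c.1) and (c.2) while mixing both choices in every gadget; exclusivity of the choice can only come from (c.3), i.e.\ from a local strictly dominating subgraph that appears whenever both choices are taken (this is exactly what the paper's two-vertex $s_i$--$t_i$ dominator does). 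Second, your clause mechanism misreads the domination relation: whether $\HG_j$ strictly dominates $\GG$ depends only on comparing $\DealDv{\HG_j}{v}$ with $\DealDv{\GG}{v}$ under the posets at the vertices of $\HG_j$, and has nothing to do with which arcs ``remain free'' (an arc may lie in both $\GG$ and $\HG_j$). To make $\HG_j$ dominate precisely when $C_j$ is unsatisfied you must engineer the literal parties' outcomes in $\GG$ to differ between the true and the false choice, and their posets to rank the $\HG_j$-outcome above the false-choice outcome but not above the true-choice one; none of that machinery is specified, and it conflicts with your localization plan, since the enabling non-generic preferences sit on literal vertices shared by many clauses and $\HG_j$ necessarily spans three variable gadgets plus a clause gadget, so candidate dominators do not decompose into gadget-local pieces.

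For contrast, the paper does not load the clause constraint onto (c.3) at all: clause satisfaction is enforced by (c.2) (each clause vertex $c_j$ has generators $\Dealv{c_j}\prec\outcomepair{b,\tildex_i}{b}$, so $\GG$ can dominate $\DG$ at $c_j$ only if $c_j$ receives an arc from a true literal), while (c.3) is used only locally to forbid choosing both literals of a variable, and a core vertex $b$ with an h-swap poset both yields strong connectivity and pins any dominator containing $b$ down to $\GG$ itself. Your (c.3)-centric route is not hopeless --- the paper's $\SigmaTwoP$-hardness construction shows such a mechanism can be made to work, but it requires substantial extra apparatus (auxiliary vertices such as $\tildez_i$, $p_g$, $q_j$ and a global cycle encoding ``all terms killed''). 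As written, the proposal leaves exactly the decisive steps --- forcing the per-variable choice, keying the dominator to falsified literals, and ruling out spurious dominators for the satisfying-assignment $\GG$ --- unresolved, so it does not yet constitute a proof.
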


\begin{proof}
The proof is by showing a polynomial-time reduction from $\CNF$. Recall that in $\CNF$
we are given a boolean expression $\alpha$ in conjunctive normal form, and
the objective is to determine whether there is a truth assignment that satisfies $\alpha$. 
In our reduction we convert $\alpha$ into a swap system $\swapSys = (\DG, \prefP)$
such that $\alpha$ is satisfiable if and only if  $\swapSys$ has an atomic swap protocol.

Let $x_1,x_2,...,x_n$ be the variables in $\alpha$.  The negation of $x_i$ is denoted $\barx_i$.
We will use notation $\tildex_i$ for an unspecified literal of variable $x_i$,
that is $\tildex_i \in \braced{x_i,\barx_i}$. Let $\alpha = c_1 \vee c_2 \vee ... \vee c_m$, where each $c_j$ is a clause.
Without loss of generality we assume that in each clause all literals  are different.

We first describe the reduction. 
The fundamental approach is similar to other reductions from $\CNF$: $\DG$ will consist of
so-called ``gadgets'' which are subgraphs used to simulate the role of variables and clauses.
We then add some connections between these gadgets and specify appropriate preference pairs to
assure that the resulting swap system $\swapSys$ satisfies the required property, given above.

Specifically, in $\DG$ there will be $n$ gadgets corresponding to variables, 
 $m$ gadgets corresponding to clauses, and one more vertex called the \emph{core vertex}.
The {\emph{$x_i$-gadget}} has vertices $x_i$ and $\barx_i$, that represent the literals of variable $x_i$,
plus two additional vertices $s_i$ and $t_i$. 
Its internal arcs are $(s_i,x_i)$, $(s_i,\barx_i)$, $(s_i,t_i)$, and $(t_i,s_i)$.
The {\emph{$c_j$-gadget}} has two vertices $c_j$ and $a_j$, with internal arcs $(c_j,a_j)$, $(a_j,c_j)$.
(See Figure~\ref{fig: np-hardness gadgets}.) We then add the following arcs:
For each clause $c_j$ and each literal $\tildex_i$ in $c_j$, we add arc $(\tildex_i,c_j)$.
The core vertex, denoted $b$, is connected by arcs to and from each vertex in the above gadgets.


\begin{figure*}
\centering
\includegraphics[width=4.2in]{./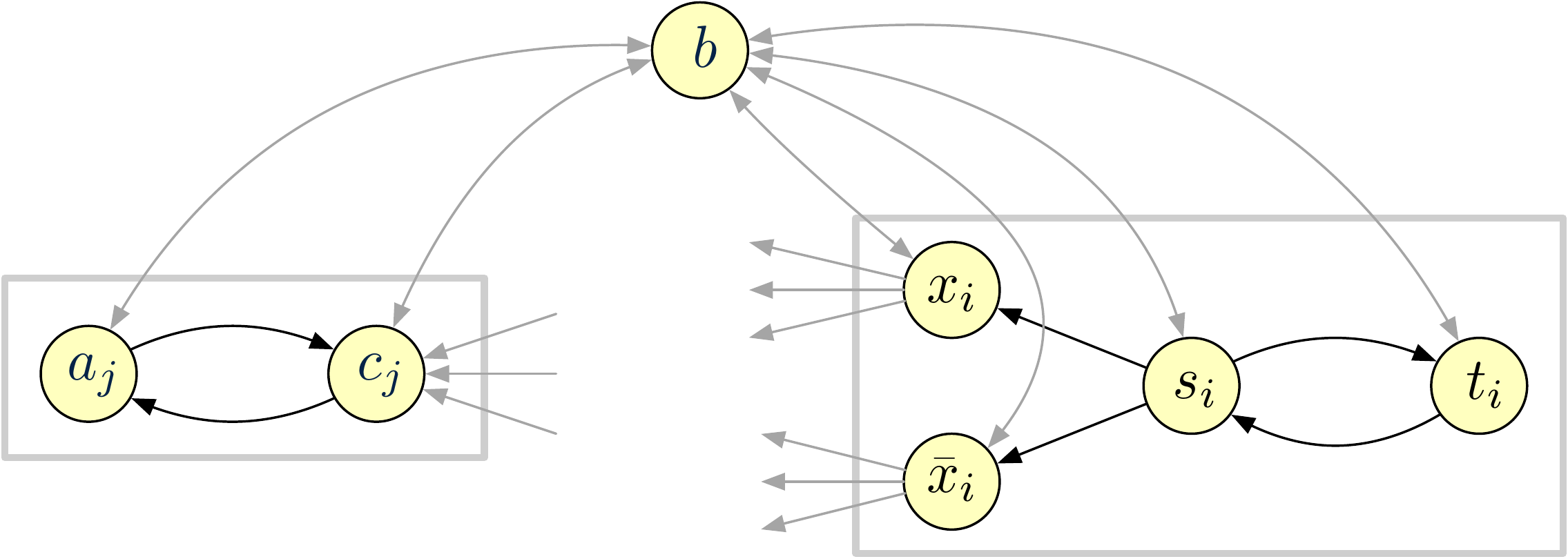} 
\caption{The variable and clause gadgets in the proof of Theorem~\ref{thm:nphard}. The arcs to and from 
the core vertex $b$ are shown as bi-directional arcs.}
\label{fig: np-hardness gadgets}
\end{figure*}


Next, we describe the preference posets $\prefP_v$, for each vertex $v$ in $\DG$. 
(See Figure~\ref{fig: np-hardness preferences}.) As explained in Section~\ref{sec:swap systems},
an outcome $\outcomepair{\outcomein{}}{\outcomeout{}}$
of a vertex $v$ is specified by lists $\outcomein{}$ and $\outcomeout{}$ of its in-neighbors and out-neighbors,
respectively, and any preference poset can be uniquely defined by an appropriate set of generators.

The center vertex $b$'s preference poset is generated by all relations $\outcome \prec \NoDealv{b}$,
for $\outcome\in\Underwaterv{b}$. (This is the same poset as in h-swap systems.)
For each vertex $s_i$, its preference poset is generated by relations
	$\outcomepair{b,t_i}{t_i,\barx_i} \prec \outcomepair{t_i}{t_i}$,
	$\outcomepair{t_i}{t_i}  \prec \outcomepair{b}{b,x_i}$,
	and
	$\outcomepair{t_i}{t_i} \prec \outcomepair{b}{b,\barx_i}$. 
For each vertex $t_i$, its generators are $\Dealv{t_i} \prec \outcomepair{b}{b}$ and
$\outcomepair{b,s_i}{b} \prec \outcomepair{s_i}{s_i}$.  Each vertex $\tildex_i\in\braced{x_i,\barx_i}$
has one generator $\Dealv{\tildex_i}\prec \outcomepair{b}{b}$.
Each vertex $c_j$ has generators
$\Dealv{c_j} \prec \outcomepair{b,\tildex_i}{b}$, for each literal $\tildex_i$ in $c_j$.
The only generator of each vertex $a_j$ is $\Dealv{a_j} \prec \outcomepair{b}{b}$.

With this, the description of $\swapSys$ is complete. The construction of $\swapSys$ clearly takes time
that is polynomial in the size of $\alpha$.


\begin{figure*}
\centering
\includegraphics[width=5.75in]{./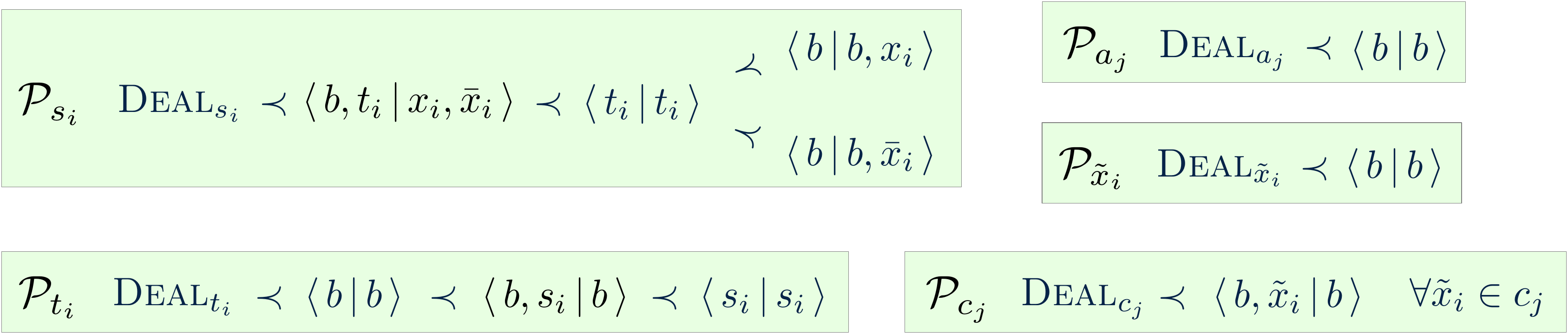} 
\caption{The specifications of preference posets in the reduction. 
We include in the figure some generic preferences for $s_i$ and $t_i$, to
illustrate the relationship of some outcomes to the corresponding $\Deal$ outcome.
}
\label{fig: np-hardness preferences}
\end{figure*}


\smallskip

Applying Theorem~\ref{thm:uni-and-snash},  it remains to show that
$\alpha$ is satisfiable if and only if $\DG$ has a spanning subgraph $\GG$ with the following properties:
{(c.1)} $\GG$ is piece-wise strongly connected and has no isolated vertices,
{(c.2)} $\GG$ dominates $\DG$, and
{(c.3)} no subgraph $\HG$ of $\DG$ strictly dominates $\GG$.

\smallskip

$(\Rightarrow)$
Suppose that $\alpha$ is satisfiable, and consider some satisfying assignment for $\alpha$.
Using this assignment, we construct a spanning subgraph $\GG$ of $\DG$ that satisfies the three conditions (c.1)-(c.3) above.

Graph $\GG$ will contain all vertices from the above construction and all arcs that connect
$b$ to all other vertices, in both directions. This makes $\GG$ spanning and strongly connected,
so (c.1) holds. Other arcs of $\GG$ are defined as follows.
For each true literal $\tildex_i$, add to $\GG$ arc $(s_i,\tildex_i)$.
For each clause $c_j$ and each true literal $\tildex_i$ in $c_j$, add to $\GG$ arc $(\tildex_i,c_j)$. 

\smallskip

Condition~(c.2) can be verified through routine inspection,
by observing that $\DealDv{\DG}{v}\preceq \DealDv{\GG}{v}$ holds for each vertex $v$, directly
from the above specification of the preference posets. 
For example, for a vertex $s_i$, if $\tildex_i$ is the true literal of $x_i$ then
we have $\DealDv{\GG}{s_i} = \outcomepair{b}{b,\tildex_i} \succeq \DealDv{\DG}{s_i}$.
For any $\tildex_i$, let $C(\tildex_i)$ be the set of clauses that contain $\tildex_i$. If $\tildex_i$ is true then 
$\DealDv{\GG}{\tildex_i} = \outcomepair{b,s_i}{b,C(\tildex_i)} = \DealDv{\DG}{\tildex_i}$,
and if $\tildex_i$ is false then $\DealDv{\GG}{\tildex_i} = \outcomepair{b}{b} \succeq \DealDv{\DG}{\tildex_i}$.
For each clause $c_j$, denote by $T(c_j)$ the set of true literals in $c_j$. Since we use a satisfying
assignment, each $T(c_j)$ is non-empty. For a clause $c_j$, for any true literal $\tildex_i \in T(c_j)$, applying
the inclusive monotonicity property~(p.2) we have
$\DealDv{\GG}{c_j} = \outcomepair{b,T(c_j)}{b}  \succeq \outcomepair{b,\tildex_i}{b} \succeq \DealDv{\DG}{c_j}$.

\smallskip

It remains to verify condition~(c.3). Let $\HG$ be a subgraph of $\DG$, and suppose that
$\HG$ dominates $\GG$, that is $\DealDv{\HG}{v} \succeq \DealDv{\GG}{v}$ for all vertices $v$ in $\HG$.

We claim first that $\HG$ must contain $b$. Indeed, otherwise for any vertex $v$ of $\HG$
we would have $\DealDv{\HG}{v} = \outcome = \outcomepair{\outcomein{}}{\outcomeout{}}$ with $b\notin \outcomein{}$
and $\outcome\succeq \DealDv{\GG}{v}$.
The only vertices that have such outcomes $\outcome$ are $t_i's$. For any $t_i$, 
the only outcome $\outcome$ that has these properties is $\outcomepair{s_i}{s_i}$.
But then $\HG$ would also have to contain $s_i$ contradicting the earlier statement.
(We note that
$\DealDv{\GG}{s_i} = \outcomepair{b}{b,\tildex_i}$
where $\tildex_i$ is the true literal of $x_i$,
and
there is no strictly better outcome for $s_i$.)

So we can assume from now on that $\HG$ contains $b$. The idea of the remaining argument is to show that
the assumption that $\HG$ dominates $\GG$ implies that in fact $\HG = \GG$ --- so
$\HG$ cannot strictly dominate $\GG$. To this end,
we examine the arcs of $\DG$ one by one. For each arc $(u,v)$, we use
the relation $\DealDv{\HG}{z} \succeq \DealDv{\GG}{z}$ for some $z\in\smbraced{u,v}$,
to show that $(u,v)$ belongs to $\HG$ if and only if
it belongs to $\GG$. We will divide this argument into a sequence of claims.


\myclaim{1}{$\HG$ contains all arcs $(v,b)$ and $(b,v)$, for $v\neq b$}. 
From the definition of the preference poset of $b$, 
$\HG$ must contain all incoming arcs of $b$. This gives us that $\HG$ is spanning.
For each $v \neq b$, any outcome $\outcome\succeq \DealDv{\GG}{v}$
that has outgoing arc $(v,b)$ must also have incoming arc $(b,v)$. This proves the claim.


\myclaim{2}{$\HG$ does not contain any arc $(a_j,c_j)$ or $(c_j,a_j)$}. 
Indeed,  no outcome of $a_j$ that has arc $(a_j,c_j)$ is better than
$\outcomepair{b}{b} = \DealDv{\GG}{a_j}$, so $\HG$ cannot contain $(a_j,c_j)$.
Similarly, no outcome of $c_j$ that has arc $(c_j,a_j)$ 
is better than $\outcomepair{b,T(c_j)}{b} = \DealDv{\GG}{c_j}$, so $\HG$ cannot contain $(c_j,a_j)$.


\myclaim{3}{For each literal $\tildex_i$ in clause $c_j$, $\HG$ contains $(\tildex_i,c_j)$ iff $\tildex_i$ is true}.
Suppose first that $\tildex_i$ is a literal in $c_j$ that is true. There is no outcome of $c_j$ that does not contain $(\tildex_i,c_j)$
and is better than $\outcomepair{b,T(c_j)}{b} = \DealDv{\GG}{c_j}$, so $\HG$ must contain $(\tildex_i,c_j)$.
Next, suppose that $\tildex_i$ is false. Then no outcome of  $\tildex_i$ that contains $(\tildex_i,c_j)$
is better than  $\outcomepair{b}{b} = \DealDv{\GG}{\tildex_i}$.  Thus $\HG$ cannot contain $(\tildex_i,c_j)$.


\myclaim{4}{For each literal $\tildex_i$, $\HG$ contains arc $(s_i,\tildex_i)$ iff $\tildex_i$ is true}.
Suppose first that literal $\tildex_i$ is true. 
From the previous claim, we have that the outgoing arcs to $C(\tildex_i)$ are in $\HG$.
There is no outcome of $\tildex_i$ that contains the arcs to $C(\tildex_i)$,
does not contain arc $(s_i,\tildex_i)$, and is better than
$\outcomepair{b,s_i}{b,C(\tildex_i)} = \DealDv{\GG}{\tildex_i}$. 
Thus  $\HG$ must contain $(s_i,\tildex_i)$.
Next, suppose that $\tildex_i$ is false, and let $\bar{\tildex}_i$ be the negation of $\tildex_i$ (that is, the true literal of $x_i$).
Then there is no outcome of $s_i$ that contains arc $(s_i,\tildex_i)$ and is better than
$\outcomepair{b}{b,\bar{\tildex}_i} = \DealDv{\GG}{s_i}$. Thus $\HG$ cannot contain $(s_i,\tildex_i)$.

\myclaim{5}{$\HG$ does not contain any arc $(s_i,t_i)$ or $(t_i,s_i)$}.
There is no outcome of $s_i$ that has an arc $(s_i,t_i)$ and is better than
$\outcomepair{b}{b,\tildex_i} = \DealDv{\GG}{s_i}$, where $\tildex_i$ is the true literal of $x_i$. 
So $\HG$ cannot contain $(s_i,t_i)$.
Also, there is no outcome of $t_i$ that has arc $(t_i,s_i)$, does not have arc $(s_i,t_i)$,
and is better than $\outcomepair{b}{b} = \DealDv{\GG}{t_i}$. So $\HG$ cannot contain $(t_i,s_i)$.


\smallskip

\noindent
$(\Leftarrow)$ 
Assume now that $\DG$ has a spanning subgraph $\GG$ that satisfies properties (c.1)-(c.3). From $\GG$ we will construct
a satisfying assignment for $\alpha$.

Since $\DealDv{\GG}{b} \succeq \DealDv{\DG}{b}$, $\GG$ must contain incoming arcs of $b$ from all 
other vertices. For each $v\neq b$, any outcome of $v$ that is at least as good as $\DealDv{\DG}{v}$ and contains
arc $(v,b)$ must also contain arc $(b,v)$. So $\GG$ contains all outgoing arcs of $b$.

Next, we claim that, for each variable $x_i$, $\GG$ contains at most one of arcs $(s_i,x_i)$ and $(s_i,\barx_i)$. 
Indeed, towards contradiction, suppose that $\GG$ contains both arcs $(s_i,x_i)$ and $(s_i,\barx_i)$. 
The best possible outcome of $s_i$ with both arcs $(s_i,x_i)$ and $(s_i,\barx_i)$ is  $\outcomepair{b,t_i}{x_i,\barx_i}$,
and using the preferences of $s_i$, we obtain
$\outcomepair{t_i}{t_i}\succ \outcomepair{b,t_i}{x_i,\barx_i} \succeq \DealDv{\GG}{s_i}$.
Regarding $t_i$, we have already established that $t_i$ has arcs to and from $b$, and the best such outcome for $t_i$
is $\outcomepair{b,s_i}{b}$. Thus, using the preferences of $t_i$ we obtain
$\outcomepair{s_i}{s_i}\succ \outcomepair{b,s_i}{b} \succeq    \DealDv{\GG}{t_i}$.
So we could take $\HG$ to consist of $s_i$, $t_i$, and arcs  $(s_i,t_i)$ and $(t_i,s_i)$, and 
this $\HG$ would strictly dominate $\GG$, contradicting our assumption that $\GG$ satisfies condition~(c.3).

Using the claim in the previous paragraph, we construct a satisfying assignment for $\alpha$ 
as follows: For each variable $x_i$, set it to true if $\GG$ contains $(s_i,x_i)$;
otherwise set it to false. (Note that $\GG$ may not contain any arc $(s_i,x_i)$, $(s_i,\barx_i)$, in which
case we could set the value of $x_i$ arbitrarily.)
This truth assignment is well defined. 

We now argue that this truth assignment satisfies $\alpha$. Consider any clause $c_j$.
Vertex $c_j$ must have at least one incoming arc $(\tildex_i,c_j)$ in $\GG$, because
otherwise we couldn't have $\DealDv{\GG}{c_j}\succeq \DealDv{\DG}{c_j}$. Similarly,
if $\GG$ contains this arc $(\tildex_i,c_j)$ then it must also contain arc $(s_i,\tildex_i)$,
because otherwise we couldn't have $\DealDv{\GG}{\tildex_i}\succeq \DealDv{\DG}{\tildex_i}$.
This implies that literal $\tildex_i$ is true in our truth assignment, so clause $c_j$
is true as well.
As this holds for each clause, we can conclude that $\alpha$ is satisfied.
\end{proof}


\section{$\Sigma_2$-Completeness}
\label{sec:sigma2 completeness-main}
In the previous section, we showed that $\swapAtomic$ is $\NP$-hard.
In this section, we tighten the complexity classification of $\swapAtomic$, and show that it is in fact $\SigmaTwoP$-complete.
Recall that $\SigmaTwoP = \NP^{\NP}$ is the class of problems at the
2nd level of the polynomial hierarchy that consists of problems solvable non-deterministically in polynomial
time with an $\NP$ oracle.

Our proof is based on a reduction from a restricted variant of the $\ExistsForallDNF$ problem.
An instance of $\ExistsForallDNF$ is a boolean expression 
$\alpha = \exists \bfx \forall \bfy \beta(\bfx,\bfy)$, where 
$\bfx = (x_1,...,x_k)$ and $\bfy = (y_1,...,y_l)$ are vectors of boolean variables
and $\beta(\bfx,\bfy)$ is a quantifier-free boolean expression in
disjunctive normal form, that is $\beta(\bfx,\bfy) = \tau_1 \vee \tau_2 \vee ... \vee \tau_m$,
and each term $\tau_g$ is a conjunction of literals involving different variables. 
The goal is to determine whether $\alpha$ is true. $\ExistsForallDNF$ is a canonical $\SigmaTwoP$-complete 
problem~\cite{shafer_poly_hierarchy_compedium_2002,papadimitrou_book_1994}.
The restriction of $\ExistsForallDNF$ that we use in our proof, denoted $\ExistsForallDNFOneX$, consists of
instances $\alpha = \exists \bfx \forall \bfy \beta(\bfx,\bfy)$ where each term of
$\beta$ includes exactly one $\bfx$-literal and one or more $\bfy$-literals that involve different variables. 


\begin{lemma}\label{lem: sigma2dnf31 is sigma2 complete}
$\ExistsForallDNFOneX$ is $\SigmaTwoP$-complete.
\end{lemma}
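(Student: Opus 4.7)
Membership in $\SigmaTwoP$ is immediate, since $\ExistsForallDNFOneX$ is a syntactic restriction of $\ExistsForallDNF$, which is known to be in $\SigmaTwoP$. For hardness I plan a polynomial-time reduction directly from unrestricted $\ExistsForallDNF$. Given an instance $\alpha = \exists \bfx \forall \bfy \, \beta(\bfx, \bfy)$ with $\beta = \bigvee_g \tau_g$, the idea is to rewrite each term $\tau_g$ locally using a few fresh variables, producing an equivalent DNF $\beta'$ in which every term has exactly one $\bfx$-literal and at least one $\bfy$-literal, with all variables in that term distinct.

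First I preprocess: discard any term containing a pair of contradictory literals, deduplicate repeated literals, and output ``true'' if any term becomes empty. Then, writing $\tau_g = X_g \wedge Y_g$ with $X_g = \ell_1 \wedge \cdots \wedge \ell_{k_g}$ its $\bfx$-literals and $Y_g$ its $\bfy$-literals, I handle three cases:
\begin{itemize}
\item If $k_g = 0$, introduce a fresh $\bfx$-variable $z_g$ and replace $\tau_g$ by the two terms $z_g \wedge Y_g$ and $\bar{z}_g \wedge Y_g$.
\item If $Y_g$ is empty (and $k_g \geq 1$), introduce a fresh $\bfy$-variable $w_g^0$ and duplicate $\tau_g$ as $X_g \wedge w_g^0$ and $X_g \wedge \bar{w}_g^0$ before applying the next step if still needed.
\item If $k_g \geq 2$, apply a ``chain split'' using fresh $\bfy$-variables $w_g^1, \ldots, w_g^{k_g-1}$, producing the $k_g$ replacement terms $T_i = \ell_i \wedge Y_g \wedge \bar{w}_g^1 \wedge \cdots \wedge \bar{w}_g^{i-1} \wedge w_g^i$ for $1 \le i < k_g$, and $T_{k_g} = \ell_{k_g} \wedge Y_g \wedge \bar{w}_g^1 \wedge \cdots \wedge \bar{w}_g^{k_g-1}$.
\end{itemize}
By construction every produced term has exactly one $\bfx$-literal and at least one $\bfy$-literal on a distinct variable, and the overall blowup is polynomial.

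The heart of the proof is verifying that $\alpha$ and the transformed instance are equivalent. The point of the chain split is that for any assignment to the fresh variables $w_g^1, \ldots, w_g^{k_g-1}$, exactly one $T_i$ has its conjunction of $w$-literals satisfied, and that $T_i$ then contributes $Y_g$ only if $\ell_i$ itself is true. Consequently, if every $\ell_i$ is true the chain contributes $Y_g$ for every $w_g$-assignment (faithfully mimicking the original $\tau_g$), whereas if some $\ell_i$ is false there is a $w_g$-assignment for which the sole candidate $T_i$ is switched off, so nothing is contributed. Since the fresh $w$-variables appear only in the replacement of this one term, universal quantification over them forces the rest of $\beta$ to be a $\bfy$-tautology exactly when the original tautology condition required it. Cases (i) and (ii) are immediate from $z_g \wedge Y_g \vee \bar{z}_g \wedge Y_g \equiv Y_g$ and the analogous identity for $w_g^0$.

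I expect the main obstacle to be the case analysis for (iii): showing, for each possible subset of satisfied $\ell_i$'s under a given existential witness, that the universal quantification over the fresh $w$-variables neither creates a spurious tautology (when $\tau_g$ should be inactive) nor destroys a legitimate one (when $\tau_g$ is active). Once that per-$\bfx$ equivalence is pinned down, combining it with the trivial membership argument gives $\SigmaTwoP$-completeness.
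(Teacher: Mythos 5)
Your reduction is correct, and its core gadget is the same one the paper uses: introduce fresh \emph{universally} quantified $\bfy$-variables acting as selectors so that, for every assignment of the fresh variables, exactly one replacement term is active, whence the universal quantification forces \emph{all} of the original $\bfx$-literals of the split term to be true. Your verification sketch for the chain split (the $w$-patterns partition the assignments, every pattern is realized, and the active $T_i$ contributes $Y_g$ only if $\ell_i$ holds) is exactly the right per-$\bfx$ equivalence, so there is no gap. The differences are organizational rather than conceptual: the paper reduces from $\ExistsForallThreeDNF$ and repeatedly applies the two-$\bfx$-literal special case of your split, replacing $\tildex_p\wedge\tildex_q\wedge\tildey_r$ by $(\tildex_p\wedge\tildey_r\wedge y')\vee(\bary'\wedge\tildex_q\wedge\tildey_r)$ with a single fresh $y'$, and it disposes of terms consisting solely of $\bfx$-literals by noting that they make $\alpha$ trivially true; you instead reduce from unrestricted $\ExistsForallDNF$, perform the split in one shot with $k_g-1$ fresh variables, and pad the degenerate cases ($k_g=0$, or empty $Y_g$) with dummy fresh variables. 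Your variant is slightly more general (no detour through the 3DNF restriction) and makes the zero-$\bfx$-literal case explicit, which the paper only gestures at; the paper's variant keeps each local replacement minimal at the cost of iterating it. Both produce an equivalent polynomial-size $\ExistsForallDNFOneX$ instance, and membership in $\SigmaTwoP$ is immediate either way.
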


The proof can be found in the supplemental material.



\begin{figure*}[ht]
\centering
\includegraphics[width=4.75in]{./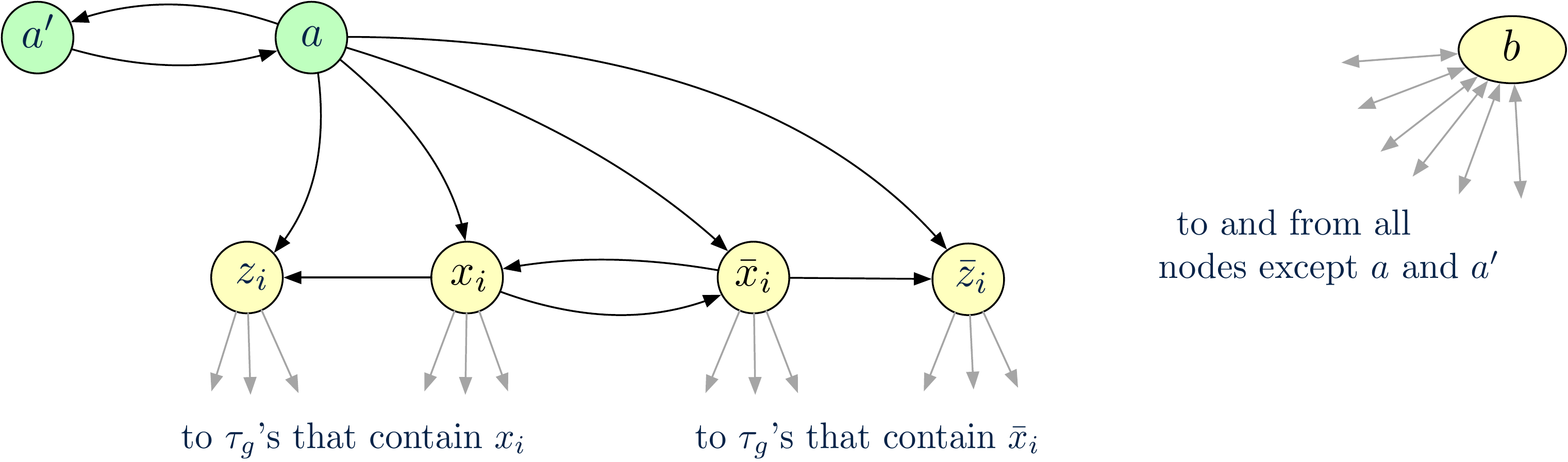} 
\caption{The construction of digraph $\DG$ in the proof of $\SigmaTwoP$-hardness. This figure shows vertices $a$, $a'$, $b$, and
an $\exists$-gadget for variable $x_i$.
The arcs to and from $b$ are shown as bi-directional arrows at $b$.}
\label{fig: main-sigma2 reduction 1}
\end{figure*}

\begin{figure*}[ht]
\centering
\includegraphics[width=6in]{./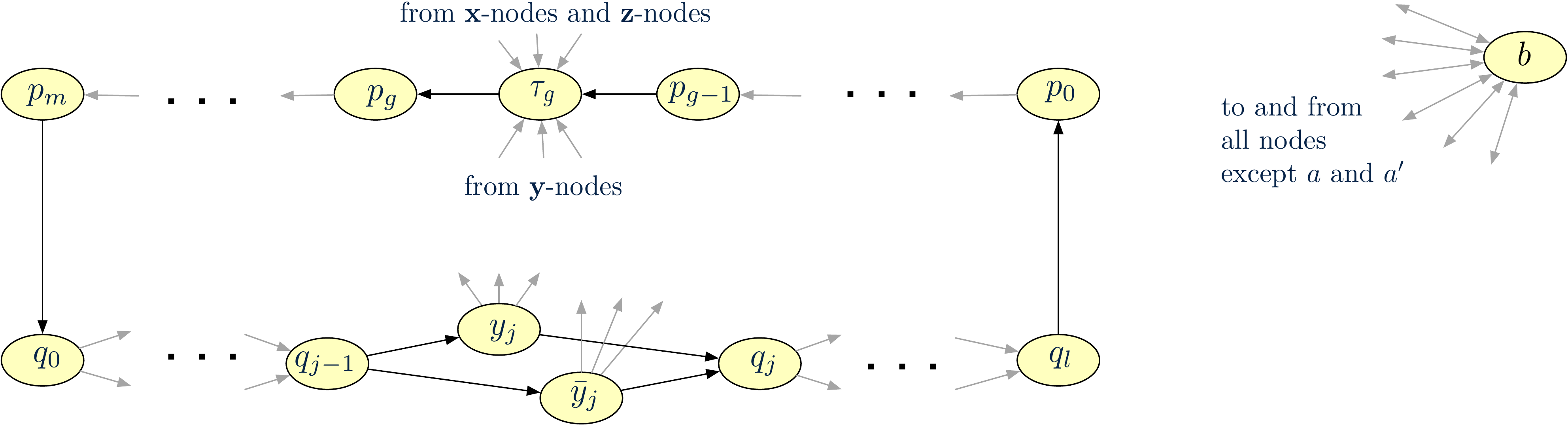} 
\caption{The construction of digraph $\DG$ in the proof of $\SigmaTwoP$-hardness. This figures shows the
$\forall$-gadget, namely the
part of $\DG$ that contains the vertices that simulate setting the values of the $y_j$-variables and the terms $\tau_g$.
The arcs to and from $b$ are shown as bi-directional arrows at $b$.}
\label{fig: main-sigma2 reduction 2}
\end{figure*}


We remember that 
$\swapAtomic$ is the decision problem of deciding whether a swap system has an atomic protocol.

\begin{theorem}
\label{thm:sigma2complete}
$\swapAtomic$ is $\SigmaTwoP$-complete.
\end{theorem}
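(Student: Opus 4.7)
The plan is two-fold: first establish $\swapAtomic \in \SigmaTwoP$ directly from the characterization of Theorem~\ref{thm:uni-and-snash}, and then prove $\SigmaTwoP$-hardness by a polynomial-time reduction from $\ExistsForallDNFOneX$, leveraging Lemma~\ref{lem: sigma2dnf31 is sigma2 complete} and the construction depicted in Figures~\ref{fig: main-sigma2 reduction 1} and~\ref{fig: main-sigma2 reduction 2}.

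For the upper bound, I would give a $\SigmaTwoP$ procedure that nondeterministically guesses a spanning subgraph $\GG$ of $\DG$, verifies conditions~(c.1) and~(c.2) in polynomial time (piece-wise strong connectivity is a standard SCC computation, and (c.2) reduces to $|V|$ comparisons in the given preference poset generators), and then checks condition~(c.3) with a single query to an $\NP$ oracle asking whether some subgraph $\HG$ of $\DG$ strictly dominates $\GG$ (which is an existential question verifiable in polynomial time). This places $\swapAtomic$ in $\NP^{\NP} = \SigmaTwoP$.

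For the lower bound, given an instance $\alpha = \exists \bfx\, \forall \bfy\, \beta(\bfx, \bfy)$ of $\ExistsForallDNFOneX$ with $\beta = \tau_1 \vee \cdots \vee \tau_m$, I would construct a swap system $\swapSys_\alpha = (\DG, \prefP)$ extending the NP-hardness construction of Theorem~\ref{thm:nphard}. The digraph $\DG$ contains a core vertex $b$, two auxiliary ``gatekeeper'' vertices $a$ and $a'$, an $\exists$-gadget for each existential variable $x_i$ (with vertices $x_i, \barx_i, s_i, t_i$ as in the NP-hardness proof) whose preferences force any spanning $\GG$ satisfying (c.1) and (c.2) to commit to a truth value by selecting exactly one of $(s_i, x_i)$ or $(s_i, \barx_i)$, a $\forall$-gadget for each universal variable $y_j$ whose preferences leave the $y_j$-choice open in $\GG$ but allow a dominating subgraph $\HG$ to commit to any specific $\bfy$-assignment, and a term-gadget for each $\tau_g$ that couples the $\exists$- and $\forall$-gadgets through $a, a'$. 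The preferences are designed so that (i) the choices inside the $\exists$-gadgets in $\GG$ represent a fixed $\bfx$-assignment, and (ii) a subgraph $\HG$ strictly dominates $\GG$ precisely when there is a $\bfy$-assignment under which every term $\tau_g$ evaluates to false, in which case $\HG$ encodes that $\bfy$ and exploits the arcs around $a, a'$ to strictly improve over $\GG$. This yields the equivalence that $\alpha$ is true iff $\swapSys_\alpha$ admits an atomic protocol, which together with Theorem~\ref{thm:uni-and-snash} establishes $\SigmaTwoP$-hardness.

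The principal obstacle will be engineering the preference posets of the $y_j$-gadgets, term-gadgets, and the gatekeeper vertices $a$ and $a'$ so that the quantifier alternation is faithfully simulated: the $\bfx$-choice must be rigidly pinned by $\GG$, the $\bfy$-choice must remain free for $\HG$, and strict domination must materialize exactly when no term $\tau_g$ holds. The forward direction of correctness, namely that a satisfying $\bfx$ yields a $\GG$ with no strictly dominating $\HG$, will mirror the NP-hardness case analysis but additionally requires that $a$ and $a'$ block any candidate $\HG$ whose encoded $\bfy$ already satisfies at least one term. The reverse direction requires showing that the rigidity of the $\exists$-gadgets propagates through $\GG$ to produce a coherent $\bfx$, and that the absence of a strictly dominating $\HG$ witnesses the universal truth of $\beta(\bfx, \cdot)$. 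The restriction inherent in $\ExistsForallDNFOneX$ that each term has exactly one $\bfx$-literal is essential here, since it guarantees that each term-gadget is tethered to a single $\exists$-gadget, which keeps the preferences of $a$, $a'$ and the term vertices locally simple enough to enforce the intended semantics.
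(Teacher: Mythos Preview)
Your upper-bound argument is correct and matches the paper. The lower-bound framework is also right in outline: reduce from $\ExistsForallDNFOneX$, let $\GG$ encode an $\bfx$-assignment, and let a strictly dominating $\HG$ encode a $\bfy$-assignment that falsifies $\beta(\bfphi,\bfy)$. But the mechanism you describe for how strict domination arises is wrong, and this is not a detail that can be patched locally.

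You assign the vertices $a,a'$ the role of ``gatekeepers'' through which $\HG$ ``exploits arcs to strictly improve over $\GG$,'' and you say $a,a'$ should ``block any candidate $\HG$ whose encoded $\bfy$ already satisfies at least one term.'' In the paper's construction (and in any construction that follows the figures you cite), $a$ and $a'$ play no such role. They sit in their own strongly connected component of $\GG$, and their only purpose is to force $\GG$ \emph{not} to include the arcs from $a$ into the $\exists$-gadgets, which is what compels $\GG$ to commit to an $\bfx$-assignment. Any candidate $\HG$ that strictly dominates $\GG$ lives entirely inside the $\forall$-gadget and does not touch $a,a'$ at all. The actual dominating $\HG$ is a cycle $\calC$ through $q_0,\tildey_1,\ldots,\tildey_l,q_l,p_0,\tau_1,\ldots,\tau_m,p_m,q_0$, together with arcs from the chosen $\tildey_j$ to the terms they kill. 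The blocking is done by the term vertices $\tau_g$ themselves: each $\tau_g$ prefers its outcome in $\calC$ over its outcome in $\GG$ only if it has been killed, either by its $\bfx$-literal (recorded in $\GG$) or by one of its $\bfy$-literals (chosen by $\HG$). If even one $\tau_g$ survives both assignments, the cycle breaks and no $\HG$ dominates.

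A second, related gap: you propose reusing the $s_i,t_i$ vertices from the $\NP$-hardness gadget. That gadget enforced consistency of the $\bfx$-choice by making $\{s_i,t_i\}$ a potential dominating coalition. Here that device is insufficient, because the term vertices $\tau_g$ need to distinguish, \emph{within their own preference poset}, between ``killed by my $\bfx$-literal'' and ``my $\bfx$-literal is true, so I survive unless a $\bfy$-literal kills me.'' The paper handles this with auxiliary vertices $z_i,\barz_i$: when $\tildex_i$ is true, $\tau_g$ receives an arc from $\tildez_i$ rather than from $\tildex_i$, and $\tau_g$'s preferences treat these two incoming arcs differently (the $\tildez_i$ arc allows joining $\calC$ only if at least one $\bfy$-literal arc is also present). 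Without something playing this role, your term-gadgets cannot encode the conditional ``join $\calC$ iff killed,'' and the reduction collapses.
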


The complete proof can be found in the supplemental material.
In the remainder of this section, 
we briefly present a high-level description of our reduction and the accompanying proof.
 
According to Theorem~\ref{thm:uni-and-snash}, a swap system $\swapSys = (\DG, \prefP)$
has an atomic swap protocol if and only if $\DG$ has a spanning subgraph $\GG$ with the following properties:
{(c.1)} $\GG$ is piece-wise strongly connected and has no isolated vertices,
{(c.2)} $\GG$ dominates $\DG$, and
{(c.3)} no subgraph $\HG$ of $\DG$ strictly dominates $\GG$.
This characterization is of the form $\exists \GG. \, \neg \exists \HG. \, \pi(\GG,\HG)$, 
where $\pi(\GG,\HG)$ is a polynomial-time decidable predicate, so it
immediately implies that $\swapAtomic$ is in $\SigmaTwoP$.
Thus it remains to show that $\swapAtomic$ is $\SigmaTwoP$-hard.

To prove $\SigmaTwoP$-hardness, 
we present a polynomial-time reduction from the above-defined decision problem
$\ExistsForallDNFOneX$. Let the given instance of $\ExistsForallDNFOneX$ be
$\alpha = \exists \bfx \forall \bfy \beta(\bfx,\bfy)$, where 
$\bfx = (x_1,...,x_k)$ and $\bfy = (y_1,...,y_l)$ are vectors of boolean variables
and $\beta(\bfx,\bfy) = \tau_1 \vee \tau_2 \vee ... \vee \tau_m$,
where each $\tau_g$ is a conjunction of one $\bfx$-literal and one or more $\bfy$-literals. 
Our reduction converts $\alpha$ into a swap system $\swapSys = (\DG, \prefP)$
such that $\alpha$ is true if and only if  $\DG$ has a spanning subgraph $\GG$
that satisfies conditions (c.1)-(c.3) from Theorem~\ref{thm:uni-and-snash}.

The following informal interpretation of $\ExistsForallDNFOneX$ will be helpful in understanding our reduction.
Say that a truth assignment to some variables ``kills'' a term $\tau_g$ if it sets one of its literals to false.
A truth assignment $\bfphi$ to the $\bfx$-variables will kill some terms, while others will survive.
Thus $\alpha$ will be true for assignment $\bfphi$ iff there is no assignment $\bfpsi$ for the $\bfy$-variables that kills all
terms that survived $\bfphi$.
In our reduction, the existence of this assignment $\bfphi$ will be represented by the existence of subgraph $\GG$.
The non-existence of $\bfpsi$ that kills all terms that survived $\bfphi$ will be represented by the non-existence of
a subgraph $\HG$ that strictly dominates $\GG$.

Throughout this section, the negation of a boolean variable $x_i$ will be denoted $\barx_i$.
We will also use notation $\tildex_i$ for an unspecified literal of $x_i$, that is $\tildex_i \in \braced{x_i,\barx_i}$. 
The same conventions apply to the variables $y_j$.

We now give an overview of our reduction. The digraph $\DG$ consists of several ``gadgets''.
There will be $\exists$-gadgets, which correspond to the variables $x_i$ and will be used to set their values, 
through the choice of subgraphs that $\GG$ includes.
Then there is the $\forall$-gadget, that contains ``sub-gadgets'' representing the literals $\tildey_j$ and the terms $\tau_g$. 
These gadgets will allow for the values of the variables $y_j$ 
to admit all possible assignments.
If any setting of these values kills all terms not yet killed by the variables $x_i$, this gadget will contain a subgraph
$\HG$ that strictly dominates $\GG$. 
Figure~\ref{fig: main-sigma2 reduction 1} shows a single $\exists$-gadget and 
Figure~\ref{fig: main-sigma2 reduction 2} shows the $\forall$-gadget.
As we explain the high-level intuition,
we gradually visit vertices and explain their purpose.

The argument is based on several ideas. 
One, 
we design the preference posets of $\tildex_i$'s so that
$\GG$ is forced to choose between two possible subsets of arcs within the $\exists$-gadget.
The choice between these two subsets of arcs corresponds to choosing a truth assignment for variable $x_i$.
We focus on the literals $\tildex_i$ that are set to false, since these kill the terms where they appear.
If $\tildex_i$ is set to false, its arcs to the terms $\tau_g$'s in which the literal appears will be included in $\GG$ (the first subset), otherwise its arc to $\tildez_i$ will be included in $\GG$ (the second subset).

Another idea is that vertices outside of the $\forall$-gadget have their preference posets defined in such a way that their arcs 
in $\GG$ define an outcome that is already the best for them.
Therefore, if a subgraph $\HG$ that strictly dominates $\GG$ does indeed exist, we know it must appear in the $\forall$-gadget.
This leads into the key idea of the $\forall$-gadget.
The vertices in this gadget 
can have
outcomes that are better than their outcomes in $\GG$.
All the arcs in these better outcomes together form the cycle
\begin{equation}
    \begin{array}{lcl}
	\calC &=&	
	q_0 \to \tildey_1 \to ... \to \tildey_l\to q_l \to 
	\\ &&
	p_0 \to \tau_1 \to ... \tau_m \to p_m \to 
	\\ &&
	q_0
	\end{array}
	\label{eqn: sigma2-complete cycle main}
\end{equation}
for some choice of the literals $\tildey_1, ..., \tildey_l$.
We design the preference posets of each $\tau_g$ so that its outcome in $\GG$ can only be improved (specifically, towards $\calC$) only if it receives an arc from one of its literals --- in other words, if it is killed by that literal.
This way, $\GG$ will have a strictly dominating subgraph $\HG$ (namely cycle $\calC$) only if all terms are killed, i.e. when $\alpha$ is false.

Next, we provide brief insight to the important vertices and how they help capture the ideas above.
Firstly, we want to simulate a truth assignment for variable $x_i$, which we represent by having $\GG$ choose between two subsets of arcs in the corresponding $\exists$-gadget for $x_i$.
Intuitively, one subset corresponds to assigning $x_i$ to true while the other subset corresponds to assigning $x_i$ to false.
These two subsets of arcs are established by how we define the preference posets of $x_i$ and $\barx_i$.
In order to force $\GG$ to make a choice (instead of taking all the arcs), we introduce the auxiliary vertex $a$, which has arcs to every literal $\tildex_i$.
%
%
The graph $\DG$ has two strongly connected components:
(1) $a$ and $a'$, and (2) all the other vertices.
We claim graph $\GG$ cannot include any arcs from $a$ to the literal vertices.
Otherwise, since there is no edge from the second component to the first,
dropping those arcs always results in a better outcome for the first component.
However, that contradicts condition (c.3).
%
%
%
Then, for $\GG$ to satisfy condition~(c.2), $\GG$ is forced to make a choice between the two subsets of arcs.
Specifically, $\GG$ must choose to include all arcs from $x_i$ to its terms (corresponds to setting $x_i$ to be false) or all arcs from $\barx_i$ to its terms (corresponds to setting $x_i$ to be true).

Next, we want to simulate a term $\tau_g$ being killed.
We achieve this by designing the preference poset of each $\tau_g$ so that if it receives its arc from its $\bfx$-literal,
it would prefer its outcome in the cycle $\calC$ over its outcome in $\GG$.
A term $\tau_g$ can also be killed by one of its $\bfy$-literals, which we describe later.

Now, we want to simulate checking whether there is a truth assignment for the $\bfy$-variables that make $\forall \bfy \beta (\bfphi, \bfy)$ false, where $\bfphi$ is a truth assignment over the $\bfx$ variables.
In other words, $\GG$'s assignment of the $\bfx$ variables have killed some terms and now we want to see if $\HG$ can give an assignment of the $\bfy$ variables that kill the surviving terms.

First, we need to simulate a truth assignment for each variable $y_j$.
This is simple: we flank $y_j$ and $\bary_j$ by vertices $q_{j-1}$ and $q_j$ as seen in Figure~\ref{fig: main-sigma2 reduction 2}.
We define the preference posets of $q_{j-1}$ and $q_j$ in a way that only one of $y_j$ or $\bary_j$ can be included in the cycle $\calC$, 
thus forcing $\HG$ to choose between them.
If $\HG$ selects a vertex $\tildey_j$, then $\tildey_j$ will send an arc to every term it appears in.
This corresponds to assigning $\tildey_j$ to false.

We additionally define $\tau_g$'s poset so that if it receives an arc from any of its $\bfy$-literals, it wants to join $\calC$.
At this point, we have represented $\tau_g$ being killed if it receives either its $\bfx$-literal arc or any of its $\bfy$-literal arcs.
(The $\tildez_i$ vertices are actually used for this purpose. 
They help distinguish between when a term is killed by their $\bfx$-literal and when a term survived in which it needs to be killed by a $\bfy$-literal.)

The $\forall$-gadget is designed in the following manner:
the preference posets of the $q_j$, $\tildey_j$, and $p_g$ vertices are such that if every $\tau_g$ prefers $\calC$, so will they;
otherwise, if any $\tau_g$ does not prefer $\calC$, then none of the vertices can cooperatively deviate to improve their outcomes.
In other words, each $\tau_g$ acts as a bottleneck for the cycle $\calC$, thus we focus only on the $\tau_g$'s.

We give an analogy to better understand the remainder of the reduction.
Each $\tau_g$ is given a vote to whether or not they want to participate in cycle $\calC$.
In order for $\calC$ to pass, it must receive a \emph{unanimous} vote from every $\tau_g$.
Vertex $\tau_g$ only casts its vote to join $\calC$ if it receives an arc from either its $\bfx$-literal arc or any of its $\bfy$-literal arcs (corresponds to being killed).
Then, $\GG$'s selection in each $\exists$-gadget (truth assignment over $\bfx$ variables) caused \emph{some} $\tau_g$'s to vote for $\calC$.
Now, $\HG$ is tasked with selecting vertices $\tildey_1, ..., \tildey_l$ (truth assignment over $\bfy$ variables) so that the \emph{remaining} $\tau_g$'s also vote for $\calC$.
At the end of this, if the $\tau_g$'s unanimously voted for $\calC$, then there is an $\HG$ that strictly dominates $\GG$, namely the subgraph induced by $\calC$.
(This corresponds to giving an assignment $\bfy \mapsto \bfpsi$ such that $\beta (\bfphi, \bfpsi)$ is false.)
Otherwise, if $\HG$ cannot give such a selection over $\tildey_1, ..., \tildey_l$, then there is no $\HG$ that strictly dominates $\GG$.
(This corresponds to $\forall \bfy \beta(\bfphi, \bfy)$ being true, i.e. $\alpha$ is true.)

The remaining vertices
are primarily used for convenience and to influence the behavior/preference posets of their neighbors.
In other words, they make the topology of $\GG$ and $\HG$ predictable, holding them to a particular form.
For example, vertex $b$ is used to guarantee condition~(c.1), the piece-wise strong connectivity condition.
Also, it is used where vertices would otherwise have no incoming or outgoing arcs.

To conclude the section, we provide some brief insight to both directions of the proof. 
In the $(\Rightarrow)$ implication, 
we show that if $\alpha$ is true, 
then the swap graph $\DG$ has a subgraph $\GG$ that satisfies the properties of Theorem~\ref{thm:uni-and-snash}.
We begin by fixing some truth assignment $\bfx \mapsto \bfphi$ that makes $\alpha$ true.
We convert $\bfphi$ into a graph $\GG$ that satisfies the properties of Theorem~\ref{thm:uni-and-snash} using the ideas described above.
Conditions~(c.1) and~(c.2) can be verified by routine inspection, leaving condition~(c.3) that $\GG$ does not have a strictly dominating subgraph $\HG$.
The idea is, towards contradiction, if such an $\HG$ existed, we could convert it into an assignment $\bfpsi$ of the $\bfy$ variables so that $\beta(\bfphi, \bfpsi)$ is false.
This contradicts the fact that $\alpha$ is true.

In the $(\Leftarrow)$ implication,
we show that if $\DG$ has a subgraph $\GG$ that satisfies the properties of theorem~\ref{thm:uni-and-snash}, then $\alpha$ is true.
We begin by showing that the topology of $\GG$ must have a certain form; specifically, it is representative of the graph $\GG$ we constructed in the proof of the $(\Rightarrow)$ implication.
This allows us to reconstruct an assignment of the $\bfx$ variables.
We then show, again by contradiction, that $\forall \bfy \beta(\bfphi, \bfy)$ must be true.
If it were not, we can take a falsifying assignment $\bfy \mapsto \bfpsi$ and convert it into a subgraph $\HG$ that strictly dominates $\GG$.
However, this contradicts condition~(c.3) of $\GG$.

%


\section{Related Works}
\label{sec:related}

The fair exchange problem 
\cite{micali03, franklin98, probfairexchange, asokan97, asokan1997}
was of interest even before the blockchain technology.  
It arises when two parties want to exchange their assets,
and the outcome  must be either 
that the two parties end up trading their assets, 
or that they both keep their assets.
However, in contrast to the swap problem, some trust in a third party is often assumed.
The optimistic fair exchange protocol~\cite{micali03}
relies on invisible trusted parties: parties that work as a background service and
intervene only in case of a misbehaviour. 
Similarly, the secure group barter protocol \cite{franklin98} studies multi-party barter 
with semi-trusted agents. 

To the best of our knowledge, it was back in 2013 when 
the notion of cross-chain swaps first emerged in an online forum \cite{atomictransfertalk}.
Atomic cross-chain swap is since an active problem for the blockchain community \cite{btcwiki,atomictransfertalk, htlc, bitcoinhtlc}.
The two wiki pages \cite{btcwiki} and \cite{atomictransfertalk} and later
platforms such as  deCRED \cite{decred} proposed protocols for bilateral swaps.
However, these projects offer only two-party transactions.
Later, protocols for cross-chain swaps and transactions \cite{Herlihy18,herlihyDeals,heilman2020arwen,thyagarajan2022universal}
emerged that can work for an arbitrary number of parties; however, they assumed the predefined preference relation that we saw earlier
for all the parties.

These protocols motivated a host of follow-up research.
The time and space complexity \cite{imoto2019atomic} and privacy guarantees \cite{deshpande2020privacy} of the protocol were improved.
The former \cite{imoto2019atomic} uses a model where each asset is assigned two numerical values, one by its current owner and one
by the intended recipient. These values can then be used to determine preferences for each party, and can be 
extended to sets of parties by considering the difference between the total values of incoming and outgoing assets.
To assure that their swap graphs have atomic protocols, restrictions (similar in spirit to
our Theorem~\ref{thm:uni-and-snash}) are placed on allowed swap graphs. 
As we discussed in the introduction, such value-based preferences cannot express dependencies between assets.
So the model in \cite{imoto2019atomic} would not capture some natural scenarios, for example trades involving assets
from an investment portfolio with fixed proportions between different asset classes.
Their way of extending individual preferences to coalitions (sets of parties) is different from our model, and
it involves a tacit assumption that the coalition members agree on these values.
Nevertheless, the approach in~\cite{imoto2019atomic} is natural and worth studing, and in particular it would
be of interest to investigate the time complexity to determine whether a swap graph has an atomic protocol in that model. 
We suspect that this problem may be computationally easier than for our swap systems.

Further, extensions to  support off-chain steps \cite{shadabTransactions} 
and reduce the asset lock-up time \cite{Xue21Hedging} appeared.
Others presented hardness and impossibility results \cite{zamyatin2021sok,chan2021brief}
formal verification \cite{nehai2022tla}, and  protocols with 
all-or-nothing guarantees \cite{zakhary13atomic} and
success guarantees under synchrony assumptions \cite{van2020feasibility}.
Others proposed moving assets \cite{sigwart2020decentralized} and smart contracts \cite{fynn2020smart} across blockchains,
and executing code that spans multiple blockchains 
\cite{GeneralRobinson21},
and
presented implementations for industrial blockchains \cite{wehatomicswapweb,wehatomicswapweb2,bu2020cross,thyagarajan2021universal}.

Payment channel networks 
process multi-hop payments 
in the same blockchain
through a sequence of channels 
using Hash Timelock Contracts 
\cite{poon2016bitcoin,reidenweb}
or 
adaptor signatures \cite{malavolta2018anonymous}.
Recent protocols such as AMCU \cite{egger2019atomic}, Sprites \cite{miller2019sprites} and Thora \cite{aumayr2022thora} support more general topologies for transactions.

In contrast to previous work,
this paper presented a generalized model of swaps where each party can specify a personalized preference on their set of incoming and outgoing assets in a finer manner, e.g. dependencies between subsets of acquired and traded assets.

\section{Conclusion}
We presented a general swap model that allows each party to specify their preference on their possible outcomes.
We saw that Herlihy's pioneering protocol is a uniform and Nash strategy in this model; however, it is not a strong Nash strategy.
We presented a characterization of the class of swap graphs that have uniform and Strong Nash protocols.
Interestingly, Herlihy's protocol is such a strategy when executed on a particular subgraph of the swap graphs in this class.
We further presented reductions that shows the NP-harness and $\SigmaTwoP$-completeness of the decision problem for this class.

\myparagraph{Acknowledgements}
I would like to thank Annie Semb for her unconditional love and support over the years. 
I will remember you always.
I miss you dearly.
\emph{Eric.}


\bibliographystyle{plainurl}
\bibliography{references}

\begin{thebibliography}{10}

\bibitem{reidenweb}
Raiden network.
\newblock \url{https://raiden.network/}.

\bibitem{wehatomicswapweb2}
Submarine swap in lightning network.
\newblock \url{https://wiki.ion.radar.tech/tech/research/submarine-swap}.

\bibitem{wehatomicswapweb}
What is atomic swap and how to implement it.
\newblock
  \url{https://www.axiomadev.com/blog/what-is-atomic-swap-and-how-to-implement-it/}.

\bibitem{asokan97}
N.~Asokan, Matthias Schunter, and Michael Waidner.
\newblock Optimistic protocols for fair exchange.
\newblock In {\em Proceedings of the 4th ACM Conference on Computer and
  Communications Security}, CCS '97, pages 7--17, New York, NY, USA, 1997. ACM.
\newblock URL: \url{http://doi.acm.org/10.1145/266420.266426}, \href
  {https://doi.org/10.1145/266420.266426} {\path{doi:10.1145/266420.266426}}.

\bibitem{asokan1997}
N.~Asokan, Victor Shoup, and Michael Waidner.
\newblock Optimistic fair exchange of digital signatures.
\newblock {\em IEEE Journal on Selected Areas in Communications}, 18:593--610,
  1997.

\bibitem{aumayr2022thora}
Lukas Aumayr, Kasra Abbaszadeh, and Matteo Maffei.
\newblock Thora: Atomic and privacy-preserving multi-channel updates.
\newblock {\em Cryptology ePrint Archive}, 2022.

\bibitem{probfairexchange}
Michael Ben-Or, Oded Goldreich, Silvio Micali, and Ronald~L. Rivest.
\newblock A fair protocol for signing contracts.
\newblock In Wilfried Brauer, editor, {\em Automata, Languages and
  Programming}, pages 43--52, Berlin, Heidelberg, 1985. Springer Berlin
  Heidelberg.

\bibitem{btcwiki}
bitcoinwiki.
\newblock Atomic swap.
\newblock \url{https://en.bitcoin.it/wiki/Atomic_swap}.
\newblock [Online; accessed 23-January-2021].

\bibitem{htlc}
bitcoinwiki.
\newblock Hashed timelock contracts.
\newblock \url{https://en.bitcoin.it/wiki/Hashed_Timelock_Contracts}.
\newblock [Online; accessed 23-January-2021].

\bibitem{bitcoinhtlc}
Sean Bowe and Daira Hopwood.
\newblock Hashed time-locked contract transactions.
\newblock \url{https://github.com/bitcoin/bips/blob/master/bip-0199.mediawiki}.
\newblock [Online; accessed 23-January-2021].

\bibitem{bu2020cross}
Gewu Bu, Riane Haouara, Thanh-Son-Lam Nguyen, and Maria Potop-Butucaru.
\newblock Cross hyperledger fabric transactions.
\newblock In {\em Proceedings of the 3rd Workshop on Cryptocurrencies and
  Blockchains for Distributed Systems}, pages 35--40, 2020.

\bibitem{chan2021brief}
Eric Chan and Mohsen Lesani.
\newblock Brief announcement: Brokering with hashed timelock contracts is
  np-hard.
\newblock In {\em Proceedings of the 2021 ACM Symposium on Principles of
  Distributed Computing}, pages 199--202, 2021.

\bibitem{decred}
{deCRED}.
\newblock On-chain atomic swaps for decred and other cryptocurrencies.
\newblock \url{https://github.com/decred/atomicswap}.
\newblock [Online; accessed 27-January-2019].

\bibitem{deshpande2020privacy}
Apoorvaa Deshpande and Maurice Herlihy.
\newblock Privacy-preserving cross-chain atomic swaps.
\newblock In {\em International Conference on Financial Cryptography and Data
  Security}, pages 540--549. Springer, 2020.

\bibitem{egger2019atomic}
Christoph Egger, Pedro Moreno-Sanchez, and Matteo Maffei.
\newblock Atomic multi-channel updates with constant collateral in
  bitcoin-compatible payment-channel networks.
\newblock In {\em Proceedings of the 2019 ACM SIGSAC Conference on Computer and
  Communications Security}, pages 801--815, 2019.

\bibitem{franklin98}
Matt Franklin and Gene Tsudik.
\newblock Secure group barter: Multi-party fair exchange with semi-trusted
  neutral parties.
\newblock In Rafael Hirchfeld, editor, {\em Financial Cryptography}, pages
  90--102, Berlin, Heidelberg, 1998. Springer Berlin Heidelberg.

\bibitem{fynn2020smart}
Enrique Fynn, Alysson Bessani, and Fernando Pedone.
\newblock Smart contracts on the move.
\newblock In {\em 2020 50th Annual IEEE/IFIP International Conference on
  Dependable Systems and Networks (DSN)}, pages 233--244. IEEE, 2020.

\bibitem{heilman2020arwen}
Ethan Heilman, Sebastien Lipmann, and Sharon Goldberg.
\newblock The arwen trading protocols.
\newblock In {\em International Conference on Financial Cryptography and Data
  Security}, pages 156--173. Springer, 2020.

\bibitem{Herlihy18}
Maurice Herlihy.
\newblock Atomic cross-chain swaps.
\newblock In {\em Proceedings of the 2018 ACM Symposium on Principles of
  Distributed Computing}, PODC '18, pages 245--254, New York, NY, USA, 2018.
  ACM.
\newblock URL: \url{http://doi.acm.org/10.1145/3212734.3212736}, \href
  {https://doi.org/10.1145/3212734.3212736}
  {\path{doi:10.1145/3212734.3212736}}.

\bibitem{herlihyDeals}
Maurice Herlihy, Barbara Liskov, and Liuba Shrira.
\newblock Cross-chain deals and adversarial commerce.
\newblock {\em Proc. VLDB Endow.}, 13(2):100--113, October 2019.

\bibitem{imoto2019atomic}
Soichiro Imoto, Yuichi Sudo, Hirotsugu Kakugawa, and Toshimitsu Masuzawa.
\newblock Atomic cross-chain swaps with improved space and local time
  complexity.
\newblock In {\em International Symposium on Stabilizing, Safety, and Security
  of Distributed Systems}, pages 194--208. Springer, 2019.

\bibitem{malavolta2018anonymous}
Giulio Malavolta, Pedro Moreno-Sanchez, Clara Schneidewind, Aniket Kate, and
  Matteo Maffei.
\newblock Anonymous multi-hop locks for blockchain scalability and
  interoperability.
\newblock {\em Cryptology ePrint Archive}, 2018.

\bibitem{micali03}
Silvio Micali.
\newblock Simple and fast optimistic protocols for fair electronic exchange.
\newblock In {\em Proceedings of the Twenty-second Annual Symposium on
  Principles of Distributed Computing}, PODC '03, pages 12--19, New York, NY,
  USA, 2003. ACM.
\newblock URL: \url{http://doi.acm.org/10.1145/872035.872038}, \href
  {https://doi.org/10.1145/872035.872038} {\path{doi:10.1145/872035.872038}}.

\bibitem{miller2019sprites}
Andrew Miller, Iddo Bentov, Surya Bakshi, Ranjit Kumaresan, and Patrick
  McCorry.
\newblock Sprites and state channels: Payment networks that go faster than
  lightning.
\newblock In {\em International Conference on Financial Cryptography and Data
  Security}, pages 508--526. Springer, 2019.

\bibitem{nehai2022tla}
Zeinab Neha{\"\i}, Fran{\c{c}}ois Bobot, Sara Tucci-Piergiovanni, Carole
  Delporte-Gallet, and Hugues Fauconnier.
\newblock A tla+ formal proof of a cross-chain swap.
\newblock In {\em 23rd International Conference on Distributed Computing and
  Networking}, pages 148--159, 2022.

\bibitem{papadimitrou_book_1994}
Christos Papadimitriou.
\newblock {\em Computational Complexity}.
\newblock Addison-Wesley Publishing Company, 1994.

\bibitem{poon2016bitcoin}
Joseph Poon and Thaddeus Dryja.
\newblock The bitcoin lightning network: Scalable off-chain instant payments,
  2016.

\bibitem{GeneralRobinson21}
Peter Robinson and Raghavendra Ramesh.
\newblock General purpose atomic crosschain transactions.
\newblock In {\em {IEEE} International Conference on Blockchain and
  Cryptocurrency, {ICBC} 2021, Sydney, Australia, May 3-6, 2021}, pages 1--3.
  {IEEE}, 2021.
\newblock \href {https://doi.org/10.1109/ICBC51069.2021.9461132}
  {\path{doi:10.1109/ICBC51069.2021.9461132}}.

\bibitem{shafer_poly_hierarchy_compedium_2002}
Marcus Schaefer and Christopher Umans.
\newblock Completeness in the polynomial-time hierarchy: a compendium.
\newblock {\em Sigact News}, 33(3):32--49, 2002.

\bibitem{shadabTransactions}
Narges Shadab, Farzin Houshmand, and Mohsen Lesani.
\newblock Cross-chain transactions.
\newblock In {\em 2020 IEEE International Conference on Blockchain and
  Cryptocurrency (ICBC)}, pages 1--9. IEEE, 2020.

\bibitem{sigwart2020decentralized}
Marten Sigwart, Philipp Frauenthaler, Christof Spanring, and Stefan Schulte.
\newblock Decentralized cross-blockchain asset transfers.
\newblock {\em arXiv preprint arXiv:2004.10488}, 2020.

\bibitem{thyagarajan2021universal}
Sri~AravindaKrishnan Thyagarajan, Giulio Malavolta, and Pedro
  Moreno-S{\'a}nchez.
\newblock Universal atomic swaps: Secure exchange of coins across all
  blockchains.
\newblock {\em Cryptology ePrint Archive}, 2021.

\bibitem{thyagarajan2022universal}
Sri~AravindaKrishnan Thyagarajan, Giulio Malavolta, and Pedro Moreno-Sanchez.
\newblock Universal atomic swaps: Secure exchange of coins across all
  blockchains.
\newblock In {\em 2022 IEEE Symposium on Security and Privacy (SP)}, pages
  1299--1316. IEEE, 2022.

\bibitem{atomictransfertalk}
{Tier Nolan}.
\newblock Alt chains and atomic transfers.
\newblock
  \url{https://bitcointalk.org/index.php?topic=193281.msg2224949#msg2224949},
  2013.
\newblock [Online; accessed 23-January-2021].

\bibitem{van2020feasibility}
Rob van Glabbeek, Vincent Gramoli, and Pierre Tholoniat.
\newblock Feasibility of cross-chain payment with success guarantees.
\newblock In {\em Proceedings of the 32nd ACM Symposium on Parallelism in
  Algorithms and Architectures}, pages 579--581, 2020.

\bibitem{Xue21Hedging}
Yingjie Xue and Maurice Herlihy.
\newblock Hedging against sore loser attacks in cross-chain transactions.
\newblock In {\em Proceedings of the 2021 ACM Symposium on Principles of
  Distributed Computing}, PODC'21, page 155–164, New York, NY, USA, 2021.
  Association for Computing Machinery.
\newblock \href {https://doi.org/10.1145/3465084.3467904}
  {\path{doi:10.1145/3465084.3467904}}.

\bibitem{zakhary13atomic}
Victor Zakhary, Divyakant Agrawal, and Amr El~Abbadi.
\newblock Atomic commitment across blockchains.
\newblock {\em Proceedings of the VLDB Endowment}, 13(9).

\bibitem{zamyatin2021sok}
Alexei Zamyatin, Mustafa Al-Bassam, Dionysis Zindros, Eleftherios
  Kokoris-Kogias, Pedro Moreno-Sanchez, Aggelos Kiayias, and William~J
  Knottenbelt.
\newblock Sok: Communication across distributed ledgers.
\newblock In {\em International Conference on Financial Cryptography and Data
  Security}, pages 3--36. Springer, 2021.

\end{thebibliography}


\clearpage
\noindent
{\Huge Appendix}

\section{$\Sigma_2$-Completeness}
\label{sec: sigma2 completeness}

In this section, we give the complete, detailed proof described in \ref{sec:sigma2 completeness-main}.
That is, we consider the complexity of determining whether a swap system has an atomic swap protocol, showing that this problem is $\SigmaTwoP$-complete. 
Recall that $\SigmaTwoP = \NP^{\NP}$ is the class of problems at the 2nd level of the polynomial hierarchy that consists of problems solvable non-deterministically in polynomial time with an $\NP$ oracle.

Our proof is based on a reduction from a restricted variant of the $\ExistsForallDNF$ problem.
An instance of $\ExistsForallDNF$ is a boolean expression 
$\alpha = \exists \bfx \forall \bfy \beta(\bfx,\bfy)$, where 
$\bfx = (x_1,...,x_k)$ and $\bfy = (y_1,...,y_l)$ are vectors of boolean variables
and $\beta(\bfx,\bfy)$ is a quantifier-free boolean expression in
disjunctive normal form, that is $\beta(\bfx,\bfy) = \tau_1 \vee \tau_2 \vee ... \vee \tau_m$,
and each term $\tau_g$ is a conjunction of literals involving different variables. 
The goal is to determine whether $\alpha$ is true. $\ExistsForallDNF$ is a canonical $\SigmaTwoP$-complete 
problem~\cite{shafer_poly_hierarchy_compedium_2002,papadimitrou_book_1994}\footnote{%
Notations for this problem and its variants vary across the literature. Our notations use the convention in~\cite{shafer_poly_hierarchy_compedium_2002}.
}.
The problem remains $\SigmaTwoP$-complete even restricted to instances
where each term in $\beta$ has only three literals. We denote this variant by $\ExistsForallThreeDNF$.

Throughout this section, the negation of a boolean variable $x_i$ will be denoted $\barx_i$.
We will also use notation $\tildex_i$ for an unspecified literal of $x_i$, that is $\tildex_i \in \braced{x_i,\barx_i}$. 
The same conventions apply to the variables $y_j$.

The restriction of $\ExistsForallDNF$ that we use in our proof, denoted $\ExistsForallDNFOneX$, consists of
instances $\alpha = \exists \bfx \forall \bfy \beta(\bfx,\bfy)$ where each term of
$\beta$ includes exactly one $\bfx$-literal and one or more $\bfy$-literals that ifnvolve different variables.

We first prove the following lemma:


\begin{lemma}\label{lem:sigma2-app}
$\ExistsForallDNFOneX$ is $\SigmaTwoP$-complete.
\end{lemma}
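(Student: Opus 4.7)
The plan is to establish $\SigmaTwoP$-membership by a one-line observation and then prove $\SigmaTwoP$-hardness by a polynomial-time reduction from $\ExistsForallDNF$. Membership is immediate since $\ExistsForallDNFOneX$ is a syntactic restriction of $\ExistsForallDNF$, which is canonically $\SigmaTwoP$-complete. For hardness, I would take an arbitrary instance $\alpha = \exists \bfx \forall \bfy \bigvee_{g=1}^{m} \tau_g$ of $\ExistsForallDNF$ and transform each term $\tau_g$ independently into one or more new terms that individually satisfy the ``exactly one $\bfx$-literal, at least one $\bfy$-literal, distinct variables'' constraint. A preliminary cleanup would first discard any term containing both a literal and its negation (always false) and deduplicate repeated literals, leaving each term over distinct variables.

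Three cases then arise. Case~(i), the substantial one, is when $\tau_g$ has $p \ge 2$ $\bfx$-literals. Writing $\tau_g = \tildex_{i_1} \wedge \cdots \wedge \tildex_{i_p} \wedge \tau^y_g$, where $\tau^y_g$ collects the $\bfy$-literals (possibly empty), I introduce a fresh $\bfy$-variable $w_g$ and replace $\tau_g$ with
\[
T_1 \;=\; \tildex_{i_1} \wedge \tau^y_g \wedge w_g
\qquad\text{and}\qquad
T_2 \;=\; \tildex_{i_2} \wedge \cdots \wedge \tildex_{i_p} \wedge \tau^y_g \wedge \bar{w}_g.
\]
The key claim is that, denoting by ``other'' the disjunction of the untouched terms of $\beta$ (which contain no occurrence of $w_g$), $\forall w_g\,[T_1 \vee T_2 \vee \text{other}] \equiv \tau_g \vee \text{other}$. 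This follows by splitting on $w_g \in \{0,1\}$ and applying the distributive identity $(A \vee C) \wedge (B \vee C) \equiv (A \wedge B) \vee C$. Iterating the reduction on $T_2$ (which still has $p-1$ $\bfx$-literals) yields $p$ terms each with a single $\bfx$-literal, at the cost of $p-1$ fresh $\bfy$-variables.

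Cases~(ii) and~(iii) are much simpler. If $\tau_g$ has zero $\bfx$-literals, introduce a fresh $\bfx$-variable $u_g$ and replace $\tau_g$ with $(\tau_g \wedge u_g) \vee (\tau_g \wedge \bar{u}_g)$; this disjunction is tautologically equivalent to $\tau_g$, and since $u_g$ occurs nowhere else, pulling $\exists u_g$ into the prefix preserves the overall semantics. If $\tau_g$ has zero $\bfy$-literals (which, after Case~(i), means $\tau_g$ is a single $\bfx$-literal), introduce a fresh $\bfy$-variable $v_g$ and split $\tau_g$ analogously into $\tau_g \wedge v_g$ and $\tau_g \wedge \bar{v}_g$; correctness again reduces to the same distributive identity applied under $\forall v_g$, using that other terms do not mention $v_g$.

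After performing these local transformations on each $\tau_g$, the resulting formula $\alpha'$ is an instance of $\ExistsForallDNFOneX$ of size polynomial in $|\alpha|$ (a term of size $p$ expands to $O(p)$ new terms with $O(p)$ auxiliary variables), and the semantic equivalence $\alpha \Leftrightarrow \alpha'$ follows by applying the per-term equivalences one at a time. The main potential obstacle is purely bookkeeping: the fresh $\bfx$- and $\bfy$-variables for different terms must be kept disjoint so that no unintended coupling is created, and when Cases~(i) and~(iii) both apply to the same term they must be composed in the right order (first reduce the $\bfx$-literal count to one, then insert a $\bfy$-literal if none remains). The propositional content of the whole argument is the single routine identity $(A \vee C) \wedge (B \vee C) \equiv (A \wedge B) \vee C$, so no deeper combinatorial insight is needed.
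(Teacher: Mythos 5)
Your proof is correct, and its engine is the same as the paper's: insert a fresh universally quantified variable with opposite polarities into two copies of a term so as to split off an $\bfx$-literal, with correctness resting on exactly the identity $(A\vee C)\wedge(B\vee C)\equiv(A\wedge B)\vee C$ applied under the $\forall$-block. The route differs in its starting point and in how much case analysis is needed. The paper reduces from $\ExistsForallThreeDNF$: after discarding instances with a purely-$\bfx$ term (trivially true), every offending term has exactly two $\bfx$-literals and one $\bfy$-literal, so a single application of the splitting trick suffices. You reduce from unrestricted $\ExistsForallDNF$, so you must iterate the split ($p-1$ fresh $\bfy$-variables for a term with $p$ $\bfx$-literals) and additionally pad terms lacking an $\bfx$-literal (fresh existential variable) or lacking a $\bfy$-literal (fresh universal variable); both paddings are sound for the reason you give. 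What your version buys is independence from the 3DNF-restricted completeness result and a fully explicit treatment of the zero-$\bfx$-literal case, which the paper's write-up leaves implicit (it exhibits only the two-$\bfx$-literal replacement while asserting that terms with ``two or zero'' $\bfx$-literals get eliminated); what the paper's version buys is brevity. Your bookkeeping caveats --- fresh variables kept disjoint across terms, and the ``other'' disjuncts at each iteration not mentioning the newly introduced variable --- are precisely the right conditions to check, and they do hold, so the argument goes through.
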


\begin{proof}
We show how to convert a given instance $\alpha = \exists \bfx \forall \bfy \beta(\bfx,\bfy)$
of $\ExistsForallThreeDNF$ into an instance $\alpha'$ of $\ExistsForallDNFOneX$ such that $\alpha$ is true iff $\alpha'$ is true.

First, we can assume that  $\beta$
does not have terms with only $\bfx$-literals, since such formulas $\alpha$ are trivially true.
All terms that have exactly one $\bfx$-literal will remain unchanged.

Consider a term with two $\bfx$-literals, say $\tau_g = \tildex_p\wedge \tildex_q \wedge \tildey_r$. Add another variable $y'$ and
replace $\tau_g$ by $(\tildex_p \wedge \tildey_r \wedge y') \vee (\bary' \wedge \tildex_q \wedge \tildey_r)$. Let $\beta'$ be the boolean expression obtained from $\beta$ by
this replacement,
and $\alpha' = \exists \bfx \forall \bfy \forall y' \beta'(\bfx,\bfy,y')$. Then, by straightforward
verification, $\alpha$ is true for a given truth assignment for $\bfx$ if and only if $\alpha'$ is true for the
same assignment for $\bfx$.

By applying these replacements, we will eventually eliminate all terms that
have two or zero $\bfx$-literals, thus converting $\alpha$ into the $\ExistsForallDNFOneX$ form.
\end{proof}


\begin{theorem}
\label{thm:sigma2complete-app}
Let $\swapAtomic$ be the decision problem of deciding whether a swap system has an atomic protocol.
$\swapAtomic$ is $\SigmaTwoP$-complete.
\end{theorem}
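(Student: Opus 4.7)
My plan is to split the proof into the standard two parts. For membership in $\SigmaTwoP$, I will invoke the characterization of Theorem~\ref{thm:uni-and-snash}: a swap system $\swapSys = (\DG, \prefP)$ has an atomic protocol iff there exists a spanning subgraph $\GG \subseteq \DG$ satisfying (c.1), (c.2), and (c.3). Conditions (c.1) and (c.2) are clearly polynomial-time verifiable, and (c.3) asserts that for every subgraph $\HG \subseteq \DG$ (a polynomial-sized object), the predicate ``$\HG$ strictly dominates $\GG$'' fails. Since dominance is checked by comparing $\outcomepair{\cdot}{\cdot}$ of each vertex in the given preference poset, the full statement has the form $\exists \GG \, \forall \HG \, \pi(\swapSys, \GG, \HG)$ with $\pi$ polynomial-time decidable, placing $\swapAtomic$ in $\SigmaTwoP$. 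Hardness is the main work, and I will reduce from $\ExistsForallDNFOneX$, which Lemma~\ref{lem:sigma2-app} shows is $\SigmaTwoP$-complete.

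Given an instance $\alpha = \exists \bfx \forall \bfy \beta(\bfx,\bfy)$ with $\beta = \tau_1 \vee \cdots \vee \tau_m$, I will build $\swapSys = (\DG,\prefP)$ exactly as outlined before the theorem statement: an $\exists$-gadget for each $x_i$ (vertices $x_i,\barx_i,\tildez_i$ together with the shared $a,a',b$), and one $\forall$-gadget containing vertices $\tildey_j$, $q_j$, $\tau_g$, $p_g$ arranged so that a full traversal of literal and term vertices realizes the distinguished cycle $\calC$ of (\ref{eqn: sigma2-complete cycle main}). I will specify the preference posets so that three invariants hold, each to be verified by a short case analysis in the style of the claims inside the proof of Theorem~\ref{thm:nphard}: (i) any $\GG$ fulfilling (c.1)--(c.2) must, within each $\exists$-gadget, choose either all arcs from $\tildex_i$ to the terms containing $\tildex_i$ (encoding ``$\tildex_i$ false'') or the arc $(\tildex_i,\tildez_i)$ (encoding ``$\tildex_i$ true''), and cannot include both options for the two literals of $x_i$; (ii) outside the $\forall$-gadget the outcome assigned by $\GG$ to every vertex is already maximal in its preference poset, so any strictly dominating $\HG$ must differ from $\GG$ only inside the $\forall$-gadget; (iii) the $q_j$, $\tildey_j$, $p_g$, $\tau_g$ are tuned so that a strict improvement inside the $\forall$-gadget is possible only by realizing $\calC$ in its entirety, and each $\tau_g$ joins $\calC$ only when it receives an arc from one of its literals (that literal ``killing'' the term).

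With these invariants, the two directions of the reduction follow cleanly. $(\Rightarrow)$ From a witness $\bfphi$ making $\alpha$ true, I build $\GG$ by making the $\exists$-gadget choices according to $\bfphi$ and adding the incident arcs at $a,a',b$; conditions (c.1) and (c.2) are routine. If some $\HG$ strictly dominated $\GG$, then by (ii)–(iii) it would realize $\calC$ for some choice of literals $\tildey_1,\dots,\tildey_l$, and by the bottleneck property of each $\tau_g$ every surviving term would be killed by that $\bfy$-literal choice, yielding a $\bfpsi$ with $\beta(\bfphi,\bfpsi)$ false, contradicting $\alpha$. $(\Leftarrow)$ From a $\GG$ satisfying (c.1)–(c.3), invariant (i) extracts an assignment $\bfphi$; if some $\bfpsi$ falsified $\beta(\bfphi,\bfpsi)$ then every term would be killed under the combined assignment, so invariant (iii) would let me assemble $\calC$ into a subgraph $\HG$ strictly dominating $\GG$, violating (c.3); hence $\forall \bfy \beta(\bfphi,\bfy)$ holds and $\alpha$ is true. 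The main obstacle I expect is the careful calibration of the preference posets at $q_j$, $p_g$, and especially $\tau_g$, so that invariant (iii) is robust against \emph{every} partial coalition inside the $\forall$-gadget — not only those forming a cycle — and symmetrically, that no outside coalition involving $a,a',b,\tildez_i$ can strictly improve on $\GG$ either; establishing this will require the same sort of enumerated-claim case analysis used in the $\NP$-hardness proof, but over many more vertex types and with tighter arithmetic on which outcomes dominate which.
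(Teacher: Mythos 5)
Your proposal follows essentially the same route as the paper's own proof: membership in $\SigmaTwoP$ via the $\exists\GG\,\forall\HG$ form of the characterization in Theorem~\ref{thm:uni-and-snash}, and hardness via a reduction from $\ExistsForallDNFOneX$ using the same $\exists$-gadgets, $\forall$-gadget, and distinguished cycle $\calC$, with the same ``killing'' interpretation and the same two-direction argument (build $\GG$ from $\bfphi$ and refute a dominating $\HG$; extract $\bfphi$ from $\GG$ and turn a falsifying $\bfpsi$ into a dominating $\HG$). The only caveat is that the detailed calibration you correctly identify as the main obstacle --- in particular, handling the case where $\GG$ sets both literals of some $x_i$ ``false,'' which the paper resolves via pseudo-truth assignments --- is exactly the work the paper's appendix proof carries out.
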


\begin{proof}
According to Theorem~\ref{thm:uni-and-snash}, a swap system $\swapSys = (\DG, \prefP)$
has an atomic swap protocol if and only if $\DG$ has a spanning subgraph $\GG$ with the following properties:
{(c.1)} $\GG$ is piece-wise strongly connected and has no isolated vertices,
{(c.2)} $\GG$ dominates $\DG$, and
{(c.3)} no subgraph $\HG$ of $\DG$ strictly dominates $\GG$.
This characterization is of the form $\exists \GG \, (\,\neg \exists \HG : \pi(\GG,\HG) \,)$, 
where $\pi(\GG,\HG)$ is a polynomial-time decidable predicate, so it
immediately implies that $\swapAtomic$ is in $\SigmaTwoP$.
Thus it remains to show that $\swapAtomic$ is $\SigmaTwoP$-hard.

To prove $\SigmaTwoP$-hardness, we give a polynomial-time reduction from the above-defined decision problem
$\ExistsForallDNFOneX$. Let the given instance of $\ExistsForallDNFOneX$ be
$\alpha = \exists \bfx \forall \bfy \beta(\bfx,\bfy)$, where 
$\bfx = (x_1,...,x_k)$ and $\bfy = (y_1,...,y_l)$ are vectors of boolean variables
and $\beta(\bfx,\bfy) = \tau_1 \vee \tau_2 \vee ... \vee \tau_m$,
with each $\tau_g$ being a conjunction of one $\bfx$-literal and one or more $\bfy$-literals. 
Our reduction converts $\alpha$ into a swap system $\swapSys = (\DG, \prefP)$
    such that $\alpha$ is true if and only if  $\DG$ has a spanning subgraph $\GG$
that satisfies conditions (c.1)-(c.3) from Theorem~\ref{thm:uni-and-snash}.

The following informal interpretation of $\ExistsForallDNFOneX$ will be helpful in understanding our reduction.
Say that a truth assignment to some variables \emph{kills} a term $\tau_g$ if it sets one of its literals to false.
A truth assignment $\bfphi$ to the $\bfx$-variables will kill some terms, while other will survive.
Thus $\alpha$ will be true for assigment $\bfphi$ iff there is no assignment $\bfpsi$ for the $\bfy$-variables that kills all
terms that survived $\bfphi$.
In our reduction, the existence of this assigment $\bfphi$ will be represented by the existence of subgraph $\GG$.
The non-existence of $\bfpsi$ that kills all terms that survived $\bfphi$ will be represented by the non-existence of
a subgraph $\HG$ that strictly dominates $\GG$.

\smallskip

We now describe our reduction. The digraph $\DG$ will consists of several ``gadgets''.
There will be $\exists$-gadgets, which correspond to the variables $x_i$ and will be used to set their values, through an appropriate
choices of subgraph $\GG$.
Then there is the $\forall$-gadget, that contains ``sub-gadgets'' representing the literals $\tildey_j$ and the terms $\tau_g$. These gadgets
will allow for the values of the variables $y_j$ to be set in all possible ways.
If any setting of these values kills all terms not yet killed by the variables $x_i$, this gadget will contain a subgraph
$\HG$ that strictly dominates $\GG$. 

In addition to these gadgets, digraph $\DG$ has three auxiliary vertices $a$, $a'$ and $b$. Vertices $a$ and $a'$ are
connected by arcs $(a,a')$ and $(a',a)$. Vertex $a$ also has some outgoing arcs that will be described later.
Vertex $b$ is connected by arcs to and from all other vertices of $\DG$ except $a$ and $a'$.

Next, we describe the gadgets (for now, we specify only their vertices and arcs --- the preference posets will be defined later). 
The $\exists$-gadget corresponding to $x_i$ is shown in Figure~\ref{fig: sigma2 reduction 1}.
It's constructed as follows:
{
\renewcommand\labelitemi{---}
\begin{itemize}[leftmargin=*]
\item 
For $i=1,...,k$, create vertices $x_i$, $\barx_i$, $z_i$ and $\barz_i$, with arcs $(a,x_i)$, $(a,\barx_i)$,
$(a,z_i)$, $(a,\barz_i)$, $(x_i,\barx_i)$, $(\barx_i,x_i)$, $(x_i,z_i)$, and $(\barx_i,\barz_i)$. 
Throughout the proof we will use notation $\tildez_i$ for the vertex corresponding to $\tildex_i$,
that is $\tildez_i = z_i$ if $\tildex_i = x_i$, and $\tildez_i = \barz_i$ if $\tildex_i = \barx_i$.
\end{itemize}
}


\begin{figure*}[ht]
\centering
\includegraphics[width=4.75in]{} 
\caption{The construction of digraph $\DG$ in the proof of $\SigmaTwoP$-hardness. This figure shows vertices $a$, $a'$, $b$, and
an $\exists$-gadget for variable $x_i$.
The arcs to and from $b$ are shown as bi-directional arrows at $b$.}
\label{fig: sigma2 reduction 1}
\end{figure*}

\begin{figure*}[ht]
\centering
\includegraphics[width=6in]{} 
\caption{The construction of digraph $\DG$ in the proof of $\SigmaTwoP$-hardness. This figures shows the
$\forall$-gadget, namely the
part of $\DG$ that contains the vertices that simulate setting the values of the $y_j$-variables and the terms $\tau_g$.
The arcs to and from $b$ are shown as bi-directional arrows at $b$.}
\label{fig: sigma2 reduction 2}
\end{figure*}


The $\forall$-gadget is shown in Figure~\ref{fig: sigma2 reduction 2}. It's constructed as follows:
{
\renewcommand\labelitemi{---}
\begin{itemize}[leftmargin=*]
\item 
For $j =0,...,l$, create vertices $q_j$. For $j = 1,...,l$, create vertices $y_j$ and $\bary_j$ and 
arcs $(q_{j-1},y_j)$, $(q_{j-1},\bary_j)$, $(y_j, q_j)$, and $(\bary_j,q_j)$.
\item
For $g = 0,...,m$, create vertices $p_g$. For $g = 1,...,m$,
create vertices $\tau_g$ and arcs $(p_{g-1},\tau_g)$ and $(\tau_g,p_g)$.
\item 
Create arcs $(q_l,p_0)$ and $(p_m,q_0)$.
\item
For each $g = 1,...,m$, and for each literal $\tildey_j$ in $\tau_g$, create arc $(\tildey_j,\tau_g)$.
\end{itemize}
}

To complete the construction of $\DG$, we add arcs between $\exists$-gadgets and the $\forall$-gadget:

{
\renewcommand\labelitemi{---}
\begin{itemize}[leftmargin=*]
\item
For each $g = 1,...,m$, if $\tildex_i$ is the $\bfx$-literal in $\tau_g$ (there is exactly one, by the
definition of $\ExistsForallDNFOneX$), create arcs $(\tildex_i,\tau_g)$ and $(\tildez_i,\tau_g)$.
\end{itemize}
}



Next, we need to define preference posets for all vertices. As explained in Section~\ref{sec:swap systems},
all preference posets are specified by their list of generators. An outcome $\outcomepair{\outcomein{}}{\outcomeout{}}$
of each vertex $v$ is specified by lists $\outcomein{}$ and $\outcomeout{}$ of its in-neighbors and out-neighbors, respectively. 
With this convention, the generators of all preference posets are:

{
\renewcommand\labelitemi{---}
\begin{itemize}[leftmargin=*]
\item
Vertices $a$, $a'$, and $b$ do not have any generators.
\item 
The generators for the $\exists$-gadget corresponding to variable $x_i$ are as follows.
For each literal $\tildex_i$, its generators are
$\Dealv{\tildex_i} \prec \outcomepair{b}{b,{\bartildex}_i, T(\tildex_i)}$ and $\Dealv{\tildex_i} \prec \outcomepair{b,\bartildex_i}{b, \tildez_i}$,
where ${\bartildex}_i$ is the negation of $\tildex_i$ and $T(\tildex_i)$ is the set of terms that contain literal $\tildex_i$.
The generators of $\tildez_i$ are $\Dealv{\tildez_i} \prec \outcomepair{b}{b}$ and $\DealDv{}{\tildez} \prec \outcomepair{b, \tildex_i}{b, T(\tildex_i)}$.
\item
For each literal $\tildey_j$, its generators are
$\Dealv{\tildey_j} \prec \outcomepair{b}{b}$ and
$\outcomepair{b}{b} \prec \outcomepair{q_{j-1}}{q_{j},T(\tildey_j)}$.
The generators of  $q_j$, where $j\notin\braced{0,l}$, are $\Dealv{q_j} \prec \outcomepair{b}{b}$ and
$\outcomepair{b}{b} \prec \outcomepair{\tildey_j}{\tildey_{j+1}}$, for all literals 
$\tildey_j \in \braced{y_j,\bary_j}$ and $\tildey_{j+1} \in  \braced{y_{j+1},\bary_{j+1}}$.
\item
The generators of $q_0$ are $\Dealv{q_0} \prec \outcomepair{b}{b}$ and 
$\outcomepair{b}{b} \prec \outcomepair{p_m}{\tildey_{1}}$, for all $\tildey_1 \in \braced{y_1,\bary_1}$.
The generators of $q_l$ are $\Dealv{q_l} \prec \outcomepair{b}{b}$ and 
$\outcomepair{b}{b} \prec \outcomepair{\tildey_{l}}{p_0}$, for all $\tildey_l \in \braced{y_l,\bary_l}$.
\item
For each term $\tau_g$, letting $\tildex_i$ be the unique $\bfx$-literal in $\tau_g$,
its generators are:
$\Dealv{\tau_g} \prec \outcomepair{b,\tildex_i}{b}$,
$\outcomepair{b,\tildex_i}{b} \prec \outcomepair{p_{g-1},L}{p_g}$ 
for any subset $L$ of the $\bfy$-literals in $\tau_g$,
$\Dealv{\tau_g} \prec \outcomepair{b,\tildez_i}{b}$,
and
$\outcomepair{b,\tildez_i}{b} \prec \outcomepair{p_{g-1},L'}{p_g}$ 
for any \emph{non-empty} subset $L'$ of the $\bfy$-literals in $\tau_g$.
For each $p_g$, where $g\notin\braced{0,m}$, its generators are $\Dealv{p_g} \prec \outcomepair{b}{b}$ and
$\outcomepair{b}{b} \prec \outcomepair{\tau_g}{\tau_{g+1}}$.
\item
The generators of $p_0$ are $\Dealv{p_0} \prec \outcomepair{b}{b}$ and 
$\outcomepair{b}{b} \prec \outcomepair{q_l}{\tau_{1}}$.
The generators of $p_m$ are $\Dealv{p_m} \prec \outcomepair{b}{b}$ and 
$\outcomepair{b}{b} \prec \outcomepair{\tau_{m}}{q_0}$.
\end{itemize}
}

With this, the description of $\swapSys$ is complete. The construction of $\swapSys$ clearly takes time
that is polynomial in the size of $\alpha$. Applying Theorem~\ref{thm:uni-and-snash},  it remains to show that
$\alpha$ is true if and only if $\DG$ has a spanning subgraph $\GG$ with properties (c.1)-(c.3).

The argument is based on several ideas. 
One, We design the preference posets of $\tildex_i$'s so that $\GG$ is forced to choose between two possible subsets of arcs within the $\exists$-gadget.
The choice between these two subsets of arcs corresponds to choosing a truth assignment for variable $x_i$.
We focus on the literals $\tildex_i$ that are set to false, since these kill the terms where they appear.
If $\tildex_i$ is set to false, its arcs to the terms $\tau_g$'s in which the literal appears will be included in $\GG$ (the first subset), otherwise its arc to $\tildez_i$ will be included in $\GG$ (the second subset).

Another idea is that vertices outside of the $\forall$-gadget have their preference posets defined in such a way that their arcs
in $\GG$ define an outcome that is already the best for them.
Therefore, if a subgraph $\HG$ that strictly dominates $\GG$ does indeed exist, we know it must appear in the $\forall$-gadget.
This leads into the key idea of the $\forall$-gadget.
The vertices in this gadget can have outcomes that are better than their outcomes in $\GG$.
All the arcs in these better outcomes together form the cycle
\begin{equation}
    \begin{array}{lcl}
	\calC &=&	
	q_0 \to \tildey_1 \to ... \to \tildey_l\to q_l \to 
	\\ &&
	p_0 \to \tau_1 \to ... \tau_m \to p_m \to q_0
	\end{array}
	\label{eqn: sigma2-complete cycle}
\end{equation}
for some choice of the literals $\tildey_1, ..., \tildey_l$.
We design the preference posets of each $\tau_g$ so that its outcome in $\GG$ can only be improved (specifically, towards $\calC$) only if it receives an arc from one of its literals --- in other words, if it is killed by that literal.
This way, $\GG$ will have a strictly dominating subgraph $\HG$ (namely cycle $\calC$) only if all terms are killed, i.e. when $\alpha$ is false. 
The formal proof follows.

\smallskip
$(\Rightarrow)$
Suppose $\alpha$ is true. Fix some truth assignments $\bfx\mapsto \bfphi$ for which 
$\forall \bfy \beta(\bfphi,\bfy)$ is true. This means that for each truth assignment $\bfy\mapsto \bfpsi$
the boolean expression $\beta(\bfphi,\bfpsi)$ is true. For each truth assignment $\bfy\mapsto \bfpsi$ we can thus
choose an index ${h(\bfpsi)}$ for which term $\tau_{h(\bfpsi)}$ is true.

Using this assignment $\bfx\mapsto \bfphi$, we construct a spanning subgraph $\GG$ of $\DG$ that satisfies the three conditions (c.1)-(c.3). 
$\GG$ will contain all vertices from the above construction and all arcs that connect
$b$ to all other vertices except $a$ and $a'$, in both directions. Vertices $a$ and $a'$ will be connected by arcs $(a,a')$ and $(a',a)$.
This makes $\GG$ spanning and piece-wise strongly connected, with one strongly connected component consisting of vertices $a$ and $a'$
and the other consisting of all other vertices. So (c.1) holds. 

Next, we define the arcs of $\GG$ for the vertices in the $\exists$-gadgets. For any given $i$,
if $\phi(x_i) = 1$, add to $\GG$ the following arcs:
$(x_i,z_i)$, $(\barx_i,x_i)$, all arcs $(z_i,\tau_j)$ for terms $\tau_j\in T(x_i)$,
and all arcs $(\barx_i,\tau_j)$ for terms $\tau_j \in T(\barx_i)$.
Symmetrically, if $\phi(x_i) = 0$, add to $\GG$ the following arcs:
$(\barx_i,\barz_i)$, $(x_i,\barx_i)$, all arcs $(\barz_i,\tau_j)$ for terms $\tau_j\in T(\barx_i)$,
and all arcs $(x_i,\tau_j)$ for terms $\tau_j \in T(x_i)$.
(Note that we add the arcs from false literals to the terms that they kill, and from true literals to the corresponding nodes $\tildez_i$.)
We now need to verify conditions~(c.2) and~(c.3).

\smallskip

Condition~(c.2) can be verified by routine inspection of all nodes. For each vertex $v$ we need to check that
$\DealDv{\GG}{v}\succeq \DealDv{\DG}{v}$.
For $v \in \braced{a',b}$, we have $\DealDv{\GG}{v} = \DealDv{\DG}{v}$.
For $v =a$, $\DealDv{\GG}{a} = \outcomepair{a'}{a'} \succ \DealDv{\DG}{v}$.
For $v = \tildex_i$ there are two cases: either $\DealDv{\GG}{\tildex_i} = \outcomepair{b,\bartildex_i}{b,\tildez_i}$
(if $\phi(\tildex_i) = 1$) or $\DealDv{\GG}{\tildex_i} = \outcomepair{b}{b,\bartildex_i,T(\tildex_i)}$  (if $\phi(\tildex_i) = 0$);
in both cases $\DealDv{\GG}{\tildex_i}  \succeq \DealDv{\DG}{\tildex_i}$.
For $v = \tildez_i$, similarly, either 
$\DealDv{\GG}{\tildez_i} = \outcomepair{b,\tildex_i}{b,T(\tildex_i)}$ (if $\phi(\tildex_i)= 1$)
or 
$\DealDv{\GG}{\tildez_i} = \outcomepair{b}{b}$ (if $\phi(\tildex_i)= 0$);
in both cases $\DealDv{\GG}{\tildez_i}  \succeq \DealDv{\DG}{\tildez_i}$.
Finally, we examine the vertices in the $\forall$-gadget. If $v \in \braced{p_g}_{g=0}^m \cup \braced{y_j,\bary_j}_{j=1}^l \cup \braced{q_j}_{j=0}^l$
then $\DealDv{\GG}{v} = \outcomepair{b}{b}\succeq \DealDv{\DG}{v}$.
Consider a vertex $v = \tau_g$, for some $g$, and let $\tildex_i$ be the $\bfx$-literal in $\tau_g$.
If $\phi(\tildex_i)=1$ then $\DealDv{\GG}{\tau_g} = \outcomepair{b,\tildez_i}{b}$,
and if $\phi(\tildex_i)=0$ then $\DealDv{\GG}{\tau_g} = \outcomepair{b,\tildex_i}{b}$.
In both cases, $\DealDv{\GG}{\tau_j}  \succeq \DealDv{\DG}{\tau_j}$.

\smallskip

It remains to verify condition~(c.3). Let $\HG$ be a subgraph of $\DG$, and suppose that
$\HG$ dominates $\GG$, that is $\DealDv{\HG}{v} \succeq \DealDv{\GG}{v}$ for all vertices $v$ in $\HG$.
We will show that is possible only if $\HG$ is either equal to $\GG$ or to one of the two
strongly connected components of $\GG$.

$\HG$ cannot contain any arcs from $a$ to literals $\tildex_i$, because then it would not
dominate $\GG$ at vertex $a$. There is also no subgraph consisting of $a$ and $a'$ that
strictly dominates $\GG$. We can thus assume that $\HG$ is a subgraph of $\DG' = \DG\setminus\braced{a,a'}$.
Let also $\GG' = \GG\setminus \braced{a,a'}$.
The rest of the argument is divided into two cases, depending on whether $\HG$ includes
vertex $b$ or not.

Suppose first that $\HG$ includes vertex $b$.
In this case, we claim that $\HG = \GG'$, and therefore $\HG$ does not strictly dominate $\GG$.
To show this, observe first that since $\DealDv{\HG}{b} \succeq \DealDv{\GG}{b}$, $\HG$ must contain all incoming arcs of $b$.
So $\HG$ must in fact contain all vertices of $\DG'$.
And each vertex $v\in \DG'\setminus\braced{b}$ does not have any outcome better than $\DealDv{\GG}{v}$ that does
not have arc $(b,v)$. Therefore $\HG$ must also contain all outgoing arcs of $b$.

The idea now is to show that for each vertex $v\in \DG'\setminus\braced{b}$, the outcome of $v$ in $\GG'$ is already
best possible among the outcomes that have incoming and outgoing arcs from $b$. A more formal argument actually
focuses on arcs rather than vertices, and involves two observations:
(i) For each arc $(u,v)\in \GG'$, vertex $v$ does not have any outcome that does not include incoming arc $(u,v)$
and is better than $\DealDv{\GG}{v}$.
(ii) For each arc $(u,v)\in \DG'\setminus\GG'$, vertex $u$ does not have any outcome that 
includes outgoing arc $(u,v)$ and is better than $\DealDv{\GG}{u}$.
These observations imply that $\DealDv{\HG}{v} \succeq \DealDv{\GG}{v}$ for all $v\in \DG'$, implying in turn that $\HG = \GG'$, as claimed.

Both observations~(i) and~(ii) can be established through routine although a bit tedious inspection of all arcs in $\DG'$.
(The process here is the same as in the $\NP$-hardness proof in Section~\ref{sec: simple np-hardness}.)

We start with the vertices in the $\exists$-gadgets. 
Consider some $\tildex_i$ and suppose $\phi(\tildex_i) = 1$ (symmetric for when $\phi(\tildex_i) = 0$). 
There is no outcome of $\tildex_i$ better than $\DealDv{\GG}{\tildex_i} = \outcomepair{b, \bartildex_i}{b,\tildez_i}$ that does not 
include the incoming arc $(\bartildex_i, \tildex_i)$. Also, there is no better outcome that includes arc $(\tildex_i,\tau_j)$, 
for each term $\tau_j\in T(\tildex_i)$. 
For ${\bartildex}_i, \DealDv{\GG}{\bartildex_i} = \outcomepair{b}{b,\tildex_i,T(\bartildex_i)}$.
There is no outcome of ${\bartildex}_i$ better than $\DealDv{\GG}{\bartildex_i}$ that
includes arc $({\bartildex}_i, {\bartildez}_i)$.
For a vertex $\tildez_i$ (still assuming that $\phi(\tildex_i) = 1$), 
$\DealDv{\GG}{\tildez_i} = \DealDv{\DG}{\tildez_i} = \outcomepair{b,\tildex_i}{b,T(\tildex_i)}$;
there is no better outcome that does not include $(\tildex_i, \tildez_i)$.
Lastly, there is no outcome of $\bartildez_i$ better than $\DealDv{\GG}{\bartildez_i} = \outcomepair{b}{b}$.

We move on to the vertices in the $\forall$-gadget.
For any vertex $v \in \smbraced{p_g}_{g=0}^m \cup \smbraced{\tildey_j}_{j=1}^l \cup \smbraced{q_j}_{j=0}^l$
we have $\DealDv{\GG}{v} = \outcomepair{b}{b}$ and, by the earlier argument, $\HG$ contains arc $(v,b)$.
But this $v$ does not have any outcome with outgoing arc $(v,b)$ that is better than $\DealDv{\GG}{v}$.
The argument when $v = \tau_g$, for some $g$, is similar. If the  unique $\bfx$-literal in $\tau_g$
is $\tildex_i$, then $\DealDv{\GG}{\tau_g} = \outcomepair{b, \tildez_i}{b}$ (if $\phi(\tildex_i) = 1$) or
$\DealDv{\GG}{\tau_g} = \outcomepair{b, \tildex_i}{b}$ (if $\phi(\tildex_i) = 0$).
In either case, as before, there is no outcome better than $\DealDv{\GG}{\tau_g}$ among the outcomes of $\tau_g$ that contain an outgoing arc to $b$.

\smallskip

Next, we consider the case when $\HG$ does not include vertex $b$. First, we observe that
$\HG$ cannot contain any vertices in the $\exists$-gadgets (namely vertices $\tildex_i$ and $\tildez_i$).
This is because for these vertices $v$ there is no outcome that is better than $\DealDv{\GG}{v}$ and does not 
include the incoming arc from $b$.

We can thus assume that $\HG$ is a subgraph of the $\forall$-gadget. (This is actually the most crucial case.)
Let $\DG''$ be the subgraph of $\DG$ induced by the vertices in the $\forall$-gadget.
Observe that every vertex $v$ in $\DG''$ has at least one outcome better than $\DealDv{\GG}{v}$ that does not include arcs to and from $b$,
so now we need a more subtle argument than the one we used earlier. 
For $v = \tau_g$, there are two cases.  The first is when $\tau_g$ has an incoming arc from its unique $\bfx$-literal $\tildex_i$ 
(which means $\phi(\tildex_i)=0$), in which case $\DealDv{\GG}{\tau_g} = \outcomepair{b, \tildex_i}{b}$.
By the preference poset of $\tau_g$, $\tau_g$ can improve this outcome by switching to $\outcomepair{p_{g-1},L}{p_g}$, 
for any set $L$ of the $\bfy$-literals in $\tau_g$. 
That is, this $\tau_g$ can improve its outcome regardless of whether it receives any arcs from its $\bfy$-literals.
The second case is when $\tau_g$ does not have an incoming arc from its $\bfx$-literal $\tildex_i$  
(which means $\phi(\tildex_i)=1$), in which case $\DealDv{\GG}{\tau_g} = \outcomepair{b, \tildez_i}{b}$.
By the preference poset of $\tau_g$, $\tau_g$ can improve its outcome by switching to $\outcomepair{p_{g-1},L'}{p_g}$ 
for any non-empty subset $L'$ of the $\bfy$-literals in $\tau_g$.
That is, this $\tau_g$ can improve its outcome only if it receives an arc from at least one of its $\bfy$-literals.
For $v = \tildey_j$, $\DealDv{\GG}{\tildey_j} = \outcomepair{b}{b}$. By the preference poset of $\tildey_j$, $\tildey_j$ 
can improve its outcome by switching to $\outcomepair{q_{j-1}}{q_{j},T(\tildey_j)}$, which results in creating arcs to the terms in $T(\tildey_j)$.
For $v = q_j$, $\DealDv{\GG}{q_j} = \outcomepair{b}{b}$. 
By the preference poset of $q_j$, where $j\notin\braced{0,l}$, the following outcomes of $q_j$ are better than $\DealDv{\GG}{q_j} \suchthat$
$\outcomepair{y_j}{y_{j+1}}$, $\outcomepair{\bary_j}{y_{j+1}}$, $\outcomepair{y_j}{\bary_{j+1}}$ or $\outcomepair{\bary_j}{\bary_{j+1}}$.
This means the preference posets of $q_{j-1}$ and $q_j$ allow only one of $y_j$ or $\bary_j$ to make the switch described above. 
(This corresponds to choosing which of these two literals is false.)
The same reasoning holds for $q_0$ and $q_l$, except their improved outcomes are $\outcomepair{p_m}{\tildey_1}$ and $\outcomepair{\tildey_l}{p_0}$ respectively.
For $v = p_g$, $\DealDv{\GG}{p_g} = \outcomepair{b}{b}$.
By the preference poset of $p_g$, where $g\notin\braced{0,m}$, $p_g$ can improve its outcome by switching to $\outcomepair{\tau_g}{\tau_{g+1}}$.
This means $p_g$ can only switch given that $\tau_g$ makes one of switches described above 
(either from $\outcomepair{b, \tildex_i}{b}$ to $\outcomepair{p_{g-1},L}{p_g}$ or from
$\outcomepair{b, \tildez_i}{b}$ to $\outcomepair{p_{g-1},L'}{p_g}$).
The same reasoning holds for $p_0$ and $p_m$, except their improved outcomes are $\outcomepair{q_l}{\tau_1}$ and $\outcomepair{\tau_m}{q_0}$ respectively.

Importantly, the outcome improvements in the above paragraph are possible only if \emph{all the vertices in $\DG''$ together
switch their outcomes as described in the above paragraph}.
This would correspond to choosing a subgraph $\HG$ that strictly dominates $\GG$ (namely the cycle given in~(\ref{eqn: sigma2-complete cycle})).
We now show this subgraph $\HG$ cannot exist, by way of contradiction.
Suppose such a subgraph $\HG$ that strictly dominates $\GG$ does exist.
Since $\HG$ strictly dominates $\GG$, and all vertices must improve together, we know every vertex $v \in \HG$ strictly improves their outcome from $\DealDv{\GG}{v}$.
We focus on the outcome improvements made by the term vertices $\tau_1...\tau_m$.
Let us fix some term vertex $\tau_g$ and let $\tildex_i$ be the unique $\bfx$-literal of $\tau_g$.

As described above, $\tau_g$ can improves its outcome in one of two ways, depending on $\DealDv{\GG}{\tau_g}$; specifically whether or not $(\tildex_i, \tau_g) \in \GG$.
If $(\tildex_i, \tau_g) \in \GG$, then $\tau_g$ can improve its outcome from $\DealDv{\GG}{\tau_g}$ 
by simply ``switching''.
Otherwise, if $(\tildex_i, \tau_g) \not\in \GG$, then $\tau_g$ can only switch to an improved outcome if it receives an arc from any of its $\bfy$-literals in $\HG$.  
In other words, each $\tau_g$ must have either received its incoming arc from its $\bfx$-literal in $\GG$ or received an incoming arc from any of its $\bfy$-literals in $\HG$.

Recall though that $\tau_g$ receives an arc from one of its literals only if that literal is set to false.
This implies that each term $\tau_g$ is killed, either by its $\bfx$-literal or one of its $\bfy$-literals, depending on how it improves its outcome.
However, if each term is killed under the assignments $\bfx \mapsto \bfphi$ and $\bfy \mapsto \bfpsi$, we know $\beta(\bfphi, \bfpsi)$ is false, contradicting our original assumption.

We show this more formally, starting with the terms being killed by the assignment of the $\bfx$ variables.
In graph $\GG$, for each variable $x_i$, if $\phi(x_i) = 1$, then for each term $\tau_g$ that contains $\barx_i$, $(\barx_i, \tau_g) \in \GG$.
On the other hand, if $\phi(x_i) = 0$, then for each term $\tau_g$ that contains $x_i$, $(x_i, \tau_g) \in \GG$.
In both cases, $\tau_g$ is killed.
Within the swap system, this is signified by vertex $\tau_g$'s preference to switch from $\DealDv{\GG}{\tau_g}$ to $\outcomepair{p_{g-1},L}{p_g}$.

Now we address the terms survived by the assignment $\bfx \mapsto \bfphi$.
The surviving term vertices are those that did not receive their incoming arcs from their $\bfx$-literals in $\GG$.
Since we know each surviving term vertex $\tau_g$ strictly improves their outcome in $\HG$, the only remaining option is that each $\tau_g$ has an incoming arc from one of their $\bfy$-literals in $\HG$.

We use this to construct the assignment $\bfy \mapsto \psi$ so that $\beta(\bfphi, \bfpsi)$ is false.
This is quite simple:
for each $\bfy$-literal $\tildey_j$ that has an outgoing arc to a surviving term vertex in $\HG$, we assign $\psi(\tildey_j) = 0$.
We know that $\bfpsi$ must be a consistent assignment, i.e. it cannot be the case that $\tildey_j$ and $\bartildey_j$ are both assigned to true/false.
This is because only either $\tildey_j$ or $\bartildey_j$ are in $\HG$, by design of the preference posets of vertices $q_{j-1}$ and $q_j$.
Thus, since we can construct a consistent assignment $\bfy \mapsto \bfpsi$, given the assignment $\bfx \mapsto \bfphi$, so that every term is killed, we know that $\beta(\bfphi, \bfpsi)$ is false, contradicting our original assumption.


\smallskip
$(\Leftarrow)$
Assume now that $\DG$ has a spanning subgraph $\GG$ that satisfies properties~(c.1) and~(c.2).
From $\GG$ we will construct an assignment $\bfphi$ for the $\bfx$-variables that makes $\forall \bfy \beta(\bfphi,\bfy)$ true.
Condition~(c.1) implies that $\GG$ cannot have any arcs $(a, \tildex_i)$ nor $(a, \tildez_i)$, so vertices $\braced{a,a'}$ will form one strongly connected component of $\GG$.
As before, let $\DG' = \DG\setminus \braced{a,a'}$ and $\GG' = \GG\setminus \braced{a,a'}$.
We focus on $\GG'$.

We first argue that $\GG'$ is in fact strongly connected and it contains $b$. This is quite simple. 
Condition~(c.2) states that the outcome of $b$ in $\GG$ is at least as good as its outcome in $\DG$, so $\GG'$ must contain all incoming arcs of $b$.
On the other hand, each vertex $v \in \GG'\setminus \braced{b}$ does not have an outcome better than $\DealDv{\DG}{v}$ that includes outgoing arc $(v,b)$
but does not include incoming arc $(b,v)$. Thus, $\GG'$ must also contain all outgoing arcs of $b$, 
which is already sufficient to make $\GG'$ strongly connected.


For each literal vertex $\tildex_i$,
we will refer to any outcome that contains $T(\tildex_i)$ in its set of outgoing arcs as a $0$-outcome of $\tildex_i$,
and to the exact outcome $\outcomepair{b,\bartildex_i}{b, \tildez_i}$ as the $1$-outcome $\tildex_i$.
We start with the following claim:

\smallskip

\myclaim{1}{
For each $i$ and each literal $\tildex_i\in\braced{x_i,\barx_i}$, outcome $\DealDv{\GG}{\tildex_i}$ is either a
$0$-outcome or the $1$-outcome of $\tildex_i$. 
Further, for at least one of $x_i$ and $\barx_i$ this outcome is a $0$-outcome. 
}

\smallskip

\begin{proof}
Let us fix a single $\exists$-gadget.
We first show that for literal $\tildex_i \in \braced{x_i, \barx_i}$, the outcome $\DealDv{\GG}{\tildex_i}$ is either a $0$-outcome or the $1$-outcome of $\tildex_i$.
Firstly, we know the incoming and outgoing arcs between $\tildex_i$ and vertex $b$ are included in $\GG'$.
Next, consider any term vertex $\tau_g$ in which term $\tau_g$ contains literal $\tildex_i$.
If we examine the generators of vertex $\tau_g$, limiting ourselves only to the outcomes that include the arcs to and from vertex $b$, we see that $\tau_g$ must receive either an arc from $\tildex_i$ or $\tildez_i$ in order to satisfy condition~(c.2).

We now have two cases: when $\tau_g$ receives an arc from $\tildex_i$ and when $\tau_g$ receives an arc from $\tildez_i$. 
We start with the latter case.
If $\tau_g$ receives arc $(\tildez_i, \tau_g)$, then by $\tildez_i$'s generators, we know that $\tildez_i$ must have received arc $(\tildex_i, \tildez_i)$.
This then implies that $\tildex_i$ received arc $(\bartildex_i, \tildex_i)$.
At this point, $\tildex_i$ is exactly in the $1$-outcome.
We reason similarly about $\bartildex_i$:
starting from some vertex $\tau_g$ for which term $\tau_g$ contains $\bartildex_i$, we know that $\tau_g$ must receive either an arc from $\bartildex_i$ or $\bartildez_i$.
We know $\tau_g$ cannot receive an arc from $\bartildez_i$ because for $\bartildez_i$ to pay arc $(\bartildez_i, \tau_g)$, it must receive arc $(\bartildex_i, \bartildez_i)$.
However, there is no outcome for $\bartildex_i$ that satisfies condition~(c.2) 
in which $\bartildex_i$ pays both arcs $(\bartildex_i, \tildex_i)$ and $(\bartildex_i, \bartildez_i)$. 
Thus, we can conclude that $\bartildex_i$ is the one to pay $\tau_g$.
We can reason about each $\tau_g \in T(\bartildex_i)$ in the same manner, implying that $\bartildex_i$ in fact pays every $\tau_g \in T(\bartildex_i)$.  
This allows us to conclude that $\bartildex_i$ is in a $0$-outcome.

We move on to the former case, when $\tau_g$ receives an arc from $\tildex_i$.
It is easy to see that if $\tildex_i$ pays any term vertex $\tau_g \in T(\tildex_i)$, it must pay all term vertices in $T(\tildex_i)$.
This is because each term vertex $\tau_g \in T(\tildex_i)$ must receive an arc from either $\tildex_i$ or $\tildez_i$, as previously stated.
However, there is no outcome for $\tildex_i$ that satisfies condition~(c.2)
in which $\tildex_i$ pays $\tau_g$ and $\tildez_i$, thus $\tildex_i$ is responsible for paying all term vertices $\tau_g \in T(\tildex_i)$.
This is sufficient to show that $\tildex_i$ is in a $0$-outcome.
We move onto vertex $\bartildex_i$.
Unlike the previous case, the outcome of $\bartildex_i$ is not directly influenced by the outcome of $\tildex_i$.
When we consider some term vertex $\tau_g \in T(\bartildex_i)$,
it is possible for $\tau_g$ to receive an arc from either $\bartildex_i$ or $\bartildez_i$.
We show that $\bartildex_i$ ends in a $0$-outcome or the $1$-outcome, respectively.
The first possibility is that $\tau_g$ receives arc $(\bartildex_i, \tau_g)$.
We apply the same reasoning as we did for $\tildex_i$:
if any $\tau_g \in T(\bartildex_i)$ receives its arc from $\bartildex_i$, then every $\tau_g \in T(\bartildex_i)$ also receives its arc from $\bartildex_i$.
This is again sufficient to show that $\bartildex_i$ is in a $0$-outcome.
The second possibility is that $\tau_g$ receives arc $(\bartildez_i, \tau_g)$.
For $\bartildez_i$ to pay this arc, it must receive arc $(\bartildex_i, \bartildez_i)$.
For $\bartildex_i$ to pay this arc, it must receive arc $(\tildex_i, \bartildex_i)$.
However, this is exactly the $1$-outcome for $\bartildex_i$.
We note that this requires $\tildex_i$ to pay arc $(\tildex_i, \bartildex_i)$, changing the outcome of $\tildex_i$.
Importantly though, $\tildex_i$ remains in a $0$-outcome and still satisfies condition~(c.2) as $\Dealv{\tildex_i} \prec \outcomepair{b}{b,{\bartildex}_i, T(\tildex_i)}$

It is easy to see that these two cases are exhaustive by inspection of the preference posets of $\tau_g$.
With this, we have shown both parts of claim~(1):
firstly, for each $i$, $\tildex_i$ and $\bartildex_i$ are either in a $0$-outcome or the $1$-outcome, and
secondly, at least one of $\tildex_i$ or $\bartildex_i$ are in a $0$-outcome, regardless of which case.
\end{proof}

For convenience, we now introduce the concept of a \emph{pseudo-truth assignment}. 
A pseudo-truth assignment is an assignment $\bfxi$ of boolean values to the $\bfx$-literals (not just variables) such that for each variable $x_i$ at most one of $\bfxi(x_i)$ and $\bfxi(\barx_i)$ is $1$.
The value of $\forall \bfy \beta(\bfxi,\bfy)$, for such a pseudo-truth assignment $\bfxi$, can be computed just like for standard truth assignments. 
If $\alpha$ has a satisfying pseudo-truth assignment $\bfxi$ then it also has a satisfying standard truth assignment $\bfphi$: simply let $\bfphi(x_i) = \bfxi(x_i)$ for all $i$. 
This works because if a term $\tau_g$ of $\beta$ is not killed by $\bfxi$ then it is also not killed by $\bfphi$.

Thus it suffices to show how we can convert $\GG$ into a pseudo-truth assignment $\bfxi$ for the $\bfx$-variables that
satisfies $\alpha$.
We define $\bfxi$ as follows: for each $i$, 
if $\DealDv{\GG}{\tildex_i}$ is a $0$-outcome then $\bfxi(\tildex_i) = 0$, and 
if $\DealDv{\GG}{\tildex_i}$ is the $1$-outcome then $\bfxi(\tildex_i) = 1$.

\smallskip
\myclaim{2}{
$\bfxi$ is a satisfying pseudo-truth assignment for the $\bfx$-variables that satisfies $\alpha$.
}
\smallskip

\begin{proof}
We begin by supposing the pseudo-truth assignment $\bfxi$ is not a satisfying assignment for $\alpha$, towards contradiction.
This would mean that $\forall y \beta(\bfxi, \bfy)$ is false.
We fix an assignment of the $\bfy$-variables $\bfpsi$ such that $\beta(\bfxi, \bfpsi)$ is false.
The idea is to now take $\bfpsi$ and construct a subgraph $\HG$ that strictly dominates $\GG$, contradicting our original assumption. 
Actually, $\HG$ will be a subgraph of the $\forall$-gadget of the form given in~(\ref{eqn: sigma2-complete cycle}), as before.

We now construct $\HG$ as follows:
add all vertices $v \in \smbraced{p_g}_{g=0}^m \cup \smbraced{q_j}_{j=0}^l \cup \smbraced{\tau_g}_{g=1}^m$ to $\HG$.
For each $j$, if $\bfpsi(y_j) = 1$, add $\bary_j$, otherwise, if $\bfpsi(y_j) = 0$, add $y_j$ (we include the literal that is false).
Now that we have all the vertices, we must define the arcs.
Again, $\HG$ will have the form of the cycle given in~(\ref{eqn: sigma2-complete cycle}).
For each $\tildey_j \in \HG$, add arcs $(q_{j-1}, \tildey_j)$ and $(\tildey_j, q_j)$.
Add arcs $(q_l, p_0)$ and $(p_m, q_0)$.
For each $\tau_g \in \HG$, add arcs $(p_{g-1}, \tau_g)$ and $(\tau_g, p_g)$.
Lastly, for each $\tildey_j \in \HG$, add arcs $(\tildey_j, \tau_g)$ for $\tau_g \in T(\tildey_j)$.

The next step is to show that $\HG$ indeed strictly dominates $\GG$.
It is easy to see that for vertices 
$v \in \smbraced{p_g}_{g=0}^m \cup \smbraced{\tildey_j}_{j=1}^l \cup \smbraced{q_j}_{j=0}^l$, 
$\DealDv{\GG}{v} \prec \DealDv{\HG}{v}$ holds by simple inspection of each vertex's preference poset.
Thus, we focus on the term vertices $\tau_1,..,\tau_m$.
For each term vertex $\tau_g$, outcome $\DealDv{\HG}{\tau_g}$ is an improvement in comparison to $\DealDv{\GG}{\tau_g}$ only if (at least) one of the two following conditions are satisfied:
(1) $\tau_g$ received its incoming arc from its $\bfx$-literal in $\GG$, or 
(2) $\tau_g$ receives an incoming arc from any of its $\bfy$-literals in $\HG$. 

We claim that one of these two conditions holds for every term $\tau_g$.
Suppose this is not true, towards contradiction, and there is a term vertex $\tau_g$ that does not satisfy either condition.
Specifically, $\tau_g$ does not receive its incoming arc from its $\bfx$-literal in $\GG$, nor
does $\tau_g$ receive any of its incoming arcs from any of its $\bfy$-literals in $\HG$.
If this were the case, then $\tau_g$ is actually true, contradicting the fact that $\beta(\bfxi, \bfpsi)$ is false.
Let $\tildex_i$ be the $\bfx$-literal of $\tau_g$.
If $(\tildex_i, \tau_g) \not\in \GG$, then $\DealDv{\GG}{\tildex_i}$ is actually the $1$-outcome for $\tildex_i$.
This implies that $\bfxi(\tildex_i) = 1$.
Since $\tau_g$ does not satisfy the second condition, we know it does not receive a single arc from any of its $\bfy$-literals.
However, recall how we used $\bfpsi$ to construct $\HG$; a $\bfy$-literal is added to $\HG$ only if that literal is \emph{false} in $\bfpsi$.
This means each of these $\bfy$-literals of $\tau_g$ are actually \emph{true} in the original assignment of $\bfpsi$.
This implies that the term $\tau_g$ is actually true, contradicting $\beta(\bfxi,\bfpsi)$ being false.

This contradiction gives us the fact that every term vertex $\tau_g$ indeed improves their outcome from $\DealDv{\GG}{\tau_g}$.
With this, we have proven every vertex $v \in \HG$ improves their outcome from $\DealDv{\GG}{v}$, meaning $\HG$ strictly dominates $\GG$.
However, the existence of such an $\HG$ contradicts our condition~(c.3), implying claim~(2), that the pseudo-truth assignment $\bfxi$ is indeed a satisfying assignment of the $\bfx$-variables for $\alpha$.

\end{proof}


With the truth assignment $\phi$ defined, we need to show that the non-existence of an $\HG$ that strictly dominates $\GG$
implies that the expression $\forall \bfy \beta(\bfphi,\bfy)$ is true.
For this, it's easier to show the contrapositive, namely if there existed some assignment $\bfpsi$ for the
$\bfy$-variables for which $\forall \bfy \beta(\bfphi,\bfpsi)$ is false, we could convert $\bfpsi$ into a
subgraph $\HG$ that strictly dominates $\GG$.

We simply employ the exact same argument we saw in the proof for claim~(2).
We convert the assignment $\bfpsi$ in the exact same manner: for each $y_j$, if $\bfpsi(y_j) = 1$, add $\bary_j$ to $\HG$, otherwise, if $\bfpsi(y_j) = 0$, add $y_j$.
The remainder of $\HG$ is constructed in the exact same way as previously described.
Likewise, the proof that $\HG$ indeed strictly dominates $\GG$ is the same.
Since this contradicts condition~(c.3), we know that the expression $\forall y \beta(\phi, y)$ is in fact true.

\end{proof}

\section{Another Proof of $\NP$-Hardness}
\label{sec: simple np-hardness}


In this section we give a proof of $\NP$-hardness of $\swapAtomic$ that is simpler than the one
in Section~\ref{sec:np_hardness_results}.

\begin{theorem}\label{thm: simple nphard}
$\swapAtomic$ is $\NP$-hard. It remains $\NP$-hard even for strongly connected digraphs.
\end{theorem}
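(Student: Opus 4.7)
The plan is to give a polynomial-time reduction from \ThreeSat that reuses the $\exists$-gadgets of the $\SigmaTwoP$-hardness construction (Theorem~\ref{thm:sigma2complete-app}) and replaces the elaborate $\forall$-gadget with a single clause vertex per clause. This yields a substantially cleaner construction than the $\CNF$-based reduction of Theorem~\ref{thm:nphard}. Given a \ThreeSat instance $\alpha = c_1 \wedge \cdots \wedge c_m$ over variables $x_1, \ldots, x_n$, I would build the swap system $\swapSys = (\DG, \prefP)$ as follows: digraph $\DG$ has a core vertex $b$ connected bi-directionally to every other vertex, ensuring strong connectivity. For each variable $x_i$ there is an $\exists$-gadget with vertices $x_i, \barx_i, z_i, \barz_i$ and the same internal arcs as in Section~\ref{sec: sigma2 completeness}, with $b$ taking on the role of $a$ as the supplier of incoming arcs $(b, \tildex_i)$ and $(b, \tildez_i)$. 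For each clause $c_j$ there is a clause vertex with incoming arcs $(\tildex_i, c_j)$ and $(\tildez_i, c_j)$ for every literal $\tildex_i$ appearing in $c_j$.

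Next I would define the preference posets. The $\exists$-gadget posets are reused essentially verbatim from Section~\ref{sec: sigma2 completeness}, which forces every $\GG$ satisfying~(c.2) to pick, for each variable, either the ``true'' side or the ``false'' side of the gadget. Each clause vertex $c_j$ is given only the minimal generators $\Dealv{c_j} \prec \outcomepair{b, \tildex_i}{b}$ and $\Dealv{c_j} \prec \outcomepair{b, \tildez_i}{b}$ for each literal $\tildex_i$ in $c_j$. Because there is no universal quantifier to simulate, the cycle $\calC$ and the bottleneck structure of the $\tau_g$ vertices become unnecessary: a clause vertex simply needs one incoming arc from one of its literals in order to satisfy~(c.2), and no outcome better than this is ever preferred.

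For the correctness argument, the $(\Rightarrow)$ direction takes a satisfying truth assignment and builds $\GG$ by including, for each $\exists$-gadget, the subset of arcs that encodes the chosen value (exactly as in the $(\Rightarrow)$ direction of Theorem~\ref{thm:sigma2complete-app}), plus one literal-to-clause arc per clause from a literal that satisfies it. Conditions~(c.1) and~(c.2) follow by routine inspection. Condition~(c.3) follows because every vertex already attains a maximal outcome under its preference poset, so no strictly dominating $\HG$ can exist. The $(\Leftarrow)$ direction invokes the structural analysis of Claim~1 of Theorem~\ref{thm:sigma2complete-app} to extract, for each variable $x_i$, a well-defined truth value from the form that $\DealDv{\GG}{\tildex_i}$ must take, and then observes that $\DealDv{\GG}{c_j}$ must contain at least one incoming literal arc, witnessing that the extracted assignment satisfies $c_j$. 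The main obstacle is verifying~(c.3) in the forward direction without appealing to the $\tau_g$ bottleneck mechanism: I expect this to go through because the simplified clause preferences admit no non-trivial outcome improvements, so any candidate strictly dominating $\HG$ would need to improve some $\exists$-gadget vertex, which Claim~1 of Theorem~\ref{thm:sigma2complete-app} directly rules out.
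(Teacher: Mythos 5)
There are genuine gaps here, and they are structural rather than cosmetic. First, your clause posets are not valid posets. You give $c_j$ only $b$ as an out-neighbor, so $\Dealv{c_j}$ has incoming set $\{b\}\cup\{\tildex_i,\tildez_i : \tildex_i \in c_j\}$ and outgoing set $\{b\}$; by inclusive monotonicity (p.2) this forces $\outcomepair{b,\tildex_i}{b}\prec \Dealv{c_j}$, so your declared generator $\Dealv{c_j}\prec\outcomepair{b,\tildex_i}{b}$ contradicts antisymmetry. The paper's constructions avoid exactly this by giving every clause/term vertex an extra out-neighbor ($a_j$ in the main-body $\CNF$ reduction, the $d$-triangle in the appendix reduction, $p_g$ in the $\SigmaTwoP$ reduction), so that the ``improved'' outcome drops an outgoing arc and is generically incomparable to $\Deal$. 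Second, and more fundamentally, importing the $\exists$-gadget semantics without the $\forall$-gadget destroys the encoding of satisfiability: in that gadget a \emph{false} literal pays its clauses while a \emph{true} literal routes through $\tildez_i$, which then pays the clauses. Hence for \emph{every} truth assignment, satisfying or not, each clause vertex receives either its literal arc or its $\tildez_i$ arc, and condition (c.2) at clause vertices carries no information about $\alpha$. In the $\SigmaTwoP$ proof the $\tildez_i$ vertices exist solely to differentiate killed from surviving terms in the (c.3) analysis against the cycle $\calC$; once you delete the $\forall$-gadget there is nothing for that distinction to feed into. The paper's NP-hardness reductions encode SAT the opposite way: only a \emph{true} literal can feed its clause vertex (a false literal is barred by its preferences), so clause domination forces a satisfying assignment.

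Third, replacing $a,a'$ by a bi-directional core $b$ removes the forcing mechanism you are trying to reuse. With $b$ connected both ways to everything, $\GG=\DG$ satisfies (c.1) and (c.2) trivially, every literal attains $\Dealv{\tildex_i}$ in $\GG$, and Claim~1 of the $\SigmaTwoP$ proof cannot be ``invoked'': its proof hinges on the arcs $(a,\tildex_i)$, $(a,\tildez_i)$ being excluded from $\GG$ (by (c.1), since $a$'s only incoming arc is from $a'$), which is what pushes each literal into one of its two non-generic outcomes and defines the assignment. Your justification of (c.3) in the forward direction is also not sound as stated: a true literal's outcome $\outcomepair{b,\bartildex_i}{b,\tildez_i}$ is \emph{not} maximal (dropping the payment to $\tildez_i$ is a generic improvement), so one must rule out coalitional deviations by the arc-by-arc analysis the paper performs (observations (i)/(ii) in the appendix proof), not by maximality. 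Finally, note how the paper handles the ``strongly connected'' clause of the theorem: it first proves hardness with a non-strongly-connected $\DG$ (where $a$ and the $d$-triangle do the forcing via (c.1)), and then adds back-arcs and re-derives the forcing from (c.3). Getting strong connectivity ``for free'' through a bi-directional $b$, as you propose, is precisely what eliminates the forcing and would have to be repaired along the paper's lines.
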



\begin{proof}
The proof is by showing a polynomial-time reduction from $\CNF$. Recall that in $\CNF$
we are given a boolean expression $\alpha$ in conjunctive normal form, and
the objective is to determine whether there is a truth assignment that satisfies $\alpha$. 
In our reduction we convert $\alpha$ into a swap system $\swapSys = (\DG, \prefP)$
such that $\alpha$ is satisfiable if and only if  $\swapSys$ has an atomic swap protocol.

Let $x_1,x_2,...,x_n$ be the variables in $\alpha$.  The negation of $x_i$ is denoted $\barx_i$.
We will use notation $\tildex_i$ for an unspecified literal of variable $x_i$,
that is $\tildex_i \in \braced{x_i,\barx_i}$. Let $\alpha = c_1 \vee c_2 \vee ... \vee c_m$, where each $c_j$ is a clause.
Without loss of generality we assume that each literal appears in at least one clause and
that in each clause no two literals are equal or are negations of each other.

\smallskip

We first describe a reduction that uses a digraph $\DG$ that is not strongly connected. Later
we will show how to modify our construction to make $\DG$ strongly connected.
Digraph $\DG$ is constructed as follows (see Figure~\ref{fig: simple np-hardness gadgets}) :

{
\renewcommand\labelitemi{---}
\begin{itemize}[leftmargin=*]
\item 
For $i=1,...,n$, create vertices $x_i$ and $\barx_i$, connected by arcs $(x_i,\barx_i)$ and $(\barx_i,x_i)$. 
\item
Create two vertices $a,a'$ with arcs $(a,a')$, $(a',a)$, and $(a,x_i)$, $(a,\barx_i)$ for all $i = 1,...,n$.
\item
For $j = 1,...,m$, create vertices $c_j$. For each clause $c_j$ and each literal $\tildex_i$ in $c_j$,
create arc $(\tildex_i,c_j)$. 
\item
Create three vertices $d,d',d''$ with arcs $(d,d')$, $(d',d)$, $(d,d'')$, $(d'',d)$, $(d',d'')$ and $(d'',d')$.
Create also arcs $(c_j,d)$ for all $j = 1,...,m$.
\item 
Create vertex $b$, with arcs $(c_j,b)$ for all $j=1,...,m$ and $(b,x_i)$, $(b,\barx_i)$ for all $i = 1,...,n$.
\end{itemize}
}


\begin{figure*}
\centering
\includegraphics[width=4.75in]{./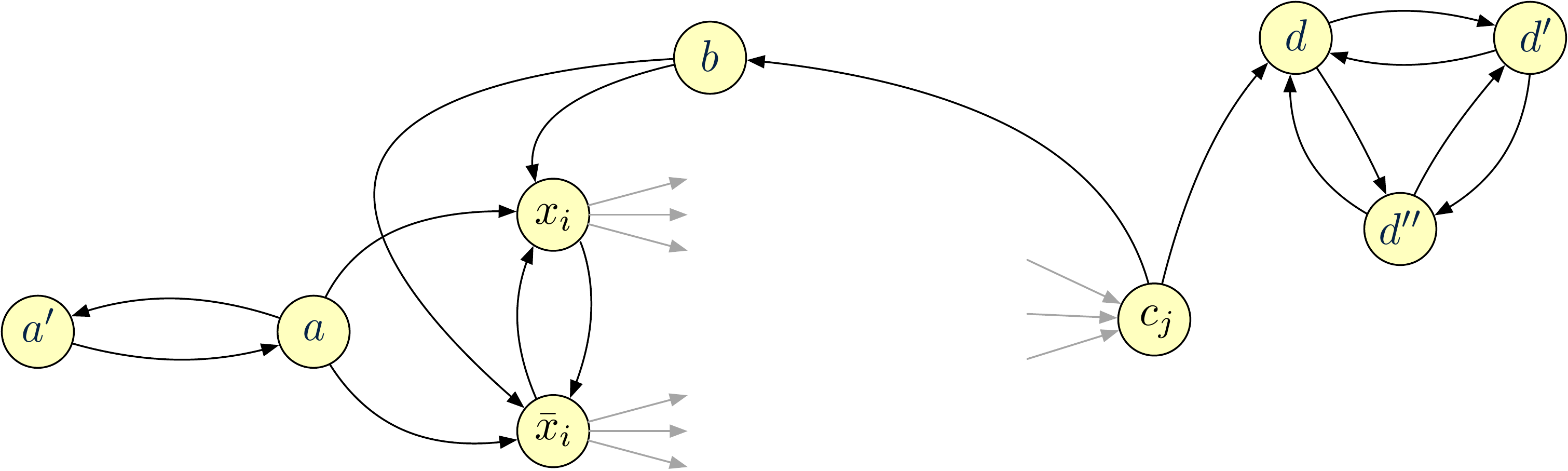} 
\caption{The variable and clause gadgets in the proof of Theorem~\ref{thm: simple nphard}.}
\label{fig: simple np-hardness gadgets}
\end{figure*}


Next, we describe the preference posets $\prefP_v$, for each vertex $v$ in $\DG$. 
As explained in Section~\ref{sec:swap systems}, an outcome $\outcomepair{\outcomein{}}{\outcomeout{}}$
of a vertex $v$ is specified by lists $\outcomein{}$ and $\outcomeout{}$ of its in-neighbors and out-neighbors.
The preference posets of the vertices in $\DG$ are specified by their generators:
{
\renewcommand\labelitemi{---}
\begin{itemize}[leftmargin=*]
\item 
Vertices $a$,$a'$, and $b$ do not have any generators. 
\item
For each literal $\tildex_i$, its generators are
$\Dealv{\tildex_i} \prec \outcomepair{b,\bar{\tildex}_i}{C(\tildex_i)}$ and $\Dealv{\tildex_i} \prec \outcomepair{b}{\bar{\tildex}_i}$,
where $\bar{\tildex}_i$ is the negation of $\tildex_i$ and $C(\tildex_i)$ is the set of clauses that contain literal $\tildex_i$.
\item
For each $j$, the generators of $c_j$ are
$\Dealv{c_j} \prec \outcomepair{\tildex_i}{b}$ for each literal $\tildex_i$ in $c_j$.
\item
Vertices $d,d',d''$ have one generator each:
$\Dealv{d} \prec \outcomepair{d''}{d'}$, $\Dealv{d'} \prec \outcomepair{d}{d''}$, $\Dealv{d''} \prec \outcomepair{d'}{d}$.
\end{itemize}
}

The construction of $\swapSys$ clearly takes time that is polynomial in the size of $\alpha$.


\smallskip

Applying Theorem~\ref{thm:uni-and-snash},  it remains to show that
$\alpha$ is satisfiable if and only if $\DG$ has a spanning subgraph $\GG$ with the following properties:
{(c.1)} $\GG$ is piece-wise strongly connected and has no isolated vertices,
{(c.2)} $\GG$ dominates $\DG$, and
{(c.3)} no subgraph $\HG$ of $\DG$ strictly dominates $\GG$.

\smallskip

$(\Rightarrow)$
Suppose that $\alpha$ is satisfiable, and fix some satisfying assignment for $\alpha$.
Using this assignment, we construct a spanning subgraph $\GG$ of $\DG$ that satisfies conditions (c.1)-(c.3).

Digraph $\GG$ will contain all vertices of $\DG$.
For vertices $a$ and $a'$ it will include arcs $(a,a')$ and $(a',a)$. For vertex $b$, it will include 
all arcs $(b,x_i)$, $(b,\barx_i)$ and all arcs $(c_j,b)$. 
Vertices $d,d',d''$ are connected by arcs $(d,d')$, $(d',d'')$ and $(d'',d)$.
The remaining arcs are determined based on the satisfying assignment.
Suppose that literal $\tildex_i$ is true. Then $\GG$ includes the  arcs: $(\bar{\tildex}_i,\tildex_i)$ and
$(\tildex_i,c_j)$ for all clauses $c_j$ that contain literal $\tildex_i$.
(Intuitively, the truth assignment corresponds to the direction of the arc between $x_i$ and $\barx_i$ in $\GG$.)

\smallskip

Digraph $\GG$ is spanning and has three strongly connected components: one is the cycle $a\to a'\to a$,
another one is the cycle $d\to d'\to d''\to d$, and
the third consists of all other vertices. This third component is indeed strongly connected because each
clause $c_j$ has a true literal, say $\tildex_i$, so its corresponding vertex has incoming edge
$(\tildex_i,c_j)$. We then have arcs from all vertices $c_j$ to $b$ and from $b$ to each pair $x_i$ and $\barx_i$.
For each $i$, among $x_i$ and $\barx_i$, the true literal $\tildex_i$ 
is connected to all clauses where it appears (and it must appear at least once, by our assumption), 
and its negation $\bar{\tildex}_i$ is connected to $\tildex_i$. So~(c.1) holds.

Condition~(c.2) can be verified by inspection, namely checking
that $\DealDv{\DG}{v}\preceq \DealDv{\GG}{v}$ holds for each vertex $v$.
For example, consider some variable $x_i$ and assume that $x_i$ is true (the case when $x_i$ is false  is symmetric).
Then $\DealDv{\GG}{x_i} = \outcomepair{b,\barx_i}{C(x_i)} \succ \DealDv{\DG}{x_i}$, and
$\DealDv{\GG}{\barx_i} = \outcomepair{b}{x_i}\succ \DealDv{\DG}{\barx_i}$.
Next, consider some clause $c_j$.
Since our truth assignment satisfies $c_j$, $c_j$ has some true literal $\tildex_i$. Then $\GG$ will have arc $(\tildex_i,c_j)$.
Denoting by $T(c_j)$ the set of true literals in $c_j$, we then have
$\DealDv{\GG}{c_j} = \outcomepair{T(c_j)}{b} \succeq \outcomepair{\tildex_i}{b}\succ \DealDv{\DG}{c_j}$.
Checking that $\DealDv{\DG}{v}\preceq \DealDv{\GG}{v}$ holds for $v\in\braced{a,a',b,d,d',d''}$
is straighforward. Thus, condition~(c.2) is verified.

To establish condition~(c.3), let $\HG$ be a subgraph of $\DG$ that dominates $\GG$,
that is $\DealDv{\HG}{v}\succeq \DealDv{\GG}{v}$ for each vertex $v\in\HG$. We claim that
then in fact we must have $\HG = \GG$, which will imply~(c.3). This claim follows from the following two observations:
(i) For each arc $(u,v)\in \GG$, vertex $v$ does not have any outcome that does not include incoming arc $(u,v)$
and is better than $\DealDv{\GG}{v}$.
(ii) For each arc $(u,v)\in \DG\setminus\GG$, vertex $u$ does not have any outcome that 
includes outgoing arc $(u,v)$ and is better than $\DealDv{\GG}{u}$.

These observations can be verified by inspection. Starting with $a$, for each literal $\tildex_i$,
there is no outcome of $a$ that is better than $\DealDv{\GG}{a}$ that includes arc $(a,\tildex_i)$ or does not include arc $(a',a)$.
For $a'$, there is no outcome better than $\DealDv{\GG}{a} = \outcomepair{a}{a}$ that does not include arc $(a,a')$.
Consider some $x_i$, and suppose that $x_i$ is true in our truth assignment. 
There is no outcome of $x_i$ better than $\DealDv{\GG}{x_i} = \outcomepair{b,\barx_i}{C(x_i)}$ that does not
include arcs $(b,x_i)$ and $(\barx_i,x_i)$, or that includes arc $(x_i,\barx_i)$. 
Regarding $\barx_i$, there is no outcome of $\barx_i$ better than $\DealDv{\GG}{\barx_i}$
that does not have arc $(b,\barx_i)$ or that has any arc $(\barx_i,c_j)$, for some clause $c_j$.
Next, consider arcs between literals and clauses. 
For a clause $c_j$ we have $\DealDv{\GG}{c_j} = \outcomepair{T(c_j)}{b}$.
There is no outcome of $c_j$ that misses one of the arcs from $T(c_j)$ or includes arc $(c_j,d)$
and is better than
$\outcomepair{T(c_j)}{b}$. (And we have already showed that in $\HG$, vertex $c_j$ cannot have arcs from its false literals.)
There is also no outcome of $b$ without arc $(c_j,b)$ better than $\DealDv{\GG}{b}$.
The verification of the two observations for the arcs between $d$, $d'$ and $d''$ can be carried out in the same  manner.


\smallskip

\noindent
$(\Leftarrow)$ 
Assume now that $\DG$ has a spanning subgraph $\GG$ that satisfies properties~(c.1) and~(c.2). (We will not use~(c.3) for now). 
From $\GG$ we will construct a satisfying assignment for $\alpha$.
Condition~(c.1) implies that $\GG$ cannot have any arcs $(a,\tildex_i)$, so vertices $a,a'$ will form one strongly
connected component of $\GG$.  Similarly, $\GG$ cannot have any arcs $(c_j,d)$, so vertices $d,d',d''$ will
also form a strongly connected component. In the rest of the argument we focus on the remaining vertices.

For each literal $\tildex_i$, since $\DealDv{\GG}{\tildex_i} \succeq \DealDv{\DG}{\tildex_i}$, and also using the preferences of $\tildex_i$,
we obtain that $\GG$ must have arc $(b,\tildex_i)$. Similarly, using the preferences of $b$, $\GG$ must contain all arcs $(c_j,b)$.
(This also follows from the fact that $c_j$'s cannot be singleton strongly connected components of $\GG$.)
This means that all vertices $b$, $\tildex_i$ and $c_j$ are in the same connected component of $\GG$ which,
by property~(c.1), must be strongly connected.

From the above paragraph, by strong connectivity, for each $i$ either $x_i$ or $\barx_i$ must have an arc to some clause vertex.
Also, since $\DealDv{\GG}{x_i} \succeq \DealDv{\DG}{x_i}$,
if $x_i$ has an arc to a clause vertex then $\GG$ must have arc $(\barx_i,x_i)$ and $\GG$ cannot have arc
$(x_i,\barx_i)$. In turn, since $\DealDv{\GG}{\barx_i} \succeq \DealDv{\DG}{\barx_i}$,
$\barx_i$ has no arcs in $\GG$ to any clause vertices.
Summarizing, we have this: exactly one of arcs $(x_i,\barx_i)$ or $(\barx_i,x_i)$ is in $\GG$,
and if $(\tildex_i,\bar{\tildex}_i)$ is in $\GG$ then $\tildex_i$ does not have any arcs to clause vertices.
This allows us to define a satisfying assignment, as follows. If $\GG$ has arc $(\barx_i,x_i)$, set $x_i$ to true,
and if $\GG$ has arc $(x_i,\barx_i)$, then set $x_i$ to false. 

Using condition~(c.1), in $\GG$ each vertex $c_j$ must have at least one incoming arc
from some literal $\tildex_i$ in $c_j$. By the previous paragraph, this literal is true in our truth assignment,
so it satisfies $c_j$. This establishes that all clauses are satisfied.

\medskip

To prove the second statement in the lemma, we modify our construction. Note that in the above proof we did not
use property~(c.3) in the $(\Leftarrow)$ implication. If $\DG$ is strongly connected, then it's itself a candidate for
$\GG$, so the modified construction will need to rely on property~(c.3) somehow.

This modification is in fact quite simple. Add arcs from all literal vertices $\tildex_i$ to $a$, and
set the preferences of $a$ so that it prefers to drop the arcs to and from these literal vertices to form a coalition with $a'$.
We apply the same trick to vertex $d$: it will have arcs going back to all $c_j$'s, but it will
be happy to drop these arcs, as well as the arc from $d''$, in exchange for dropping the arc to $d'$.
Then in the proof for implication $(\Leftarrow)$ we use condition~(c.3) to argue that the arcs
from $a$ to all $\tildex_i$'s will not be in $\GG$, for otherwise a subgraph $\DG$ consisting of $a,a'$
and the arcs between them would strictly dominate $\GG$. For the same reason, $\GG$ will not have
arcs from $d$ to any $c_j$.
\end{proof}

\medskip
\emph{Comment:} The $\NP$-hardness result in Theorem~\ref{thm: simple nphard} holds even if we
require that preference posets are specified by listing all preference pairs (including the
generic ones). This can be shown by modifying the construction so that all vertices in $\DG$
have constant degree, and thus all preference posets will 
have constant size. To this end, we can use a variant of $\CNF$ where each clause has three
literals and each variable appears at most three times. Then the only vertices of unbounded
degree will be $a$, $b$, and $d$. For $a$, its set of outgoing arcs
can be replaced by a chain of vertices each with one outgiong arc to one outneighbor of $a$.
The same trick applies to the arcs of $b$ and $d$.

\section{Experiments}
\label{sec: experiments}
To further study the 
complexity
of $\swapAtomic$ 
(\emph{i.e.}, given a swap system $\swapSys = (\DG, \prefP)$, decide whether it has an atomic protocol),
we programmed a simple implementation in C\texttt{++}.
We note that this algorithm would be run by the party assembling the swap system,
preceding any interaction with any blockchain.
This would normally be a market clearing service.

The algorithm runs in three phases.
Each phase is a filter for a condition in Theorem~\ref{thm:uni-and-snash}.
We start with every possible graph $\GG$, and pass each of them through the three filters.
If there is a graph remaining, then we decide yes, otherwise we decide no.
The first condition is that $\GG$ is spanning, piece-wise strongly connected, and contains no isolated vertices.
We first check that $\GG$ contains every vertex, each with at least one incoming and outgoing arc.
If so, we find the strongly connected components of $\GG$ using Kosaraju's algorithm \cite{sharir1981scc}.
We then check for every arc $(u,v)$ in $\GG$ that $u$ and $v$ are in the same component.
If so, then $\GG$ is piece-wise strongly connected, and we pass this graph to the second phase.

The second condition is that $\GG$ dominates $\DG$, the original digraph.
That is, for every vertex $v$, 
$\DealDv{\DG}{v} \preceq \DealDv{\GG}{v}$, 
where $\DealDv{\DG}{v}$ is the outcome for $v$ if every arc in $\DG$ were triggered, 
and $\DealDv{\GG}{v}$ is the outcome for $v$ if every arc in $\GG$ were triggered.
This is simple.
We say $\DealDv{\DG}{v} \succeq \DealDv{\GG}{v}$ if 
(1) they are the same outcome, 
(2) $\DealDv{\GG}{v}$ is inclusively monotone of $\DealDv{\DG}{v}$, or 
(3) $\DealDv{\DG}{v} \succ \DealDv{\GG}{v}$ by a non-generic generator (and transitivity).
If this holds for every vertex, then $\GG$ dominates $\DG$ and we pass $\GG$ to the third phase.

The last condition is that there is no subgraph $\HG$ of $\DG$ that strictly dominates $\GG$.
To verify this, we generate every possible subgraph $\HG$.
Then, for every vertex $v$ in $\HG$, we see if $\DealDv{\GG}{v} \preceq \DealDv{\HG}{v}$ and at least one vertex
where $\DealDv{\GG}{v} \prec \DealDv{\HG}{v}$.
If not, then $\HG$ does not strictly dominate $\GG$.
If no $\HG$ strictly dominates $\GG$, then we decide yes.
However, if after all three phases no graph remains, we decide no.

\myparagraph{Results and Assessment}
We ran this program on the example swap systems presented in this paper.
The program was written in C\texttt{++}11 and compiled with g\texttt{++} 12.2.0.
It was ran on a Windows 10 machine with a Intel Core i5-11400F 6-Core 2.6GHz CPU and 16 GB RAM.
We list the mean of ten runs of each swap system.
We provide three additional datapoints:
(1) number of arcs in the digraph,
(2) number of non-generic preferences generators, and
(3) whether or not the swap system ended up permitting an atomic protocol.
\\\

\noindent
\!\!\!\!\!
\begin{tabular}{ |l|l|l|l|l|  }
    \hline
    \multicolumn{5}{|c|}{Results} \\
    \hline
    
    Swap System  & Runtime & Arcs & Preferences & Protocol? \\
    \hline
    $\swapSys_1$ & 0.0567s   & 6     & 5     & Yes \\
    $\swapSys_2$ & 0.016s    & 6     & 2     & No  \\
    $\swapSys_3$ & 123.116s  & 14    & 14    & Yes \\
    $\swapSys_4$ & 61.851s   & 14    & 12    & No  \\
    $\swapSys_5$ & 328.904s  & 17    & 14    & No  \\
    \hline
\end{tabular}
\ \\

Swap system $\swapSys_1$ is the system defined in Example 1. 
Swap system $\swapSys_2$ is the system defined in Figure~\ref{fig:herlihy not strong nash}.
Swap system $\swapSys_3$ is the system defined in Example 4.
Swap system $\swapSys_4$ is $\swapSys_3$, except the two preference generators 
$\Dealv{t_1} \prec \outcomepair{t_{2}}{t_{2}}$ 
and 
$\Dealv{t_2} \prec \outcomepair{t_{1}}{t_{1}}$ 
are removed.
Swap system $\swapSys_5$ is $\swapSys_3$, except we add a new party $s_1$ and arcs $(u_1, s_1)$, $(u_2, s_1)$, and $(s_1, t_1)$.
Non-generic preferences are not changed.

As we can see in $\swapSys_1$ and $\swapSys_2$, it is feasible to compute $\swapAtomic$ for small swap systems, as expected.
The runtimes are less than a second.
We next look at larger graphs and highlight the difficulty of $\swapAtomic$.
We observe that $\swapSys_3$ and $\swapSys_4$ have higher runtimes.
Further, their runtimes are not in the same ballpark although
they have the same number of arcs.
Firstly, because piece-wise strong connectivity is a requirement, 
one might suspect that the cause is the number of arcs or the degree of the vertices.
However, the digraphs for both swap systems are exactly the same.
%
The natural reaction is to look at the preference posets.
We removed two generators from $\swapSys_3$ to $\swapSys_4$.
This made it so the swap system no longer had an atomic protocol, which reduced the runtime.
This is because in phase three, the program halts as soon as it finds an $\HG$ for every $\GG$ (that passed phases one and two).
On the other hand, when the system does permit a protocol,
the entirety of phase three needs to finish.
That is, it needs to check all possible $\HG$ to verify $\GG$ has no strictly dominating subgraphs.
Lastly, from $\swapSys_3$ to $\swapSys_5$, we added one party and three arcs, but no non-generic preferences were changed.
Although $\swapSys_5$ ended up not permitting a protocol, it scaled poorly with respect to $\swapSys_3$.

In practice, 
the runtimes may not be predictable, as is the case with $\NP$-Hard problems.
Needless to say, an increase in the number of arcs will generally increase the running time as there are more rounds of Kosaraju's algorithm in phase one.
Additionally, if one is to believe the swap system does indeed permit a protocol, then one should expect a long runtime as well, as the program needs to verify every subgraph in phase three.

\end{document}